\tikzstyle{block} = [rectangle, draw, 
\tikzstyle{line} = [draw, -latex]
\newtheorem{lemm}{Lemma}[section]
\newtheorem{defi}{Definition}[section]
\newtheorem{theo}{Theorem}[section]
\newtheorem{prop}{Proposition}[section]
\newtheorem{prot}{Protocol}[section]
\newtheorem{nota}{Notation}[section]
\newtheorem{coro}{Corollary}[section]
\theoremstyle{definition}
\newtheorem{remark}{Remark}
\newtheorem*{remark*}{Remark}
\newif\ifcomment
\newif\ifcommentLater
\newcommand{\ie}{\emph{i.e.}, }
\definecolor{dartmouthgreen}{rgb}{0.05, 0.5, 0.06}
\newcommand{\commentOut}[1]{}
\renewcommand*\env@matrix[1][*\c@MaxMatrixCols c]{%
  \hskip -\arraycolsep
  \let\@ifnextchar\new@ifnextchar
  \array{#1}}
\DeclareMathOperator{\diagonal}{diag}
\DeclareMathOperator{\trace}{Tr}
\DeclareMathOperator{\spacespan}{span}
\DeclareMathOperator{\lcm}{lcm}
\DeclareMathOperator{\id}{{\mathrm{id}}}
\DeclareMathOperator{\ttr}{tr}
\DeclareMathOperator{\image}{Im}
\DeclarePairedDelimiter\px{\{}{\}}
\DeclarePairedDelimiter\paren{(}{)}
\DeclarePairedDelimiter\ceil{\lceil}{\rceil}
\DeclarePairedDelimiterX\symp[2]{\langle}{\rangle_\mathbb{S}}{#1 , #2}
\newcommand{\cA}{\mathcal{A}}
\newcommand{\cB}{\mathcal{B}}
\newcommand{\cC}{\mathcal{C}}
\newcommand{\cD}{\mathcal{D}}
\newcommand{\cE}{\mathcal{E}}
\newcommand{\cH}{\mathcal{H}}
\newcommand{\cJ}{\mathcal{J}}
\newcommand{\cL}{\mathcal{L}}
\newcommand{\cM}{\mathcal{M}}
\newcommand{\cQ}{\mathcal{Q}}
\newcommand{\cR}{\mathcal{R}}
\newcommand{\cS}{\mathcal{S}}
\newcommand{\cT}{\mathcal{T}}
\newcommand{\cV}{\mathcal{V}}
\newcommand{\cW}{\mathcal{W}}
\newcommand{\bA}{\mathbf{A}}
\newcommand{\bB}{\mathbf{B}}
\newcommand{\bE}{\mathbf{E}}
\newcommand{\bF}{\mathbf{F}}
\newcommand{\bG}{\mathbf{G}}
\newcommand{\bH}{\mathbf{H}}
\newcommand{\bI}{\mathbf{I}}
\newcommand{\bJ}{\mathbf{J}}
\newcommand{\bM}{\mathbf{M}}
\newcommand{\bN}{\mathbf{N}}
\newcommand{\bP}{\mathbf{P}}
\newcommand{\bQ}{\mathbf{Q}}
\newcommand{\bT}{\mathbf{T}}
\newcommand{\bX}{\mathbf{X}}
\newcommand{\bY}{\mathbf{Y}}
\newcommand{\bZ}{\mathbf{Z}}
\newcommand{\bc}{\mathbf{c}}
\newcommand{\bd}{\mathbf{d}}
\newcommand{\be}{\mathbf{e}}
\newcommand{\bm}{\mathbf{m}}
\newcommand{\bo}{\mathbf{o}}
\newcommand{\br}{\mathbf{r}}
\newcommand{\bw}{\mathbf{w}}
\def\sn{\mathsf{n}}
\def\sd{\mathsf{d}}
\def\st{\mathsf{t}}
\def\sk{\mathsf{k}}
\def\sm{\mathsf{m}}
\def\sX{\mathsf{X}}
\def\sW{\mathsf{W}}
\def\sZ{\mathsf{Z}}
\def\a{\alpha}
\def\b{\beta}
\def\d{\delta}
\def\s{\sigma}
\def\diag#1{\diagonal \paren{#1}}			
\def\Tr#1{\trace \paren{#1}}					
\def\Span#1{\spacespan\px{#1}}
\def\cb#1{| #1 \rangle}						
\def\rb#1{\langle #1 |}						
\def\ppmatrix#1{\begin{pmatrix}#1\end{pmatrix}} 
\def\imag#1{\image #1}			
\def\C{\mathbb{C}}
\def\F{\mathbb{F}}
\def\Fq{\F_{q}}
\def\N{\mathbb{N}}
\def\sinit{\s_\mathrm{init}}
\def\tK{K}
\def\tk{\iota}
\newcommand{\vle}{\rotatebox[origin=c]{-90}{$\le$}}
\DeclareMathOperator*{\argmax}{\arg\!\max}
\begin{document}

\title{On the Capacity of Quantum Private Information Retrieval from MDS-Coded and Colluding Servers}

\author{\IEEEauthorblockN{Matteo Allaix,~\IEEEmembership{Student Member,~IEEE}, Seunghoan Song,~\IEEEmembership{Member,~IEEE}, Lukas Holzbaur, \IEEEmembership{Graduate Student Member,~IEEE}, Tefjol Pllaha, Masahito Hayashi,~\IEEEmembership{Fellow,~IEEE}, Camilla Hollanti,~\IEEEmembership{Member,~IEEE}}

\thanks{Partial results have been published at ISIT 2021 \cite{allaix2021quantum}. 
C.~Hollanti and M.~Allaix were supported by the Academy of Finland, under Grants No. 318937 and 336005.
S.~Song was supported by JSPS Grant-in-Aid for JSPS Fellows No. JP20J11484.
L.~Holzbaur was supported by the German Research Foundation (Deutsche Forschungsgemeinschaft, DFG) under Grant No.~WA 3907/1-1. 
M.~Hayashi was supported in part by Guangdong Provincial Key Laboratory (Grant No. 2019B121203002).
\emph{(The first two authors contributed equally to this work.)}
}
\thanks{M. Allaix and C. Hollanti are with the Department of Mathematics and System Analysis, Aalto University, Espoo, Finland
(e-mails: \{matteo.allaix, camilla.hollanti\}@aalto.fi).} 
\thanks{S. Song is with Graduate school of Mathematics, Nagoya University, Nagoya, 464-8602, Japan
(e-mail: m17021a@math.nagoya-u.ac.jp).} 
\thanks{L. Holzbaur is with the Institute for Communications Engineering, Technical University of Munich, Germany (e-mail: lukas.holzbaur@tum.de).}
\thanks{T. Pllaha is with the Department of Mathematics, University of Nebraska, Lincoln, USA (email: tefjol.pllaha@unl.edu).}
\thanks{M. Hayashi is with 
Shenzhen Institute for Quantum Science and Engineering, Southern University of Science and Technology,
Shenzhen, 518055, China,
Guangdong Provincial Key Laboratory of Quantum Science and Engineering,
Southern University of Science and Technology, Shenzhen 518055, China,
and Graduate School of Mathematics, Nagoya University, Nagoya, 464-8602, Japan
(e-mail:hayashi@sustech.edu.cn).}
}

\maketitle

\begin{abstract}
In quantum private information retrieval (QPIR), a user retrieves a classical file from multiple servers by downloading quantum systems without revealing the identity of the file. The QPIR capacity is the maximal achievable ratio of the retrieved file size to the total download size. In this paper, the capacity of QPIR from MDS-coded and colluding servers is studied for the first time. Two general classes of QPIR, called stabilizer QPIR and dimension-squared QPIR induced from classical strongly linear PIR are defined, and the related QPIR capacities are derived. For the non-colluding case, the general QPIR capacity is derived when the number of files goes to infinity. A general statement on the converse bound for QPIR with coded and colluding servers is derived showing that the capacities of stabilizer QPIR and dimension-squared QPIR induced from any class of PIR are upper bounded by twice the classical capacity of the respective PIR class. The proposed capacity-achieving scheme combines the star-product scheme by Freij-Hollanti \emph{et al.} and the stabilizer QPIR scheme by Song \emph{et al.} by employing (weakly) self-dual Reed--Solomon codes.
\end{abstract}

\section{Introduction}

With the amount of data stored in distributed storage systems steadily increasing, the demand for user privacy has surged in recent years. One notion that has received considerable attention is private information retrieval (PIR), where the user's goal is to access a file of a (distributed) storage system without revealing the identity (index) of this desired file. In their seminal work Chor et al. \cite{chor1995private} introduced the concept of PIR from multiple non-colluding servers, each storing a copy of every file. More recently, the capacity, \ie the highest achievable rate, for this setting \cite{sun2017replicated} was derived, which led to similar derivations in more general settings admitting for colluding servers \cite{sun2017capacity}, coded storage \cite{banawan2018capacity}, and symmetric privacy \cite{sun2018capacity,wang2017linear}. While the capacity of PIR from coded storage with colluding servers remains an open problem, some progress was made in \cite{Sun2018conjecture,holzbaur2019capacity,Holzbaur2019ITW}. Among other things, \cite{holzbaur2019capacity,Holzbaur2019ITW} introduce the practical notion of strongly linear PIR. Informally, this class is given by PIR schemes where both the computation of the server responses and the decoding of the desired file from these responses is achieved by applying linear functions. The capacity of this class of schemes coincides with a conjecture on the asymptotic (in the number of files) capacity for this setting \cite{tajeddine2019private} and is known to be achievable by schemes with requiring only small subpacketization, such as the star-product scheme of \cite{freij2017private}.

Quantum PIR (QPIR) considers accomplishing the PIR task with quantum communication between the user and the servers \cite{kerenidis03,kerenidis2004quantum,legall2011quantum,olejnik2011,baumeler2015,kerenidis2016,aharonov2019,Kon2020}.
Following the study on the classical PIR capacity \cite{sun2017capacity}, 
   the papers \cite{song2019capacity,song2019allbutone,song2020colluding,allaix2020quantum} considered the capacity of QPIR and quantum symmetric PIR (QSPIR), where the user obtains no other information than the desired file in addition to the requirements of PIR.
The QPIR schemes in \cite{song2019capacity,song2019allbutone,song2020colluding,allaix2020quantum} are conducted by the following procedure:
    a user uploads classical queries;
    multiple servers sharing entanglement
        apply quantum operations on their quantum systems depending on the queries and the files
        and
        respond quantum systems to the user;
    the user finally retrieves the desired file by quantum measurement on the responded systems.
When each of the $\sn$ servers stores a copy of every file, the QPIR/QSPIR capacity with multiple non-colluding servers \cite{song2019capacity} and $\st$ colluding servers \cite{song2020colluding} are proved to be $1$ and $\min \{ 1, 2(\sn-\st)/\sn\}$, respectively.
On the other hand, when the files are stored in a  distributed storage system coded by an $[\sn,\sk]$ maximum distance separable (MDS) code,
    QSPIR schemes with colluding servers are constructed \cite{allaix2020quantum}, but the result was limited to the case $\st +\sk = \sn $.

\subsection{Contributions}
\label{sec:contrib}

\begin{table*}[t]
    \centering
    \caption{Known asymptotic ($m\rightarrow \infty$) capacity results with $\sn$ servers.
    The result in \textcolor{red}{red} is a conjecture in its full generality \cite{freij2017private}, but shown to hold for strongly linear \cite{Holzbaur2019ITW} and full support rank \cite{holzbaur2019capacity} PIR. A scheme achieving that rate was proposed in \cite{freij2017private}. The results in \textcolor{dartmouthgreen}{green} are proved in this paper for strongly linear PIR.
    }
    \begin{tabular}{|l|cc|cc|cc|}
    \hline
    \textsc{Capacities} & PIR & ref. & SPIR & ref. & QPIR & ref. \\
    \hline
    Replicated storage, & \multirow{2}{*}{$1-\frac{1}{\sn}$} & \multirow{2}{*}{\cite{sun2017replicated}} & \multirow{2}{*}{$1-\frac{1}{\sn}$}  & \multirow{2}{*}{\cite{sun2018capacity}} & \multirow{2}{*}{1} & \multirow{2}{*}{\cite{song2019capacity}} \\
    no collusion & & & & & & \\
    \hline
    Replicated storage, & \multirow{2}{*}{$1-\frac{\st}{\sn}$} & \multirow{2}{*}{\cite{sun2017capacity}} & \multirow{2}{*}{$1-\frac{\st}{\sn}$}  & \multirow{2}{*}{\cite{wang2017colluding}} & \multirow{2}{*}{$\min\{1,\frac{2(\sn-\st)}{\sn}\}$} & \multirow{2}{*}{\cite{song2020colluding}} \\
    $\st$-collusion & & & & & & \\
    \hline
    $[\sn,\sk]$-MDS coded & \multirow{2}{*}{$1-\frac{\sk}{\sn}$} & \multirow{2}{*}{\cite{banawan2018capacity}} & \multirow{2}{*}{$1-\frac{\sk}{\sn}$}  & \multirow{2}{*}{\cite{wang2017linear}} & \multirow{2}{*}{\textcolor{dartmouthgreen}{$\min\{1,\frac{2(\sn-\sk)}{\sn}\}$}} & \multirow{2}{*}{--} \\
    storage, no collusion & & & & & & \\
    \hline
    $[\sn,\sk]$-MDS coded & \multirow{2}{*}{\textcolor{red}{$1-\frac{\sk+\st-1}{\sn}$}} & \multirow{2}{*}{\cite{freij2017private}} & \multirow{2}{*}{$1-\frac{\sk+\st-1}{\sn}$}  & \multirow{2}{*}{\cite{wang2017linear}
    } & \multirow{2}{*}{\textcolor{dartmouthgreen}{$\min\{1,\frac{2(\sn-\sk-\st+1)}{\sn}\}$}} & \multirow{2}{*}{--} \\
    storage, $\st$-collusion & & & & & & \\
    \hline
    \end{tabular}
    \label{tab:Capacities}
\end{table*}

As a generalization of \cite{allaix2020quantum}, we study the QPIR/QSPIR capacity from $[\sn,\sk]$ MDS coded storage with $\st$ colluding servers for any $\st+\sk\le \sn$.
Since the capacity of this setting is even unsolved for the classical case, similar to \cite{holzbaur2019capacity,Holzbaur2019ITW}, we define two new classes of QPIR, which include the existing QPIR schemes [21]–[24], and derive the capacity for these classes.
The first class is {\em stabilizer QPIR induced from classical PIR}.
Stabilizer QPIR is a class of QPIR that naturally imports linear PIR schemes in quantum settings while doubling the PIR rate.
More specifically, the user and the servers simulate the classical PIR scheme, except that
 	the servers' prior entangled state is a state in a stabilizer code
	and
 	the servers apply Pauli $\sX$ and $\sZ$ operations on each quantum system depending on the answers of the classical PIR.
The second class is {\em dimension-squared QPIR}, which is a broader class of QPIR that includes stabilizer QPIR.
Whereas the stabilizer QPIR is defined with restrictions on the encoding, decoding, and shared entanglement, dimension-squared QPIR is defined only with restriction on dimensions of the answered quantum systems, which is a sufficient condition for our converse proof.
Similar to the stabilizer QPIR, dimension-squared QPIR can also be induced from classical PIR and the existing QPIR schemes \cite{song2019capacity,song2019allbutone,song2020colluding,allaix2020quantum} are dimension-squared QPIR induced from strongly linear PIR.

For stabilizer QPIR and dimension-squared QPIR induced from strongly-linear PIR, we prove that the asymptotic QPIR/QSPIR capacities with MDS-coded and colluding servers are $\min \{ 1, 2(\sn-\sk-\st+1)/\sn\}$.
Furthermore, for non-colluding case $\st = 1$, we prove that the general asymptotic QPIR/QSPIR capacity is $\min \{ 1, 2(\sn-\sk)/\sn\}$.
The derived quantum capacities double the classical asymptotic capacities of PIR and SPIR, as compared in Table~\ref{tab:Capacities}.

The capacity achieving scheme is based on the strongly-linear star-product scheme  of \cite{freij2017private} for classical PIR from MDS-coded storage and the QPIR scheme of \cite{song2020colluding} for replicated storage, both in the presence of $\st$ colluding servers. A generalization of these schemes, which employs (weakly) self-dual Generalized Reed--Solomon (GRS) codes, results in the first known QPIR scheme from MDS-coded storage in the considered setting. The scheme is non-trivial for two main reasons. First, the chosen codes must behave well with the star-product operation: one example is the polynomial-based codes class, that includes GRS codes. This requirement comes from the classical PIR scheme described in \cite{freij2017private}. Second, the star-product of the storage code and the query code must be a (weakly) self-dual code in order to employ the stabilizer formalism and get the advantage of quantum communication. To the best of our knowledge the combination of these two properties was not considered in previous literature. In this paper, we prove that for any given GRS storage code we can find a GRS query code such that their star-product is a (weakly) self-dual code.

The converse bounds are proved separately for the colluding and non-colluding cases.
First, the converse for colluding case is derived generally for any PIR classes.
Namely, when the classical capacity of any PIR class is $C$,
we prove that the rates of stabilizer QPIR and dimension-squared QPIR induced from the same class of PIR are upper bounded by  $\min\{1,2C\}$.
Then, from the capacity of strongly linear PIR for coded and colluding servers $(\sn-\sk-\st+1)/\sn$ \cite{holzbaur2019capacity,Holzbaur2019ITW}, we obtain our converse bound for colluding case.
Second, the converse for non-colluding case is proved for general QPIR schemes with the following idea.
We prove that the $\sk$ servers obtain negligible information of the user's information.
Combining this fact and the entanglement-assisted classical capacity \cite{bennett1999}, we prove that the desired converse bound $C \leq \min \{ 1, 2(\sn-\sk)/\sn\}$.

Similar to the existing multi-server QPIR studies \cite{song2019capacity,song2019allbutone,song2020colluding,allaix2020quantum},
the communication model in this paper is classical query and quantum answers with entanglement.
This model is the hybrid model of classical and quantum communication for classical file retrieval.
Compared to the non-quantum model, our main theorem implies that the capacity doubles only with the one-way quantum communication from the servers to the user.
On the other hand, compared to the purely quantum model, which allows quantum queries, our model has three practical advantages.
First, since the quantum communication is hard to be implemented with the current technology, our one-way communication model is a more realizable model than the two-way quantum communication.
Second, in our scheme, most of the quantum resources and computations are operated by the servers, and the only quantum device required for the user is a fixed measurement apparatus.\footnote[1]{QPIR problem can also be considered for the retrieval of quantum states, i.e., QPIR with quantum storage. A part of authors discussed this problem in a recent paper \cite{QQPIR21}.}
The same kind of outsourcing also appears in the blind computation by measurement-based quantum computation \cite{PhysRevLett.115.220502}.
Third, since the storage is still classical, we can just employ quantum communication technology and quantum memory to double the rate of an already existing MDS-coded storage implementing a classical PIR scheme.

\subsection{Organization}

The remainder of the paper is organized as follows.
Section~\ref{sec:preliminaries} is a preliminary section for notation, linear codes and distributed data storage, quantrum information theory, and stabilizer formalism.
In Section~\ref{sec:PIR_def}, we formally define classical PIR, QPIR, and the related QPIR classes.
In Section~\ref{sec:main}, we present our main capacity results.
Our capacity-achieving QPIR scheme with MDS-coded storage and colluding servers is proposed in Section~\ref{sec:achieve}
and the converse bound is derived in Section~\ref{sec:converse}.
Section~\ref{sec:conclusion} is the conclusion of the paper.

\section{Preliminaries} \label{sec:preliminaries}

\subsection{Notation}

We denote by $[n]$ and $[n_1 : n_2]$ the sets $\px{1,2,\ldots,n}, n \in \N$ and $\px{n_1,n_1+1,\ldots,n_2}, n_1,n_2 \in \N$, respectively, and by $\Fq$ the finite field of $q$ elements. For a linear code of length $\sn$ and dimension $\sk$ over $\Fq$ we write $[\sn,\sk]$. For random variables $A_1,\ldots, A_{n}$, quantum systems $\cA_1,\ldots, \cA_{n}$ and a set $\cS \subset [n]$, we denote $A_{\cS} \coloneqq ( A_j \mid j \in \cS )$ and $\cA_{\cS} \coloneqq \bigotimes_{j\in\cS} \cA_j$. For a matrix $\bA$ we write $\bA^\top$ for its transpose and $\bA^\dagger$ for its conjugate transpose. The function $\d_{i,j}$ is the Kronecker delta and $\bI_\nu$ is the $\nu \times \nu$ identity matrix.
For an $n \times m$ matrix $\bA = (a_{ij})_{i\in[n],j\in[m]}$, $\cS_1\subset[n]$, and $\cS_2\subset[m]$,
    we denote $\bA^{\cS_1}_{\cS_2} = (a_{ij})_{i\in \cS_1,j\in\cS_2}$
    and 
    $\bA^{\cS_1} = (a_{ij})_{i\in \cS_1,j\in [m]}$, $\bA_{\cS_2} = (a_{ij})_{i\in [n],j\in \cS_2}$.
Throughout this paper, we use $\log$ for the logarithm to the base $2$.

\subsection{Linear codes and distributed data storage} \label{subsec:storage}

We consider a distributed storage system employing error/erasure correcting codes to protect against data loss.
To this end, let $\bX$ be an $\sm\beta \times \sk$ matrix containing $\sm$ files $\bX^i \in \Fq^{\b \times \sk},\ i\in [\sm]$. This matrix is encoded with a linear code $\cC$ of length $\sn$ and dimension $\sk$ over $\Fq$. The $\sm\beta \times \sn$ matrix of encoded files is given by $\bY = \bX \cdot \bG_\cC$, where $\bG_\cC \in \Fq^{\sk \times \sn}$ is the generator matrix of $\cC$. Server $s \in [\sn]$ stores the $s$-th column of $\bY$, which is denoted by $\bY_s$.

In this work we consider systems encoded with MDS codes.
A linear code $\cC$ is called an MDS code if any $\sk$ columns of the generator matrix $\bG_\cC$ are linearly independent. 
Since we consider a MDS coded data storage, we have the following properties. 
\begin{enumerate}
    \item The matrix $\bX^i$ can be recovered from any $\sk$ elements of $\{\bY^i_1, \ldots, \bY^i_{\sn}\}$ for any $i \in [\sm]$.
    
    \item Any $\sk$ columns of $\bY$ are linearly independent.
\end{enumerate}

\subsection{Preliminaries on quantum information theory}

In this subsection, we introduce the preliminaries on quantum information theory.
To be precise, we introduce quantum systems, states, operations, and measurements.
Further, after the introduction,
we explain the quantum information theory is a generalization of classical information theory.
For more details the reader is referred to \cite{NC00, Hay17}.

  A quantum system $\cH$ is represented by a finite dimensional complex vector space.
  Vectors in a quantum system are written with bra-ket notation as $|\psi\rangle\in\cH$ and their complex conjugates are as $\langle \psi |$.
  The {\em computational basis} of a $d$-dimensional quantum system $\cH$ is a fixed orthonormal basis written as $\{|0\rangle, \ldots, |d-1\rangle\}$.
  The composite system of multiple quantum systems $\cH_1, \ldots, \cH_n$ is represented by the tensor product $\cH_1\otimes \cdots \otimes \cH_n$.
  
  A state $\sigma$ on $\cH$ is represented by a positive-semidefinite matrix on $\cH$ with trace $1$, which is called a {\em density matrix}.
  When a density matrix $\sigma$ is a rank-one matrix, i.e., $\sigma=|\psi\rangle\langle\psi|$, the state is equivalently represented by a unit vector $|\psi\rangle$, called a {\em pure state}.
  When a state is not a pure state, the state is called a {\em mixed state}.
  On a composite system $\cH_1\otimes \cdots \otimes \cH_n$, a state is called {\em separable} if the state is written as $\sigma = \sum_{i} p_i \sigma_1\otimes \cdots \otimes \sigma_n$ with $p_i\ge 0$, $\sum_i p_i = 1$, and density matrices $\sigma_i$ for all $i$.
  A state on a composite system is called {\em entangled} if it is not a separable state.
  When the state on $\cH_1\otimes \cdots \otimes \cH_n$ is $\sigma$, the {\em reduced state} on $\cH_k$ is written as $\mathrm{Tr}_{k^c} \sigma$, where $\mathrm{Tr}_{k^c}$ is the partial trace over $\bigotimes_{i\neq k} \cH_i$.

  A quantum operation $\kappa$ from $\cH_1$ to $\cH_2$ is represented by {\em completely positive trace-preserving (CPTP) map} defined as follows.
  A linear map $\kappa$ from matrices on $\cH_1$ to matrices on $\cH_2$
   is called {\em completely positive} if for all positive integer $n$, the map $\kappa\otimes \id_{\mathbb{C}^n}$ maps positive-semidefinite matrices to positive-semidefinite matrices, where $\id_{\mathbb{C}^n}$ is the identity map over the matrices on $\id_{\mathbb{C}^n}$,
   and {\em trace-preserving} if $\Tr {\kappa(M)} = \Tr M$ for all matrices $M$ on $\cH_1$.
  A CPTP map $\kappa$ is called a {\em unitary map} if $\kappa(M) = U^\dagger M U$ with a unitary matrix $U$ on $\cH$.

  A measurement on a quantum system $\cH$ is represented by a set of positive-semidefinite matrices $\mathbf{M} = \{M_\omega\}_{\omega\in\Omega}$ on $\cH$ with $\sum_\omega M_\omega = I$, called a {\em positive operation-valued measure (POVM)}.
  When a POVM is performed on a state $\sigma$, the measurement outcome is $\omega$ with probability $\Tr {M_\omega \sigma M_\omega}$.
  If all elements of a POVM $\{M_\omega\}_{\omega\in\Omega}$ are orthogonal projections, the POVM is called {\em the projection-valued measure (PVM)}.

Classical information theory is included in the framework of quantum information theory in the following sense.
  A finite set $[0:d-1]$ 
  corresponds to a $d$-dimensional quantum system with computational basis $\{|0\rangle, \ldots, |d-1\rangle\}$.
  An instance $x\in[0:d-1]$ and a random variable $X$ with probability $\{p_x | x\in [0:d-1] \}$ correspond, respectively, to a pure state $|x\rangle$ and a mixed state $\sigma = \sum_{x\in[0:d-1]} p_x |x\rangle \langle x|$.
  A transition matrix $Q = (Q_{x,y})_{x\in[0:d-1], y\in[0:d'-1]}$, which satisfies $Q_{x,y} \in [0,1]$ and $\sum_y Q_{x,y} = 1$, corresponds to a CPTP map $\kappa(\sigma) = \sum_{x,y} Q_{x,y} |y\rangle\langle x| \sigma |x\rangle\langle y|$.
  For example, if the state $\sigma$ corresponds to the random variable $X$, i.e., $\sigma = \sum_{x\in[0:d-1]} p_x |x\rangle \langle x|$, the resultant state after applying $\kappa$ is $\sum_y (\sum_x p_xQ_{x,y}) |y\rangle\langle y|$, i.e., the random variable after applying $Q$ on $X$.
  Sampling a random variable $X$ with the outcome $x$ corresponds to performing PVM $\mathbf{M} = \{ P_{x} = |x\rangle \langle x| \}$ and obtaining the measurement outcome $x$ with probability $p_x$.

\subsection{Stabilizer formalism}
\label{sec:stabform}

Stabilizer formalism is an algebraic structure in quantum information theory and
is often used for the quantum error correction \cite{Gottesman97,Ketkar06}.
In the context of QPIR, it is also an essential tool to design most of the existing multi-server QPIR schemes \cite{song2019capacity,song2019allbutone,song2020colluding,allaix2020quantum}.
With the stabilizer formalism, we will define a new class of QPIR, called stabilizer QPIR in Section~\ref{subsec:defQPIR}, and design our capacity-achieving schemes in Section~\ref{sec:achieve}.
As a preliminary, in this section, we first define stabilizer formalism over finite fields $\Fq$. 
Then, 
    to help understanding how the mathematical definition of the stabilizer formalism is used for information processing tasks, 
    we briefly explain the application to the quantum error correction.

\subsubsection{Stabilizer formalism over finite fields}
Let $q = p^r$ with a prime number $p$ and a positive integer $r$.
Let $\cH$ be a $q$-dimensional Hilbert space spanned by orthonormal states $\{ |j\rangle \mid  j\in \mathbb{F}_q \}$.
For $x\in\mathbb{F}_q$, we define $\bT_x$ on $\mathbb{F}_p^{r}$ as the linear map $y\in\mathbb{F}_q \mapsto xy \in\mathbb{F}_q$ by identifying the finite field $\mathbb{F}_q$ with the vector space $\mathbb{F}_p^{r}$.
Let $\ttr x \coloneqq \trace \bT_x \in\mathbb{F}_p$ for $x\in\mathbb{F}_q$.
Let $\omega \coloneqq \exp({2\pi i/p})$.
For $a,b\in\mathbb{F}_q$, we define unitary matrices
$\mathsf{X}(a) \coloneqq \sum_{j\in\mathbb{F}_q} |j+a\rangle \langle j |$ and $\mathsf{Z}(b) \coloneqq \sum_{j\in\mathbb{F}_q} \omega^{\ttr bj} |j\rangle \langle j |$ on $\cH$.
For $\mathbf{s} = (s_1,\ldots, s_{2\sn}) \in \mathbb{F}_q^{2\sn}$, we define a unitary matrix
$\mathbf{\tilde{W}(s)} \coloneqq 
	\mathsf{X}(s_1) \mathsf{Z}(s_{\sn+1}) \otimes 
	\cdots \otimes \mathsf{X}(s_{\sn}) \mathsf{Z}(s_{2\sn})$ on $\cH^{\otimes \sn}$.
For $\mathbf{x}=(x_1,\ldots,x_\sn),\ \mathbf{y}=(y_1,\ldots,y_\sn) \in \mathbb{F}_q^{\sn}$,
	we define the tracial bilinear form $\langle \mathbf{x}, \mathbf{y} \rangle \coloneqq \ttr \sum_{i=1}^{\sn} x_iy_i\in\mathbb{F}_p$
	and 
	the trace-symplectic bilinear form 
	$\symp{\mathbf{x}}{\mathbf{y}} \coloneqq
	\langle \mathbf{x}, \bJ\mathbf{y} \rangle
	$, where $\bJ$ is a $2\sn \times 2\sn$ matrix
\begin{align*}
\bJ = \begin{pmatrix}
    \mathbf{0} & -\bI_\sn \\ \bI_\sn & \mathbf{0}
    \end{pmatrix}.
\end{align*}
The Heisenberg-Weyl group is defined as
$\mathrm{HW}_q^\sn \coloneqq \px*{c \mathbf{\tilde{W}(s)} \mid  \mathbf{s} \in \mathbb{F}_q^{2\sn},\  c \in \mathbb{C}  }$.
A commutative subgroup of $\mathrm{HW}_q^\sn$ not containing $c \bI_{q^\sn}$ for any $c\neq 0$ is called a {\em stabilizer}.
A subspace $\cV$ of $\mathbb{F}_q^{2\sn}$ is called {\em self-orthogonal} with respect to the bilinear form $\symp{\cdot}{\cdot}$ if
$\cV\subset \cV^{\perp_\mathbb{S}} \coloneqq \{ \mathbf{s}\in \mathbb{F}_q^{2\sn} \mid  \symp{\mathbf{v}}{\mathbf{s}} = 0 \text{ for any } \mathbf{v}\in \cV \}.$ 
Any self-orthogonal subspace of $\mathbb{F}_q^{2\sn}$
defines a stabilizer by the following proposition.

\begin{prop}[{\cite[Section IV-A]{song2020colluding}}] \label{prop:stab}
Let $\cV$ be a self-orthogonal subspace of $\mathbb{F}_q^{2\sn}$.
There exists $\{c_{\mathbf{v}}\in \C \mid \mathbf{v} \in \cV\}$ such that 
\begin{align}
\cS(\cV)  \coloneqq \{ \mathbf{W(v)}  \coloneqq c_{\mathbf{v}} \mathbf{\tilde{W}(v)} \mid  \mathbf{v} \in \cV \} \subset \mathrm{HW}_q^\sn
	\label{eq:123stab}
\end{align}
	is a stabilizer. 
\end{prop}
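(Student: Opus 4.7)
The plan is to define the scalars $\{c_{\mathbf{v}}\}$ so that the composition rule of the Heisenberg--Weyl operators closes within $\cS(\cV)$, and then verify the three defining properties of a stabilizer: commutativity, being a subgroup, and containing no nontrivial scalar multiple of the identity.

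First I would derive the composition rule of the operators $\mathbf{\tilde W(\cdot)}$. Starting from the single-qudit identity $\mathsf{Z}(b)\mathsf{X}(a)=\omega^{\ttr(ba)}\mathsf{X}(a)\mathsf{Z}(b)$, a direct tensor-product calculation yields
\begin{align*}
\mathbf{\tilde W(u)}\,\mathbf{\tilde W(v)} = \omega^{\phi(\mathbf{u},\mathbf{v})}\,\mathbf{\tilde W(u+v)}, \quad \phi(\mathbf{u},\mathbf{v}) \coloneqq \ttr \sum_{i=1}^{\sn} u_{\sn+i}\,v_i.
\end{align*}
A short check shows $\phi(\mathbf{u},\mathbf{v}) - \phi(\mathbf{v},\mathbf{u}) = \symp{\mathbf{u}}{\mathbf{v}}$, so self-orthogonality of $\cV$ immediately forces $\mathbf{\tilde W(u)}\mathbf{\tilde W(v)} = \mathbf{\tilde W(v)}\mathbf{\tilde W(u)}$ for all $\mathbf{u},\mathbf{v}\in\cV$, and hence the desired commutativity of $\cS(\cV)$ holds for any choice of scalars.

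Next, closure of $\cS(\cV)$ under multiplication forces the cocycle condition
\begin{align*}
c_{\mathbf{u}+\mathbf{v}}=c_{\mathbf{u}}\,c_{\mathbf{v}}\,\omega^{\phi(\mathbf{u},\mathbf{v})}\qquad(\mathbf{u},\mathbf{v}\in\cV).
\end{align*}
Because $\cV$ is self-orthogonal, the restriction $\phi|_{\cV\times\cV}$ is symmetric, so this is a symmetric 2-cocycle on the abelian group $(\cV,+)$ with values in the unit circle. I would trivialize it explicitly: in odd characteristic the assignment $c_{\mathbf{v}}=\omega^{\phi(\mathbf{v},\mathbf{v})/2}$ (with $1/2$ taken modulo $p$) satisfies the condition by a one-line substitution using symmetry of $\phi$; in characteristic two a basis $\mathbf{v}_1,\dots,\mathbf{v}_k$ of $\cV$ is fixed, $c$ on the basis is set to $1$, and $c$ on a general sum is defined by iterating the cocycle law, with the symmetry of $\phi$ ensuring that the assignment is independent of the order of summation. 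Setting $c_{\mathbf{0}}=1$ then gives $\mathbf{W(0)}=\bI$, and closure under inverses follows from finite-group-theoretic considerations once closure under multiplication is in place.

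Finally, to rule out nontrivial scalars, I would observe that each $\mathbf{\tilde W(v)}$ with $\mathbf{v}\neq\mathbf{0}$ is a nonzero tensor product of single-qudit operators $\mathsf{X}(a_i)\mathsf{Z}(b_i)$, at least one of which is traceless and hence not a scalar; therefore $c\bI\in\cS(\cV)$ is possible only for $\mathbf{v}=\mathbf{0}$. Combining the three steps, $\cS(\cV)$ is a commutative subgroup of $\mathrm{HW}_q^\sn$ with no nontrivial scalar element, i.e., a stabilizer in the sense of the given definition. The main obstacle is the cocycle-trivialization step in characteristic two, where the naive ``divide by $2$'' choice is unavailable and one must argue via a fixed basis that the inductive extension is well defined, using crucially the symmetry of $\phi|_{\cV\times\cV}$ provided by self-orthogonality.
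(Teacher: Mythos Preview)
The paper does not prove this proposition; it is quoted from \cite[Section~IV-A]{song2020colluding} and stated without argument, so there is no in-paper proof to compare against. Your overall strategy---derive the composition law $\mathbf{\tilde W(u)}\mathbf{\tilde W(v)}=\omega^{\phi(\mathbf u,\mathbf v)}\mathbf{\tilde W(u+v)}$, use self-orthogonality to get symmetry of $\phi$ on $\cV$ (hence commutativity), then trivialize the resulting symmetric $2$-cocycle---is exactly the standard route and is correct in outline.

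There is, however, a genuine gap in your characteristic-$2$ step. Setting $c_{\mathbf v_i}=1$ on an $\mathbb F_p$-basis is \emph{not} consistent with the cocycle relation in general: taking $\mathbf u=\mathbf v=\mathbf v_i$ forces $c_{\mathbf 0}=c_{\mathbf v_i}^2\,\omega^{\phi(\mathbf v_i,\mathbf v_i)}$, i.e.\ $c_{\mathbf v_i}^2=\omega^{-\phi(\mathbf v_i,\mathbf v_i)}$, and for $p=2$ one can have $\phi(\mathbf v_i,\mathbf v_i)\neq 0$ even on a self-orthogonal $\cV$ (e.g.\ $\sn=1$, $q=2$, $\mathbf v=(1,1)$ gives $\phi(\mathbf v,\mathbf v)=1$ while $\symp{\mathbf v}{\mathbf v}=0$). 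The fix is easy but must be stated: choose $c_{\mathbf v_i}\in\C^\times$ with $c_{\mathbf v_i}^2=\omega^{-\phi(\mathbf v_i,\mathbf v_i)}$ (possible since $\C^\times$ is divisible), then extend to sums of \emph{distinct} basis vectors via the cocycle law. Symmetry of $\phi$ on $\cV$ makes this extension independent of the ordering, and a short bilinearity check (splitting $\mathbf u+\mathbf v$ over the symmetric difference of their supports and using $c_{\mathbf v_i}^2\omega^{\phi(\mathbf v_i,\mathbf v_i)}=1$ for indices in the intersection) verifies the full relation $c_{\mathbf u+\mathbf v}=c_{\mathbf u}c_{\mathbf v}\omega^{\phi(\mathbf u,\mathbf v)}$. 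Alternatively, one can invoke that symmetric $2$-cocycles of an abelian group with values in the divisible group $\C^\times$ are coboundaries ($\mathrm{Ext}^1(\cV,\C^\times)=0$), which handles all characteristics uniformly. With either correction your argument goes through; the commutativity, subgroup, and ``no nontrivial scalar'' verifications are fine as written.
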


In the next proposition, we denote the elements of the quotient space $\mathbb{F}_q^{2\sn} / \cV^{\perp_\mathbb{S}}$ by $\overline{\mathbf{s}} \coloneqq \mathbf{s} + \cV^{\perp_\mathbb{S}} \in \mathbb{F}_q^{2\sn} / \cV^{\perp_\mathbb{S}}$.

\begin{prop}[{\cite[Section IV-A]{song2020colluding}}]\label{prop:2stab}
Let $\cV$ be a $d$-dimensional self-orthogonal subspace of $\mathbb{F}_q^{2\sn}$ and $\cS(\cV)$ be a stabilizer defined from Proposition~\ref{prop:stab}.
Then, we obtain the following statements.

\begin{enumerate}[label=(\alph*)]
\item For any $\mathbf{v}\in\cV$, the operation $\mathbf{W(v)} \in \cS(\cV)$ is simultaneously and uniquely decomposed as 
\begin{align}
\mathbf{W(v)} = \sum_{\overline{\mathbf{s}}\in\mathbb{F}_q^{2\sn} / \cV^{\perp_\mathbb{S}} } \omega^{ \symp{\mathbf{v}}{\mathbf{s}} } \bP^{\cV}_{\mathbf{\overline{s}}}
	\label{eq:edededefcm}
\end{align}
with orthogonal projections $\{\bP^{\cV}_{\overline{\mathbf{s}}}\}$ such that
	\begin{align}
	\bP^{\cV}_{\overline{\mathbf{s}}} \bP^{\cV}_{\overline{\mathbf{t}}} &= \mathbf{0} \text{ for any } \overline{\mathbf{s}}\neq \overline{\mathbf{t}}, \\
	\sum_{\overline{\mathbf{s}}\in\mathbb{F}_q^{2\sn} / \cV^{\perp_\mathbb{S}} } \bP^{\cV}_{\overline{\mathbf{s}}} &= \bI_{q^\sn}.
	\label{eq:svcxvijlkfdecomp}
	\end{align} 

\item Let $\cH^{\cV}_{\overline{\mathbf{s}}} \coloneqq \imag{\bP^{\cV}_{\overline{\mathbf{s}}}}$.
We have $\dim \cH^{\cV}_{\overline{\mathbf{s}}} = q^{\sn-d}$ for any $\overline{\mathbf{s}}\in\mathbb{F}_q^{2\sn} / \cV^{\perp_\mathbb{S}}$ and
the quantum system $\cH^{\otimes \sn}$ is decomposed as
\begin{align}
\cH^{\otimes \sn} = \bigotimes_{\overline{\mathbf{s}}\in \mathbb{F}_q^{2\sn} / \cV^{\perp_\mathbb{S}}} \cH^{\cV}_{\overline{\mathbf{s}}} = \mathcal{W} \otimes \mathbb{C}^{q^{\sn-d}},
\label{eq:decompose}
\end{align}
where the system $\mathcal{W}$ is the $q^d$-dimensional Hilbert space spanned by $\{ |\overline{\mathbf{s}}\rangle  \mid \overline{\mathbf{s}}\in \mathbb{F}_q^{2\sn} / \cV^{\perp_\mathbb{S}}  \}$ with the property 
$\cH^{\cV}_{\overline{\mathbf{s}}} = |\overline{\mathbf{s}} \rangle \otimes  \mathbb{C}^{q^{\sn-d}} \coloneqq \{ |\overline{\mathbf{s}} \rangle \otimes | \psi \rangle \mid |\psi\rangle \in\mathbb{C}^{q^{\sn-d}}  \}$.

\item For any $\mathbf{s}, \mathbf{t}\in\mathbb{F}_q^{2\sn}$, we have
	\begin{align}
	\mathbf{W(t)} |\overline{\mathbf{s}} \rangle \otimes  \mathbb{C}^{q^{\sn-d}} 
	&= |\overline{\mathbf{s}+\mathbf{t}} \rangle \otimes  \mathbb{C}^{q^{\sn-d}},
	\label{eq:sdfe123adec}
	\\
	\mathbf{W(t)} \paren*{|\overline{\mathbf{s}} \rangle\langle \overline{\mathbf{s}} | \otimes  \bI_{q^{\sn-d}}  } \mathbf{W(t)}^{\dagger}
	&= |\overline{\mathbf{s}+\mathbf{t}} \rangle \langle \overline{\mathbf{s}+\mathbf{t}} | \otimes  \bI_{q^{\sn-d}}.
	\end{align}
\item For any $\mathbf{v}\in\cV$ and any $|\psi\rangle \in |\overline{\mathbf{0}} \rangle \otimes  \mathbb{C}^{q^{\sn-d}}$, we have
    \begin{align}
        \mathbf{W(v)}  |\psi\rangle
	&= |\psi\rangle.
    \end{align}
\end{enumerate}
\end{prop}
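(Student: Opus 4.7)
The plan is to regard $\cS(\cV)$ as a finite abelian group of commuting unitaries and apply joint spectral decomposition, then read off how arbitrary Weyl shifts permute the simultaneous eigenspaces. My starting point is the identity
\[
\mathbf{W(s)}\mathbf{W(t)} = \omega^{\symp{\mathbf{s}}{\mathbf{t}}}\,\mathbf{W(t)}\mathbf{W(s)} \qquad \text{for all } \mathbf{s},\mathbf{t}\in\mathbb{F}_q^{2\sn},
\]
which follows by unwinding the tensor product and using $\mathsf{Z}(b)\mathsf{X}(a)=\omega^{\ttr(ab)}\mathsf{X}(a)\mathsf{Z}(b)$, noting that the scalar phases $c_{\mathbf{v}}$ cancel in any commutator. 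Since $\cV$ is self-orthogonal, this shows $\{\mathbf{W(v)}\}_{\mathbf{v}\in\cV}$ is a pairwise commuting family of unitaries and, by Proposition~\ref{prop:stab}, a genuine abelian subgroup of the unitary group, so it admits a simultaneous eigenspace decomposition indexed by the character group $\widehat{\cV}$.

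For part (a), the non-degeneracy of $\symp{\cdot}{\cdot}$ identifies $\widehat{\cV}$ with $\mathbb{F}_q^{2\sn}/\cV^{\perp_\mathbb{S}}$ via the pairing $\overline{\mathbf{s}}\mapsto \bigl(\mathbf{v}\mapsto \omega^{\symp{\mathbf{v}}{\mathbf{s}}}\bigr)$, and the spectral projector onto the $\overline{\mathbf{s}}$-isotypic component is given by the Fourier formula
\[
\bP^{\cV}_{\overline{\mathbf{s}}} \;=\; \frac{1}{|\cV|}\sum_{\mathbf{v}\in\cV} \omega^{-\symp{\mathbf{v}}{\mathbf{s}}}\mathbf{W(v)}.
\]
Substituting this expression back into $\sum_{\overline{\mathbf{s}}}\omega^{\symp{\mathbf{v}}{\mathbf{s}}}\bP^{\cV}_{\overline{\mathbf{s}}}$ and applying character orthogonality over $\cV$ yields (\ref{eq:edededefcm})--(\ref{eq:svcxvijlkfdecomp}). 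For part (b), tracing the Fourier expression and using $\trace\mathbf{W(v)} = q^{\sn}\d_{\mathbf{v},\mathbf{0}}$ gives $\dim\cH^{\cV}_{\overline{\mathbf{s}}} = q^{\sn}/|\cV| = q^{\sn-d}$; the tensor factorization (\ref{eq:decompose}) is then realised by fixing any orthonormal basis of $\cH^{\cV}_{\overline{\mathbf{0}}}\cong\mathbb{C}^{q^{\sn-d}}$ and transporting it to each remaining sector by the unitary $\mathbf{W(t)}$ for a chosen coset representative, labelling the resulting coset-index factor as $\mathcal{W}$.

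Parts (c) and (d) then drop out. For (c), given $|\psi\rangle\in\cH^{\cV}_{\overline{\mathbf{s}}}$, the commutation identity gives
\[
\mathbf{W(v)}\mathbf{W(t)}|\psi\rangle = \omega^{\symp{\mathbf{v}}{\mathbf{t}}}\mathbf{W(t)}\mathbf{W(v)}|\psi\rangle = \omega^{\symp{\mathbf{v}}{\mathbf{s}+\mathbf{t}}}\mathbf{W(t)}|\psi\rangle,
\]
so $\mathbf{W(t)}|\psi\rangle$ lies in the $\overline{\mathbf{s}+\mathbf{t}}$-sector, establishing (\ref{eq:sdfe123adec}); the projector form follows by conjugating $|\overline{\mathbf{s}}\rangle\langle\overline{\mathbf{s}}|\otimes\bI_{q^{\sn-d}}$. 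Part (d) is the $\overline{\mathbf{s}}=\overline{\mathbf{0}}$ specialization of (a), since $\symp{\mathbf{v}}{\mathbf{0}}=0$. The main obstacle is the bookkeeping for the scalar phases $c_{\mathbf{v}}$ supplied by Proposition~\ref{prop:stab}: one must check that they can be chosen so that $\mathbf{v}\mapsto \mathbf{W(v)}$ is an honest group homomorphism on $\cV$ (not merely projective), which is precisely what makes the Fourier projector formula a set of genuine orthogonal projections rather than a cocycle-twisted analogue. Once this is secured, the rest is standard joint-spectrum analysis for a finite abelian group of commuting unitaries.
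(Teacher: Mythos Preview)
The paper does not supply its own proof of this proposition; it is quoted verbatim from \cite[Section~IV-A]{song2020colluding} as an external result. Your argument is correct and is the standard route: simultaneous spectral decomposition of the commuting family $\{\mathbf{W(v)}\}_{\mathbf{v}\in\cV}$, with the isotypic projectors written via the Fourier formula and the dimension count obtained from $\trace\mathbf{W(v)}=q^{\sn}\delta_{\mathbf{v},\mathbf{0}}$. The one point you single out as a potential obstacle---that the phases $c_{\mathbf{v}}$ must be chosen so that $\mathbf{v}\mapsto\mathbf{W(v)}$ is a genuine group homomorphism rather than merely a projective one---is precisely what Proposition~\ref{prop:stab} guarantees: the set $\cS(\cV)=\{c_{\mathbf{v}}\mathbf{\tilde W(v)}:\mathbf{v}\in\cV\}$ is asserted to be a subgroup of $\mathrm{HW}_q^{\sn}$ of cardinality $|\cV|$, and closure under the Heisenberg--Weyl multiplication then forces $\mathbf{W(v)}\mathbf{W(v')}=\mathbf{W(v+v')}$. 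So there is no gap, and your Fourier projectors are honest orthogonal projections as claimed.
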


\subsubsection{Application to quantum error correction}
Next, we explain how the stabilizer formalism is used for quantum error correction \cite{Gottesman97, Ketkar06}.
Similar to the classical case, 
 the structure of error correction will be used for accomplishing PIR tasks in the later sections.

Consider the transmission of a quantum state from a sender to a receiver over a noisy channel.
When the sender's message state is $\sigma$ on $\mathbb{C}^{q^{\sn-d}}$, the sender encodes the state $\sigma$ as
$|\overline{\mathbf{0}}\rangle \langle \overline{\mathbf{0}}|\otimes\sigma$ on the quantum system
$|\overline{\mathbf{0}}\rangle \otimes \mathbb{C}^{q^{\sn-d}}\subset \cH^{\otimes \sn}$ defined in (b) of Proposition~\ref{prop:2stab},
 and send the quantum system $\cH^{\otimes \sn}$ to the receiver.
Suppose the noise of the channel is $\mathbf{W(s)}$, \ie the operation $\mathbf{W(s)}$ is applied to the state.
Then, the receiver's state is in the space $|\overline{\mathbf{s}}\rangle \otimes \mathbb{C}^{q^{\sn-d}}$ by (c) of Proposition~\ref{prop:2stab}.
For the decoding of the error,
the receiver detects $\overline{\mathbf{s}}$ by performing the PVM measurement $\{\bP^{\cV}_{\overline{\mathbf{s}}} \mid \overline{\mathbf{s}}\in\Fq^{2\sn}/\cV\}$,
defined from the projections in (a) of Proposition~\ref{prop:2stab}.
This PVM is called {\em syndrome measurement} in the similar context to the classical error correction.
Then, the receiver applies error correction by choosing an element $\mathbf{s'} \in \overline{\mathbf{s}}$ and applying $\mathbf{W(-s')}$, which maps the received state to the original space $|\overline{\mathbf{0}}\rangle\otimes \mathbb{C}^{q^{\sn-d}}$.
Since the noise and error correction operation are combined as the unitray matrix $\mathbf{W(s)}\mathbf{W(-s')}=\mathbf{W(s-s')}$, if $\mathbf{s-s'} \in \cV$, the decoded state is $|\overline{\mathbf{0}}\rangle \langle \overline{\mathbf{0}}|\otimes\sigma$ from (d) of Proposition~\ref{prop:2stab}.
That is, $\sigma$ is correctly recovered by the receiver.
The characterization of the noise $\mathbf{s}$ and the corresponding choice of $\mathbf{s'}$ in decoding are essential problem in quantum error correction to achieve more reliable communication.

\section{Notions of PIR} \label{sec:PIR_def}

\subsection{Classical PIR}

We formally define a classical PIR scheme with MDS-coded storage (MDS-PIR). 
In a general MDS-PIR scheme, 
    one user and $\sn$ servers participate.

\begin{description}
\item[Distributed Storage] The $\sm$ files are given as uniformly and independently distributed random variables $X^1,\ldots, X^{\sm}$ in $\mathbb{F}_q^{\beta\times \sk}$. 
As described in Section~\ref{subsec:storage},
    the files $X = ( (X^1)^\top,\ldots, (X^{\sm})^\top)^\top $ are encoded with an MDS code $\cC$ as $Y = (Y_1,\ldots, Y_{\sn}) = X \bG_\cC \in\mathbb{F}_q^{\beta\sm\times \sn}$
     and 
    is distributed as the $s$-th server contains $Y_s \in \mathbb{F}_q^{\beta\sm} $.
    
\item[Shared Randomness] The servers possibly share randomness $H = (H_1,\ldots, H_\sn)$, where $\cH_s$ is owned by server $s$.

\item[Query]
Let $\tK$ be a uniform random variable with values in $[\sm]$.
The user desiring the $\tK$-th file $X^{\tK}$ prepares $Q^{\tK} = (Q_1^{\tK}, \ldots, Q_\sn^{\tK})$ with local randomness $R$ by the encoder $\mathsf{Enc}_{\mathrm{user}}: [\sm] \times \cR \to \cQ\coloneqq \cQ_1\times \cdots \times \cQ_{\sn}$, where
    $\cR$ is the alphabet of the user's local randomness and $\cQ_s$ is the alphabet of the query to server $s$,
    and sends $Q_s^\tK$ to server $s$.

\item[Response] With the encoder $\mathsf{Enc}_{\mathrm{serv}_s}: \Fq^{\beta\sm} \times \cH_s \times \cQ_s \to \cB_s$, the $s$-th server responds $B_s^\tK = \mathsf{Enc}_{\mathrm{serv}_s}(Y_s,H_s,Q_s) \in\cB_s$ to the user. 
We denote $B^\tK = (B_1^\tK,\ldots, B_\sn^\tK)$ and $\cB = \cB_1\times \cdots \times \cB_\sn$.

\item[Decoding] With the decoder $\mathsf{Dec}:[\sm]\times\cQ\times \cB \to \Fq^{\beta\times \sk}$, the user obtains an estimate $\hat{X}^{\tK} =\mathsf{Dec}(K,Q^\tK,B^\tK) \in\Fq^{\beta\times \sk}$ of $X^{\tK}$.

\end{description}

As described above, an MDS-PIR scheme $\Phi$ is defined as $\Phi_C = (\cC, \sigma_{\mathrm{init}}, \mathsf{Enc}_{\mathrm{user}}, \mathsf{Enc}_{\mathrm{serv}},
\mathsf{Dec})$ with the MDS code for storage $\cC$, the initial state $\sigma_{\mathrm{init}}$, the query encoder of the user $\mathsf{Enc}_{\mathrm{user}}$, the answer encoders of the servers $\mathsf{Enc}_{\mathrm{serv}} \coloneqq \{\mathsf{Enc}_{\mathrm{serv}_s} \mid \forall s\in[\sn]\}$, and the decoder of the user $\mathsf{Dec}$.

The correctness of MDS-PIR is defined as follows.

\begin{defi}[Correctness] \label{def:correctness}
  The correctness of a MDS-PIR scheme $\Phi_C$ is evaluated by the error probability 
    \begin{align}
    P_{\mathrm{err}}(\Phi_C) \coloneqq \max_{\tk\in[\sm]} \Pr[X^{\tk}\neq \hat{X}^{\tk}].
    \end{align}
\end{defi}

    We also consider the following secrecy conditions with a positive integer $\st$ with $1\le \st < \sn$.

\begin{defi}[Privacy with $t$-Collusion]\label{def:privacy}
    \emph{User $\st$-secrecy}: Any set of at most $\st$ colluding servers gains no information about the index $\tk$ of the desired file, \ie
	$p_{Q_{\cT}|\tK=\tk} = p_{Q_{\cT}|\tK=\tk'} $ for any $\tk,\tk'\in[\sm]$ and $\cT\subset[\sn]$ with $|\cT| \leq \st$,
	where $p_{Q_{\cT}|\tK=\tk}$ is the distribution of $Q_{\cT}$ conditioned with $\tK = \tk$.
	
    \emph{Server secrecy}: The user does not gain any information about the files other than the requested one, \ie 
    \begin{align}
        I(B^{\iota} ; X| Q^{\tk}, \tK=\tk) = H(X^{\tk}). \label{eq:sevsec_def}
    \end{align}
\end{defi}

    As customary, we assume that the size of the query alphabet
    is negligible compared to the size of the files. This is well justified if the files are assumed to be large, as the upload cost is independent of the size of the files. For simplicity, we only consider files of sizes $ \sk \b \log q$ in the following.
    However, note that repeatedly applying the scheme with the same queries allows for the download of files that are any multiple of $\sk \b \log q$ in size at the same rate and without additional upload cost.

    When user $\st$-secrecy is satisfied, the scheme is called $[\sn,\sk,\st]$-PIR and leaks no information of the index $\tK$ to any $\st$ colluding servers.
    When both user $\st$-secrecy and server secrecy are satisfied,
        the scheme is called {\em symmetric} and we denote it by $[\sn,\sk,\st]$-SPIR.
        
As a measure of efficiency of the MDS-PIR scheme $\Phi_C$ is defined as follows.
\begin{defi}[MDS-PIR rate] \label{def:MDS-PIR-ratedefi}
The MDS-PIR scheme $\Phi_C$ is defined as
    \begin{align}
	R (\Phi_C)
	= \frac{H(X^i)}{\sum_{j=1}^{\sn} \log |\cB|}.
	\label{eq:ratedef}
	\end{align}
\end{defi}

\begin{defi}[Achievable MDS-PIR rate]
A rate $R$ is called {\em $\epsilon$-error 
    achievable
    $[\sn,\sk,\st]$-PIR ($[\sn,\sk,\st]$-SPIR)
    rate
    with $\sm$ files} if 
    there exists a sequence of 
        $[\sn,\sk,\st]$-PIR ($[\sn,\sk,\st]$-SPIR)
        schemes with $\sm$ files
            $\{\Phi_\ell\}_{\ell}$ such that 
        the PIR rate $R(\Phi_\ell)$ approaches $R$
        and
        the error probability satisfies $\lim_{\ell \to \infty} P_{\mathrm{err}}(\Phi_\ell) \leq \epsilon$.
\end{defi}

\begin{defi}[MDS-PIR capacity] \label{defi:capacity}
The {\em $\epsilon$-error 
    $[\sn,\sk,\st]$-PIR ($[\sn,\sk,\st]$-SPIR)
    capacity with $\sm$ files} 
    $C_{\sm,\epsilon,\mathrm{cl}}^{[\sn,\sk,\st]}$ ($C_{\sm,\epsilon,\mathrm{cl}}^{[\sn,\sk,\st], \mathrm{s}}$) is the supremum of $\epsilon$-error achievable 
    $[\sn,\sk,\st]$-PIR ($[\sn,\sk,\st]$-SPIR)
        rate with $\sm$ files.
\end{defi}

\begin{remark}
Our definition of the achievable rate and capacity with asymptotic $\epsilon$ error generalizes the case of $\epsilon = 0$, which have been discussed in other PIR studies \cite{sun2017replicated,sun2018capacity}.
\end{remark}

We define two well-known classes of classical PIR.
For a set ${\cal I} \subseteq [\sn]$ and $\gamma \in \mathbb{N}$, we define
$\psi_\gamma({\cal I}) \coloneqq \bigcup_{i\in {\cal I}} [(i-1)\gamma + 1 : i\gamma]$.
For example, if ${\cal I} = [\sn]$, we have $\psi_{\gamma}([\sn]) = [\gamma\sn]$.

\begin{defi}[{Linear PIR \cite[Definition 1]{holzbaur2019capacity}}] \label{defi:linearPIR}
A PIR scheme is called {\em linear} if
\begin{itemize}
    \item the query $Q$ is represented by a matrix $\bQ \in\Fq^{\beta \sm \times \gamma \sn}$, where 
        $\bQ_{\psi_\gamma(s)}$ is the query to server $s$, and
    \item the classical answer $B_s$ of server $s$ is represented by 
        \begin{align}
            \bB_{\psi_\gamma(s)} = 
             \bY_{s}^{\top} \bQ_{\psi_\gamma(s)} \in \Fq^{1\times \gamma}.
        \end{align}
\end{itemize}
\end{defi}

We also define strongly linear PIR, which requires the linearity also for the reconstruction of the targeted file.

\begin{defi}[{Strongly linear PIR \cite{holzbaur2019capacity}}] \label{def:stronglyLinear}
A linear PIR scheme is called {\em strongly linear} if
    there exist linear maps $\{f_{i,j} \mid (i,j) \in [\beta] \times [\sk] \}$ such that
    \begin{align*}
    \bX^i_j= f_{i,j} \Big( ( \bB_{(s-1)\gamma+t_{i,j}} \mid s\in[\sn]) \Big)
    \quad \text{for some $t_{i,j}\in[\gamma]$}.
    \end{align*}
\end{defi}

One of our main results is on the MDS-QPIR capacity induced from strongly linear PIR.
The capacity of strongly linear PIR is derived in \cite{holzbaur2019capacity} as follows.

\begin{prop}[{\cite{holzbaur2019capacity, Holzbaur2019ITW}}]
\label{prop:strongly_capacity}
The zero-error capacity of any strongly linear PIR with $[\sn,\sk]$-MDS coded storage and $\st$ colluding servers is 
\begin{align}
	\sup \frac{\sk\beta\log q}{\sum_{i=1}^{\sn} H(B_i)}
	&= 1 - \frac{\sk+\st-1}{\sn}
\end{align}
for any number of files $\sm$.
\end{prop}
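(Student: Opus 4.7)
The plan is to establish matching achievability and converse bounds, following \cite{holzbaur2019capacity, Holzbaur2019ITW}.

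For achievability, I would instantiate the star-product scheme of Freij-Hollanti \emph{et al.}~\cite{freij2017private} with an $[\sn,\sk]$ GRS storage code $\cC$ and an $[\sn,\st]$ GRS query code $\cD$. Since the elementwise product of two GRS codes is again GRS, the star product $\cC\star\cD$ has dimension $\sk+\st-1$, and its dual has dimension $\sn-\sk-\st+1$, providing that many interference-free download dimensions per round out of $\sn$ transmitted symbols. The recovery of each targeted file symbol is a fixed linear combination of a fixed response coordinate across servers, so the scheme is strongly linear in the sense of Definition~\ref{def:stronglyLinear} and attains rate $1-(\sk+\st-1)/\sn$ with zero error.

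For the converse, the plan is a dimension-counting argument on the query structure. Writing the query matrix as $\mathbf{Q}\in\Fq^{\beta\sm\times \gamma\sn}$ and, for each $t\in[\gamma]$, letting $\mathbf{Q}^{(t)}$ denote the slice formed by taking the $t$-th column of each server's query block, I would translate the two structural constraints as follows. First, the $\st$-collusion privacy property of Definition~\ref{def:privacy} forces the marginal distribution of $\mathbf{Q}$ on any $\st$ servers to be independent of $\tK$; in the linear regime this pins down an $\st$-dimensional ``noise subspace'' inside every such $\st$-subset of query columns. Second, the strongly-linear decoding property assigns to each target symbol $X^i_j$ a fixed coordinate $t_{i,j}\in[\gamma]$ and expresses $X^i_j$ as a fixed linear combination of $(\mathbf{Y}_s^{\top}\mathbf{Q}_{(s-1)\gamma+t_{i,j}})_{s\in[\sn]}$. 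Combining these two constraints with the $[\sn,\sk]$-MDS property of $\cC$ through a star-product argument, I expect to show that at most $\sn-\sk-\st+1$ independent file symbols can be decoded per slice $t\in[\gamma]$, yielding $\sk\beta\leq(\sn-\sk-\st+1)\gamma$. Since $H(B_s)\leq\gamma\log q$, this implies
\begin{align*}
\frac{\sk\beta\log q}{\sum_{s=1}^{\sn}H(B_s)} \leq \frac{\sk\beta}{\sn\gamma} \leq 1-\frac{\sk+\st-1}{\sn}.
\end{align*}

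The main obstacle will be the tightness of this per-slice bound: extracting $\sn-\sk-\st+1$ rather than the individually easier $\sn-\sk$ or $\sn-\st$ requires simultaneously exploiting privacy and decodability. This amounts to showing that the $\st$-dimensional privacy subspace of $\mathbf{Q}^{(t)}$ and the evaluation image of the MDS storage code sit in generic position, so that only $\sn-\sk-\st+1$ directions remain for the user's reconstruction. The strongly-linear hypothesis is what makes this tractable: it fixes the decoding coordinates $t_{i,j}$ in advance, reducing the information-theoretic converse to a purely linear-algebraic intersection problem controlled by a Singleton-type inequality between the storage and query codes.
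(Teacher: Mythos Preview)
The paper does not give its own proof of this proposition; it is quoted from \cite{holzbaur2019capacity, Holzbaur2019ITW} and used as a black box. So there is no in-paper argument to compare against, and your sketch is an attempt to reconstruct the cited result.

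Your achievability via the star-product scheme is correct and matches how the paper itself instantiates that scheme later in Section~\ref{sec:achieve}.

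The converse sketch has the right architecture---per-slice counting under the strongly-linear decoding constraint, with privacy contributing a $\st$-dimensional subspace and the MDS storage contributing $\sk$---but the final chain of inequalities is broken. From $H(B_s)\le\gamma\log q$ you get $\sum_s H(B_s)\le\sn\gamma\log q$, which yields
\[
\frac{\sk\beta\log q}{\sum_{s} H(B_s)} \;\ge\; \frac{\sk\beta}{\sn\gamma},
\]
the opposite of the inequality you wrote. Bounding the alphabet size of the answers from above cannot upper-bound a rate that has the answer entropy in the \emph{denominator}. To repair this you must either argue that without loss of generality the responses have full entropy (e.g., by deleting linearly dependent answer coordinates and shrinking $\gamma$, which preserves correctness, strong linearity, and privacy), or run the counting directly against $\sum_s H(B_s)$ rather than against $\sn\gamma$. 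Your per-slice target $\sk\beta\le(\sn-\sk-\st+1)\gamma$ is the right intermediate statement, but it only delivers the stated capacity once $\gamma$ has been tied to the download entropy in the correct direction.
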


\subsection{Quantum PIR (QPIR)}

\subsubsection{QPIR from MDS-coded storage} \label{subsec:defQPIR}

We formally define a QPIR scheme with MDS-coded storage (MDS-QPIR), depicted in Figure~\ref{fig:scheme_model}.

\begin{description}
\item[Distributed Storage] The same as classical PIR.

\item[Shared Entanglement] 	The initial state of the $\sn$ servers is given as a density matrix $\sigma_{\mathrm{init}}$ on quantum system $\cH = \cH_1\otimes \cdots \otimes \cH_{\sn}$, where $\cH_s$ is distributed to server $s$. The state $\sigma_{\mathrm{init}}$ is possibly entangled.

\item[Query] The same as classical PIR.

\item[Response] Each server $s$ applies a CPTP map $\mathsf{Enc}_{\mathrm{serv}_s}[Q_s^\tK,Y_s]$ from $\cH_s$ to $\cA_s$ depending on $Q_s^\tK$ and $Y_s$, 
            where $\cA_s$ is a $\sd$-dimensional quantum system,
        and returns $\cA_s$ to the user.

\item[Decoding] Depending on $K$ and $Q^\tK$, the user applies a POVM $\mathsf{Dec}[K,Q^\tK]$ on $\cA = \cA_1\otimes \cdots \otimes \cA_{\sn}$
        and obtains the measurement outcome $\hat{X}^{\tK}$.

\end{description}

As described above, an MDS-QPIR scheme $\Phi$ is defined as $\Phi = (\cC, \sigma_{\mathrm{init}}, \mathsf{Enc}_{\mathrm{user}}, \mathsf{Enc}_{\mathrm{serv}},
\mathsf{Dec})$ with the MDS code for storage $\cC$, the initial state $\sigma_{\mathrm{init}}$, the query encoder of the user $\mathsf{Enc}_{\mathrm{user}}$, the answer encoders of the servers $\mathsf{Enc}_{\mathrm{serv}} \coloneqq \{\mathsf{Enc}_{\mathrm{serv}_s} \mid \forall s\in[\sn]\}$, and the decoding measurement of the user $\mathsf{Dec}$.

\begin{defi}
  The correctness, privacy, rate, and capacity of QPIR are defined in the same way as Definitions~\ref{def:correctness}, \ref{def:privacy}, \ref{def:MDS-PIR-ratedefi}, and \ref{defi:capacity}, respectively, except that \eqref{eq:sevsec_def} and \eqref{eq:ratedef} are replaced as 
    \begin{align}
    I(\cA ; X| Q^{\tk}, \tK=\tk) = H(X^{\tk}),
    \end{align}
    and 
    \begin{align}
	R (\Phi)
	= \frac{H(X^i)}{\sum_{j=1}^{\sn} \log \dim \cA}.
    \end{align}
\end{defi}

\begin{nota}
We denote by 
    $C_{\sm,\epsilon}^{[\sn,\sk,\st]}$ ($C_{\sm,\epsilon}^{[\sn,\sk,\st], \mathrm{s}}$) the {\em $\epsilon$-error 
    $[\sn,\sk,\st]$-QPIR ($[\sn,\sk,\st]$-QSPIR)
    capacity with $\sm$ files}.
\end{nota}

In Definition~\ref{def:privacy}, user $\st$-secrecy is defined as the independence of the index $K$ and the queries $Q_{\cal T}$ of the colluding servers.
Although this user secrecy condition is natural in classical PIR, one may be unsure whether this condition is sufficient for the QPIR setting because the servers share quantum entanglement.
To justify this condition in the QPIR setting, we consider the malicious scenario where the servers apply malicious operations on the answered systems in order to extract the information of the user's request $K$.
Even in this malicious scenario, 
the servers cannot 
    exploit entanglement to break the user's secrecy because of the no-signaling principle \cite{Bell}.
No-signaling principle states that two parties sharing an entangled state cannot communicate any information from their local measurements.
From this principle, even if the colluding servers share entanglement with the other servers or the user throughout the scheme, the only information obtained by the colluding servers is the queries $Q_{\cal T}$.
Thus, the user $\st$-secrecy condition guarantees the secrecy of $K$ from the colluding servers.

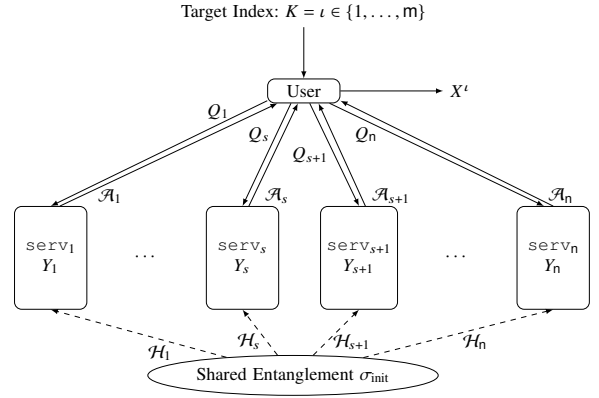
\begin{figure}[t]
\begin{center}
\resizebox{0.85\linewidth}{!}{
\begin{tikzpicture}[node distance = 3.3cm, every text node part/.style={align=center}, auto]
    \node [block] (user) {User};
    \node [above=1cm of user] (sent) {Target Index: $\tK = \tk \in\{1,\ldots,\sm\}$};
    \node [block,minimum height = 2cm, below left=2cm and 3.5cm of user] (serv1) {$\mathtt{serv}_1$\\$Y_1$};
    \node [right=0.8cm of serv1] (ten) {$\cdots$};
    \node [block,minimum height = 2cm, right=0.8cm of ten] (serv2) {$\mathtt{serv}_{s}$\\$Y_s$};
    \node [block,minimum height = 2cm, right=0.8cm of serv2] (serv3) {$\mathtt{serv}_{s+1}$\\$Y_{s+1}$};
    \node [right=0.8cm of serv3] (ten2) {$\cdots$};
    \node [block,minimum height = 2cm, right=0.8cm of ten2] (servn) {$\mathtt{serv}_{\sn}$\\$Y_{\sn}$};
    
    \node [draw,ellipse,below right=1cm and 2cm of serv1] (shared) {Shared Entanglement $\sigma_{\mathrm{init}}$};
    
    \node [right=2cm of user] (receiv) {$X^\tk$};
    
    \path [line] (sent) -- (user);
    
    \path [line] (user.195) --node[pos=0.1,left=2mm] {$Q_1$} (serv1.north);
    \path [line] (user.224) --node[pos=0.3,left] {$Q_{s}$} (serv2.north);
    \path [line] (user) --node[pos=0.5,left] {$Q_{s+1}$} (serv3.north);
    \path [line] (user) --node[pos=0.3,left=2mm] {$Q_{\sn}$} (servn.100);

    \path [line] (serv1.80)--node[pos=0.1,right=2mm] {$\cA_1$} (user);
    \path [line] (serv2.83) --node[pos=0.1,right=1mm] {$\cA_{s}$} (user);
    \path [line] (serv3.83) --node[pos=0.1,right=1mm] {$\cA_{s+1}$} (user.320);
    \path [line] (servn.north) --node[pos=0.1,right=2mm] {$\cA_{\sn}$} (user.345);
    
    \path [line] (user.east) -- (receiv);
    
    \path [line,dashed] (shared) --node[pos=0.1,left=6mm] {$\cH_{1}$} (serv1.south);
    \path [line,dashed] (shared) --node[pos=0.3,left=0mm] {$\cH_{s}$} (serv2.south);
    \path [line,dashed] (shared) --node[pos=0.3,right=0mm] {$\cH_{s+1}$} (serv3.south);
    \path [line,dashed] (shared) --node[pos=0.33,right=6mm] {$\cH_{\sn}$} (servn.south);
\end{tikzpicture}
}
\caption{Quantum private information retrieval scheme.}\label{fig:scheme_model}
\end{center}
\end{figure}

\subsubsection{Example of QPIR scheme} \label{sec:example_stab}

With stabilizer formalism, we give an example of two-server QPIR, which corresponds to the QPIR scheme in \cite{song2019capacity}.
Let $\cH_1$ and $\cH_2$ be two-dimensional quantum systems, which are also called {\em qubits}.
From Proposition~\ref{prop:stab}, we define a stabilizer on $\cH_1\otimes \cH_2$ with the self-orthogonal subspace
\begin{align*}
\cV = \{ (0,0,0,0), (1,1,0,0), (0,0,1,1), (1,1,1,1) \} \subset \mathbb{F}_2^{4}.
\end{align*}
The space $\cV$ satisfies $\cV = \cV^{\perp_\mathbb{S}}$.
With this stabilzer, 
we set the initial entangled state of the two servers as $|\overline{\mathbf{0}}\rangle \in \cH_1\otimes \cH_2$, where $\overline{\mathbf{s}} = \mathbf{s} + \cV^{\perp_\mathbb{S}}$ for all $\mathbf{s}\in\mathbb{F}_2^4$.
The two servers have $\cH_1$ and $\cH_2$, respectively.
The files are prepared as $\mathbf{m}_i = (m_{iX},m_{iZ}) \in \mathbb{F}_2^2$ for all $i\in[\sm]$.
For querying the $k$-th file, the user sends queries
\begin{align}
    \mathbf{q}_1 &= (\mathbf{e}_k,\mathbf{e}_k) + \mathbf{r}\in\mathbb{F}_2^{2\sm},\\
    \mathbf{q}_2 &= \mathbf{r}\in\mathbb{F}_2^{2\sm},
\end{align}
where
    $\mathbf{e}_k$ is the $k$-th standard vector in $\mathbb{F}_2^{\sm}$
    and
    $\mathbf{r}$ is a random vector in $\mathbb{F}_2^{2\sm}$.
After receiving queries, the servers generates 
    \begin{align}
        (a_1,b_1) &= \mathbf{q}_1 \cdot \bm = \bm_k + \br \cdot \bm,\\
        (a_2,b_2) &= \mathbf{q}_2 \cdot \bm = \br \cdot \bm,
    \end{align}
where $\bm = ((m_{1X},m_{2X},\ldots, m_{\sm X}), (m_{1Z},m_{2Z},\ldots, m_{\sm Z}))\in\mathbb{F}_2^{2\sm}$.
Then, the server $i$ applies $\sX(a_i)\sZ(b_i)$ on $\cH_i$ and sends $\cH_i$ to the user.
The user receives the states 
\begin{align}
    \mathbf{\tilde{W}}(a_1,a_2,b_1,b_2) |\overline{\mathbf{0}} \rangle 
    &= |\overline{(a_1,a_2,b_1,b_2)} \rangle\\
    &= |\overline{(m_{kX},0,m_{kZ},0)} \rangle,
    \label{eq:sexample_state}
\end{align}
where the first equality follows from \eqref{eq:sdfe123adec}
    and
    the second equality follows from $(1,1,1,1)\in\cV^{\perp_\mathbb{S}} =\cV$.
By applying measurement on the received state,
    the user retrieves $\bm_k = (m_{kX},m_{kZ})\in\mathbb{F}_2^2$ correctly.
The user secrecy is satisfied from the query structure, and the server secrecy is satisfied because the user's state only depends on $\bm_k$ as in \eqref{eq:sexample_state}.
The QPIR rate is $1$ because $2$ bits are retrieved and $2$ qubits are downloaded.

\subsubsection{Classes of QPIR}

As a general class of QPIR schemes, we introduce a new class called stabilizer QPIR, which includes 
the example in Section~\ref{sec:example_stab} and most of the known multi-server QPIR schemes \cite{song2019capacity, song2020colluding, song2019allbutone, allaix2020quantum}.

\begin{defi}[Stabilizer QPIR]
A QPIR scheme is called a stabilizer QPIR induced from a classical PIR scheme $\Phi_C$ if
\begin{itemize}
\item the initial state of the servers $\sigma_{\mathrm{init}}$ is a state in $\cH^{\cV}_{\overline{\mathbf{0}}} = |\overline{\mathbf{0}}\rangle \otimes \mathbb{C}^{q^{\sn-d}} \subset \cH^{\otimes \sn}$ defined with a self-orthogonal subspace $\cV$ by Proposition~\ref{prop:2stab},
\item the query is the same as $\Phi_C$, and
\item the $s$-th server's operation is the Weyl operation $\sX(a_s)\sZ(b_s)$, where $(a_s,b_s               )\in\Fq^2$ is the $s$-th server's answer of $\Phi_C$.
\end{itemize}
\end{defi}

In Section~\ref{sec:achieve}, we construct a stabilizer QPIR scheme, which achieves the capacities in Corollaries \ref{coro:TQPIR} and \ref{coro:QPIR}.

Further, we define a more general class of QPIR as follows.

\begin{defi}[Dimension-squared QPIR]
A QPIR scheme is said to be {\em dimension-squared} if the $s$-th server's operation is determined by classical information $B_s\in   \cB_s$ with $|\cB_s|\le \sd^2$ for all $s\in[\sn]$.

Furthermore, if $B = (B_1,\ldots, B_{\sn})$ is the answer of a classical PIR scheme $\Phi_C$ and the query of the QPIR scheme is the same as $\Phi_C$, the QPIR scheme is called a {\em dimension-squared QPIR induced from the classical PIR scheme $\Phi_C$}.
\end{defi}

Any stabilizer QPIR scheme is a dimension-squared scheme induced from a classical PIR scheme.
Accordingly, the example in Section~\ref{sec:example_stab} and the multi-server QPIR schemes \cite{song2019capacity, song2020colluding, song2019allbutone, allaix2020quantum} are also dimension-squared schemes induced from strongly linear schemes.
In Section~\ref{sec:converse}, we derive the converse bound 
for dimension-squared QPIR schemes.

When a classical PIR scheme $\Phi_C$ induces a QPIR scheme without the condition of dimensions, then the scheme can be modified to induce dimension-squared QPIR in the following way.
First, we make the $\sn$ answers the same size by repeating $\Phi_C$ multiple times while reordering the roles of the servers for all possible cases.
Let $\sd'$ be the size of one answer 
    and $\Phi_C'$ be the repeated PIR scheme.
Again, let $\Phi_C''$ be the PIR scheme made by repeating $\Phi_C'$ $\sd'$ times,
    and then,
    the size of each answer of $\Phi_C''$ is $(\sd')^2$.
Thus, a dimension-squared QPIR scheme $\Phi_Q$ is induced from $\Phi_C''$ if $\Phi_Q$ can be made to satisfy the correctness condition.
For convenience, we consider a dimension-squared QPIR scheme induced from $\Phi_C''$ as induced from $\Phi_C$.

\begin{nota}
We denote by 
    $C_{\sm,\epsilon,\mathrm{stab}}^{[\sn,\sk,\st]}$,
    $C_{\sm,\epsilon,\mathrm{dim}}^{[\sn,\sk,\st]}$
    ($C_{\sm,\epsilon,\mathrm{stab}}^{[\sn,\sk,\st], \mathrm{s}}$, $C_{\sm,\epsilon,\mathrm{dim}}^{[\sn,\sk,\st],\mathrm{s}}$) 
    the $\epsilon$-error 
        $[\sn,\sk,\st]$-QPIR ($[\sn,\sk,\st]$-QSPIR) capacities of 
        stabilizer QPIR induced from strongly linear PIR and dimension-squared QPIR induced from strongly linear PIR.
\end{nota}

From the definitions, the capacities are decreasing for $\st$ and increasing for $\epsilon$, and satisfy
    \begin{alignat}{4}
    C_{\sm,\epsilon,\text{stab}}^{[\sn,\sk,\st],\mathrm{s}}  &\le &\quad C_{\sm,\epsilon,\textnormal{dim}}^{[\sn,\sk,\st],\mathrm{s}}\quad     &\!\le &\quad  C_{\sm,\epsilon}^{[\sn,\sk,\st],\mathrm{s}} \nonumber\\
      \vle \quad\       &\phantom{\supset} &   \vle\qquad            &\!\phantom{\supset} &   \vle \quad    \
  \label{eq:inclusions}\\
       C_{\sm,\epsilon,\text{stab}}^{[\sn,\sk,\st]}  &\le &\quad C_{\sm,\epsilon,\textnormal{dim}}^{[\sn,\sk,\st]}\quad     &\!\le &\quad  C_{\sm,\epsilon}^{[\sn,\sk,\st]} .
       \nonumber
\end{alignat}

\begin{table}[t]   
\begin{center}
\caption{Summary of symbols} \label{tab:1}
\begin{tabular}{|c|c|}
\hline
Symbol & Description \\
\hline
\hline
$\sn$ & Number of servers / Length of a code \\
\hline
$\sk$ & Dimension of $[\sn,\sk]$-MDS code \\
\hline
$\st$ & Number of colluding servers / Dimension of query code \\
\hline
$(i), \sm$ & (Index running over) Number of files \\
\hline
$\sd$ & Dimension of the answer $\cA_s$ ($\forall s\in[\sn]$) \\
\hline
$(b), \beta$ & (Index running over) Number of stripes in a file \\
\hline
$(r), \rho$ & (Index running over) Number of rounds \\
\hline
$p,s$ & Indices of pair and server, respectively \\
\hline
$\cC,\cD,\cS$ & Storage, query and star-product codes \\
\hline
$(\sigma),\cH$ & (State of) Quantum system \\
\hline
$\cV$ & Self-orthogonal subspace of $\Fq^{2\sn}$ \\
\hline
$\sX,\sZ$ & Pauli operators \\
\hline
$\bX,\bY$ & Matrices of files and encoded symbols \\
\hline
$\bZ,\bB$ & Matrices of queries and responses \\
\hline
$\Phi_C,\Phi$ & Classical, quantum PIR scheme  \\
\hline
\end{tabular}
\end{center}
\end{table}

\section{Main results} \label{sec:main}

In this section, we give our two main results of the paper.
The first result is the asymptotic capacity of stabilizer QPIR and dimension-squared QPIR induced from strongly linear PIR.
The second result is the general asymptotic capacity without collusion, i.e., the case $\st = 1$.
Before our capacity result, we state a general upper bound of dimension-squared QPIR capacity.

\begin{theo}[Converse for dimension-squared QPIR  induced from classical PIR] \label{theo:2C}
Let $A$ be a set of assumptions on classical PIR and $C_\epsilon[A]$ be the $\epsilon$-error capacity of the classical PIR with assumptions $A$.
Then, for any $\epsilon'\in[0,1)$, the $\epsilon'$-error capacity of dimension-squared QPIR induced from classical $\epsilon$-error PIR with the assumptions $A$ is upper bounded by $\min\{1,2C_\epsilon[A]\}$.
\end{theo}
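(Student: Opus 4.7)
The plan is to upper bound the rate $R(\Phi_Q)$ of any dimension-squared QPIR scheme $\Phi_Q$ induced from a classical $\epsilon$-error PIR scheme $\Phi_C$ satisfying $A$ by the two quantities $2 C_\epsilon[A]$ and $1$ separately, then take the minimum, and finally pass to the supremum over achieving sequences to conclude the capacity bound.

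For the $\le 2 C_\epsilon[A]$ part, I would exploit the dimension-squared condition directly. By definition the $s$-th server's operation is determined by classical information $B_s\in\cB_s$ with $|\cB_s|\le \sd^2$, and in the induced setting these $B_s$ coincide with the answers of $\Phi_C$. Hence $\sum_{s=1}^\sn \log|\cB_s|\le 2\sn\log \sd$, and comparing the denominators in the respective rate formulas gives
\begin{align*}
R(\Phi_Q) \;=\; \frac{H(X^i)}{\sn\log \sd} \;\le\; \frac{2 H(X^i)}{\sum_{s=1}^\sn\log|\cB_s|} \;=\; 2R(\Phi_C).
\end{align*}
Since $\Phi_C$ is $\epsilon$-error and lies in the class specified by $A$, its rate is bounded by $C_\epsilon[A]$, and the bound $R(\Phi_Q)\le 2C_\epsilon[A]$ follows immediately.

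For the $\le 1$ part, I would rely on a Holevo--Fano argument. The user's decoder is a POVM on the returned system $\cA$ of total dimension $\sd^\sn$, and crucially the shared entanglement lives only among the servers, not between the servers and the user, so Holevo's bound applies to the ensemble of possible received states. Conditioning on the index and queries gives
\begin{align*}
I(X^K; \hat{X}^K \mid K, Q^K) \;\le\; \log\dim\cA \;=\; \sn\log \sd.
\end{align*}
Combining this with Fano's inequality under $P_{\mathrm{err}}(\Phi_Q)\le\epsilon'$ yields $(1-\epsilon')H(X^K)\le \sn\log \sd + h(\epsilon')$. To upgrade this asymptotic weak-converse bound to $R(\Phi_Q)\le 1$ uniformly for $\epsilon'\in[0,1)$, I would appeal to the strong converse for the classical capacity of a classical--quantum channel of output dimension at most $\sd^\sn$ (\emph{cf.} Ogawa--Nagaoka and Winter), which forces any asymptotic rate exceeding $1$ to drive the error probability to $1$.

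Taking the minimum of the two bounds delivers the claimed $R(\Phi_Q)\le\min\{1,2C_\epsilon[A]\}$. The main obstacle is the strong-converse step in the $\le 1$ part: the bare Fano computation only gives the weaker $R\le (1+o(1))/(1-\epsilon')$, so obtaining $\le 1$ uniformly over $\epsilon'\in[0,1)$ requires either a classical--quantum strong converse or a scheme-specific refinement that uses the dimension-squared structure a second time. By contrast, the $\le 2C_\epsilon[A]$ half of the theorem is essentially a one-line consequence of the dimension-squared hypothesis, so the technical weight of the proof sits entirely on the quantum information-theoretic $\le 1$ bound.
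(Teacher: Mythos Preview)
Your argument for the bound $R(\Phi_Q)\le 2C_\epsilon[A]$ is correct and is exactly what the paper does: use $|\cB_s|\le\sd^2$ (equivalently $H(B_s)\le 2\log\sd$) to compare the quantum and classical denominators, obtaining $R(\Phi_Q)\le 2R(\Phi_C)\le 2C_\epsilon[A]$. That is the entire content of the paper's proof of this theorem.

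Where you diverge from the paper is in your assessment of difficulty. You write that ``the technical weight of the proof sits entirely on the quantum information-theoretic $\le 1$ bound,'' and you propose a Holevo--Fano argument followed by a cq strong converse. The paper takes the opposite view: its proof opens with ``Since the upper bound $1$ is trivial'' and never revisits it. The reason is that the user receives a system $\cA$ of dimension $\sd^{\sn}$ and has no prior entanglement with the servers, so the accessible information about $X^\iota$ is at most $\log\dim\cA=\sn\log\sd$; for the zero-error setting (which is what the paper's downstream Corollary~4.1 actually uses) this is immediate from orthogonality, and for $\epsilon'>0$ it is the standard strong converse for the noiseless cq channel. Either way it is a one-line citation, not a place to invest effort. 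So your plan is sound, but recalibrate: the $\le 2C_\epsilon[A]$ step is the whole proof, and the $\le 1$ step is a remark.
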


Theorem~\ref{theo:2C} will be proved in Section~\ref{subsec:conv2}.
Notice that Theorem~\ref{theo:2C} is proved for dimension-squared QPIR induced from any classical PIR class.
Intuitively, the dimensional condition in the dimension-squared QPIR is the key factor for doubling the capacity of any classical PIR.
On the other hand, it should be noted that classical PIR schemes do not necessarily induce QPIR schemes, \ie the existence and the construction of QPIR induced from the classical PIR is not trivial as discussed in Section~\ref{sec:contrib}.

Our first capacity result is on the capacities of stabilizer QPIR and dimension-squared QPIR induced from strongly linear PIR.
An upper bound of the capacities $C_{\sm,\epsilon,\textnormal{dim}}^{[\sn,\sk,\st]}$ and $C_{\sm,\epsilon,\textnormal{dim}}^{[\sn,\sk,\st], \mathrm{s}}$ is derived by Theorem~\ref{theo:2C} and Proposition~\ref{prop:strongly_capacity} as
\begin{align}
C_{\sm,0,\textnormal{dim}}^{[\sn,\sk,\st]}
, C_{\sm,0,\textnormal{dim}}^{[\sn,\sk,\st], \mathrm{s}}
\le 2 \paren*{ 1- \frac{\sk+\st-1}{\sn}}.
\label{eq:convds}
\end{align}
Furthermore, we prove the following theorem in Section~\ref{sec:achieve}. 

\begin{theo}[Achievability] \label{theo:achieve}
Let $\sn,\sk,\st$ be positive integers with $1\le \sn/2 \leq \sk+\st-1 < \sn$.
There exists a stabilizer QPIR scheme induced from strongly linear PIR with $[\sn,\sk]$-MDS coded storage and $\st$-colluding servers achieving \eqref{eq:convds} with equality for any number of files $\sm$ and without error.
\end{theo}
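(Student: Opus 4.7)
The plan is to construct an explicit stabilizer QPIR scheme that runs two copies of the strongly linear star-product PIR of \cite{freij2017private} on top of a stabilizer structure built from (weakly) self-dual GRS codes, so that the $\sn$ qudits returned to the user carry twice the information that $\sn$ classical field symbols would.

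The construction starts with the codes. For any $[\sn,\sk]$ GRS storage code $\cC$, the first sub-goal is to exhibit an $[\sn,\st]$ GRS retrieval code $\cD$ whose star-product $\cS \coloneqq \cC \star \cD$ is the $[\sn,\sk+\st-1]$ MDS code satisfying the weak self-duality $\cS^\perp \subseteq \cS$. The hypothesis $\sn/2 \le \sk+\st-1$ is precisely the dimensional requirement, and the existence of such a $\cD$ is an algebraic statement about GRS evaluation points and column multipliers, which I would establish as a separate lemma. Setting $\cV \coloneqq \cS^\perp \times \cS^\perp \subset \Fq^{2\sn}$ of $\Fq$-dimension $d = 2(\sn-\sk-\st+1)$, a direct expansion of the trace-symplectic form gives
\begin{align*}
\symp{(\mathbf{x}_1,\mathbf{x}_2)}{(\mathbf{y}_1,\mathbf{y}_2)} = \ttr \bigl(\mathbf{x}_2\cdot\mathbf{y}_1 - \mathbf{x}_1\cdot\mathbf{y}_2\bigr),
\end{align*}
which vanishes on $\cV \times \cV$ because $\cS^\perp \subseteq \cS$ makes each factor orthogonal to the other. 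Hence $\cV$ is self-orthogonal, $\cV^{\perp_\mathbb{S}} = \cS \times \cS$, and Proposition~\ref{prop:2stab}(b) supplies a $q^{2(\sk+\st-1)-\sn}$-dimensional message space $\cH^{\cV}_{\overline{\mathbf{0}}}$ in which the servers' shared entanglement is placed.

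The quantum protocol runs two independent copies of the strongly linear star-product scheme on $\cC$ and $\cD$. In each copy, every server emits a single field element, and the length-$\sn$ joint answer vector decomposes as an interference codeword in $\cS$ plus a vector in a fixed complement that encodes $\sn-\sk-\st+1$ symbols of $X^\tK$. Writing the two per-server outputs as $a_s$ and $b_s$, server $s$ applies $\sX(a_s)\sZ(b_s)$ to its qudit $\cA_s = \cH_s$, and the user performs the syndrome PVM $\{\bP^{\cV}_{\overline{\mathbf{s}}}\}$ on $\cA_{[\sn]}$. By Proposition~\ref{prop:2stab}(c) the returned state is $\cb{\overline{(\mathbf{a},\mathbf{b})}}$ tensored with the fixed ancilla factor; the interference parts of $\mathbf{a}, \mathbf{b}$ lie in $\cS \times \cS = \cV^{\perp_\mathbb{S}}$ and therefore vanish under the quotient, so the syndrome outcome is a deterministic function of the $d = 2(\sn-\sk-\st+1)$ target-file symbols produced across the two classical copies.

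What remains is bookkeeping. Zero-error correctness follows from this determinism; user $\st$-secrecy is inherited verbatim from the classical star-product scheme because the queries are identical and the shared entanglement is independent of $\tK$, while no-signaling (as discussed after Definition~\ref{def:privacy}) rules out any side-channel leakage through the shared state; server secrecy is automatic because the user's classical register depends only on target-file symbols. The rate evaluates to $d\log q / (\sn\log q) = 2\bigl(1-(\sk+\st-1)/\sn\bigr)$, matching \eqref{eq:convds} with equality for every $\sm$. I expect the main obstacle to be the code-construction step: proving, for every $[\sn,\sk]$ GRS code and every admissible $\st$, the existence of an $[\sn,\st]$ GRS companion whose star-product is weakly self-dual. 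The quantum part is then a clean application of Proposition~\ref{prop:2stab}, but the algebraic existence of compatible weakly self-dual star-product GRS pairs is where the real content of the theorem sits.
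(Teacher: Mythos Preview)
Your proposal is correct and takes essentially the same approach as the paper: the paper also builds $\cV$ as the Cartesian square of the dual of a weakly self-dual star-product GRS code (Lemmas~\ref{lem:existenceSelfDual}--\ref{lem:SelfDualStarCode} are exactly the separate code-construction lemma you anticipate), runs two parallel copies of the star-product PIR (indexed by $p\in[2]$ in Protocol~\ref{prot:scheme}), and extracts the target symbols via the syndrome measurement of Proposition~\ref{prop:2stab}. The only detail you omit is the multi-round bookkeeping (parameters $\beta,\rho$ and index sets $\cJ_r^b$) needed when $\sn-\sk-\st+1$ does not divide $\sk$, but this is purely mechanical.
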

Combining Eqs.~\eqref{eq:inclusions}, \eqref{eq:convds}, and Theorem \ref{theo:achieve}, we obtain the first capacity result.

\begin{coro}[{MDS-Q(S)PIR capacity with colluding servers}] \label{coro:TQPIR}
Let $\sn, \sk, \st$ be positive integers such that $1 \le \sk \le \sn$ and $1 \le \st < \sn$. Then, 
 for any $C_{\sm,0} \in \{C_{\sm,0,\textnormal{stab}}^{[\sn,\sk,\st]} 
    ,  C_{\sm,0,\textnormal{stab}}^{[\sn,\sk,\st], \mathrm{s}} 
    , C_{\sm,0,\textnormal{dim}}^{[\sn,\sk,\st]} 
    , C_{\sm,0,\textnormal{dim}}^{[\sn,\sk,\st], \mathrm{s}}\}$,
   \begin{align}
    C_{\sm,0} =
	\begin{cases}
	1  &   \text{if $\sk+\st-1 \leq \sn/2$},\\
	2 \paren*{ 1- \frac{\sk+\st-1}{\sn}} & \text{otherwise}.
	\end{cases}
    \label{eq:L_capacity}
   \end{align}
\end{coro}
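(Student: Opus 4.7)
The corollary follows by combining the converse \eqref{eq:convds}, the achievability Theorem~\ref{theo:achieve}, and the inclusion chain \eqref{eq:inclusions}. For the converse, I would first note the trivial bound $C_{\sm,0} \le 1$ valid for every QPIR variant, which is a consequence of the Holevo inequality: the user must recover the classical file $X^{\tK}$ of entropy $\sk\beta\log q$ from the answer systems $\cA_1,\ldots,\cA_{\sn}$, whose Holevo information is bounded by $\sum_{s} \log \dim \cA_s$. Combining this trivial bound with \eqref{eq:convds} gives
\[
C_{\sm,0,\textnormal{dim}}^{[\sn,\sk,\st]},\; C_{\sm,0,\textnormal{dim}}^{[\sn,\sk,\st],\mathrm{s}} \;\le\; \min\paren*{1,\; 2\paren*{1 - \frac{\sk+\st-1}{\sn}}},
\]
and, via the inclusion chain \eqref{eq:inclusions}, the same upper bound is inherited by the two stabilizer capacities.

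For achievability I would split on the two regimes appearing in \eqref{eq:L_capacity}. When $\sn/2 \le \sk+\st-1 < \sn$, Theorem~\ref{theo:achieve} directly yields an error-free stabilizer QPIR scheme induced from strongly linear PIR attaining rate $2(1-(\sk+\st-1)/\sn)$, which equals $\min\{1,2(1-(\sk+\st-1)/\sn)\}$ in this regime. Since Theorem~\ref{theo:achieve} achieves \eqref{eq:convds} with equality and \eqref{eq:convds} contains both the non-symmetric and the symmetric bounds, the scheme in question is automatically symmetric; this lower-bounds the smallest of the four capacities, $C_{\sm,0,\textnormal{stab}}^{[\sn,\sk,\st],\mathrm{s}}$. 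When $\sk+\st-1 < \sn/2$ the target rate is $1$. I would invoke Theorem~\ref{theo:achieve} with the reduced parameters $(\sn'',\sk,\st)$ where $\sn'' \coloneqq 2(\sk+\st-1)$, so that $\sn''/2 = \sk+\st-1$ meets the theorem's hypothesis and the resulting sub-scheme has rate $2(1 - (\sk+\st-1)/\sn'') = 1$. Implementing the sub-scheme on any $\sn''$-subset of the $\sn$ physical servers is legitimate because the restriction of an $[\sn,\sk]$ MDS code to any $\sn'' \ge \sk$ coordinates is again an $[\sn'',\sk]$ MDS code, and $\sn'' = 2(\sk+\st-1) \ge 2\sk > \sk$ since $\sk,\st \ge 1$. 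The remaining $\sn-\sn''$ servers are kept idle: they receive dummy queries and return trivial one-dimensional systems that contribute nothing to the download. The overall rate then equals $1$, while $\st$-privacy and server secrecy pass through from the sub-scheme, because idle servers deliver no information to any colluding coalition.

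Sandwiching the matching upper and lower bounds through \eqref{eq:inclusions} forces all four capacities in the set to collapse to the value in \eqref{eq:L_capacity}. The main technical nuisance I anticipate is the regime $\sk+\st-1 < \sn/2$, since the MDS-QPIR definition prescribes a common answer dimension $\sd$ across all servers. The cleanest fix is either to read $\sum_s \log \dim \cA_s$ server-by-server in the rate formula (setting $\dim \cA_s = 1$ on idle servers, so they contribute $\log 1 = 0$), or to time-share cyclically over all $\binom{\sn}{\sn''}$ choices of participating $\sn''$-subsets and stack the rounds into a single mega-round, so that every physical server ends up with the same total answer dimension $\sd^{\binom{\sn-1}{\sn''-1}}$ without changing either the rate or the privacy of the reduced scheme.
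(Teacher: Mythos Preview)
Your proof is correct and follows the same global structure as the paper---converse from \eqref{eq:convds} together with the trivial bound $1$ and the chain \eqref{eq:inclusions}, achievability from Theorem~\ref{theo:achieve}---but your treatment of the regime $\sk+\st-1<\sn/2$ takes a different route.

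You reduce the number of participating servers to $\sn''=2(\sk+\st-1)$ and leave the rest idle, then repair the equal-answer-dimension constraint by time-sharing over all $\sn''$-subsets. The paper instead keeps all $\sn$ servers active and \emph{increases the collusion parameter}: for even $\sn$ it invokes Theorem~\ref{theo:achieve} with $(\sn,\sk,\st')$ where $\st'=\sn/2-\sk+1$, so that $\sk+\st'-1=\sn/2$ and the resulting rate is exactly $1$; since a $\st'$-private scheme is automatically $\st$-private for every $\st\le\st'$, this yields the achievability without any idle servers or time-sharing patch. Only for odd $\sn$ does the paper also drop one server (applying the even case to $\sn-1$ servers with $\st'=(\sn+1)/2-\sk$), where it encounters the same equal-dimension nuisance you flag. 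The paper's monotonicity-in-$\st$ argument is therefore cleaner in the generic (even $\sn$) case, as it stays strictly inside the stabilizer-QPIR class on all $\sn$ servers; your subset-reduction is more uniform across parities of $\sn$ but always needs the time-sharing fix to restore equal answer dimensions.
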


In Corollary~\ref{coro:TQPIR}, the case for $\sk+\st-1 \leq \sn/2$ is proved as follows.
When $\sk+\st-1 = \sn/2$, Theorem~\ref{theo:achieve} proves the rate $1$ is achievable.
If $\st \leq \st'$, the QPIR scheme for $\st'$ colluding servers also has the user secrecy against $\st$ colluding servers.
Therefore, when $\sk+\st-1 \leq \sn/2$,
    we can apply the scheme for $\sk+\st'-1 = \sn/2$ with $\sn$ even to achieve the rate $1$.
Finally, the tightness of the rate $1$ follows trivially from definition. If $\sn$ is odd, we just consider $\sn-1$ servers and $\st=(\sn+1)/2 - \sk$ in order to achieve rate 1.

As the second result, when no servers collude, \ie $\st=1$, we prove the general asymptotic capacity theorem. 
Without the assumption of dimension-squared QPIR, we prove the following upper bound of QPIR.

\begin{theo}[Converse of QPIR without collusion] \label{theo:conv22}
Let $\sn,\sk$ be positive integers with $1\le \sn/2 \leq \sk < \sn$.
Then, we have 
    \begin{align}
    \lim_{\epsilon\to 0} \lim_{\sm \to \infty} C_{\sm,\epsilon}^{[\sn,\sk,1]}
	\leq
	2\paren*{1- \frac{\sk}{\sn}}.
    \label{eq:capacity)cc}
	\end{align}
\end{theo}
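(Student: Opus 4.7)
The plan is to adapt the classical converse technique for MDS-coded PIR to the quantum setting, with the \emph{entanglement-assisted classical capacity} of Bennett \emph{et al.}~\cite{bennett1999} substituted for classical Shannon capacity; this substitution is the quantitative origin of the factor-of-two improvement in the bound. The key input is that a quantum system of dimension $\sd$ can convey at most $2\log\sd$ classical bits about any classical random variable, even with unlimited pre-shared entanglement between the parties.

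First, I would fix a subset $\cS \subset [\sn]$ with $|\cS| = \sk$ and prove a server-side secrecy lemma asserting that the $\sk$ servers in $\cS$ reveal only a vanishing amount of information about $\tK$ relative to the file size:
\begin{align*}
I(\tK ; Q_{\cS}, \cA_{\cS}) = o(H(X^{\tK})) \quad \text{as } \sm \to \infty.
\end{align*}
Write the left-hand side as $I(\tK; Q_\cS) + I(\tK; \cA_\cS \mid Q_\cS)$. The first term satisfies $I(\tK; Q_\cS) \le \sum_{s\in\cS} \log|\cQ_s|$, which is negligible against $H(X^{\tK})$ under the standing large-file assumption of the paper. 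The second term vanishes because, given $Q_\cS$, the state $\cA_\cS$ is produced by CPTP maps depending only on $(Q_s, Y_s)$ and the servers' share of $\sigma_{\mathrm{init}}$, while $Y_\cS$ is a linear function of $X$, and $X$ is independent of $\tK$.

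Second, Fano's inequality applied to correctness yields $H(X^{\tK} \mid \cA, \tK, Q^{\tK}) \le \eta(\epsilon, \sm)$ with $\eta \to 0$. Combining this with the secrecy lemma and standard information-theoretic manipulations (chain rule and data-processing), I would obtain
\begin{align*}
H(X^{\tK}) - \eta - o(H(X^{\tK})) \le I\paren*{X^{\tK}; \cA_{\cS^c} \mid \tK, Q^{\tK}, \cA_{\cS}}.
\end{align*}
Invoking the entanglement-assisted capacity bound on the $(\sn-\sk)$-server subsystem $\cA_{\cS^c}$ of dimension $\sd^{\sn-\sk}$, with $\cA_{\cS}$ together with a purifying reference absorbed into the shared-entanglement resource, gives $I(X^{\tK}; \cA_{\cS^c} \mid \tK, Q^{\tK}, \cA_{\cS}) \le 2(\sn-\sk)\log\sd$. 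Dividing by $\sn \log\sd$ and taking $\sm \to \infty$ then $\epsilon \to 0$ delivers $R \le 2(\sn-\sk)/\sn$.

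The hardest step will be justifying the conditional entanglement-assisted capacity bound in the presence of entanglement between $\cA_{\cS}$ and $\cA_{\cS^c}$ through $\sigma_{\mathrm{init}}$: a direct application of the single-channel converse to $\cA_{\cS^c}$ in isolation is not valid. The resolution I envisage is to purify all auxiliary quantum systems and treat the purification of the servers' joint state together with $\cA_{\cS}$ as an unbounded entanglement resource shared between the user and an effective sender controlling $\cA_{\cS^c}$, then apply the single-letter converse of Bennett \emph{et al.}\ to this effective channel. The prescribed order of limits ($\sm \to \infty$ before $\epsilon \to 0$) is forced by the requirement that the $o$-terms vanish faster than the residual Fano error, which itself only vanishes after the inner limit.
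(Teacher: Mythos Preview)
Your overall skeleton matches the paper's: Fano's inequality plus the chain rule reduce the problem to showing that $I(X^{\tk};\cA_{[\sk]}\mid Q^{\tk})$ vanishes as $\sm\to\infty$, after which the complement $\cA_{[\sk+1:\sn]}$ contributes at most $2(\sn-\sk)\log\sd$ bits (in the paper this is the elementary bound $I(\cA_{[\sk+1:\sn]};X^{\tk}\mid \cA_{[\sk]}Q^{\tk})\le 2\log\dim\cA_{[\sk+1:\sn]}$, not the full Bennett \emph{et al.}\ converse, so the step you flag as ``hardest'' is in fact immediate).

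The genuine gap is your server-side secrecy lemma. You bound $I(\tK;Q_{\cS},\cA_{\cS})$, i.e.\ information about the \emph{index}, and then claim ``standard manipulations'' yield a bound on $I(X^{\tK};\cA_{\cS}\mid \tK,Q^{\tK})$, i.e.\ information about the \emph{file}. These are different quantities and the implication does not hold: a subsystem $\cA_{\cS}$ that deterministically encodes, say, $X^{1}$ satisfies $I(\tK;\cA_{\cS}\mid Q_{\cS})=0$ yet has $I(X^{1};\cA_{\cS}\mid Q^{1})=H(X^{1})$. The paper instead bounds $I(\cA_{[\sk]};X^{\tk}\mid Q^{\tk})$ directly. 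The key device (Lemma~\ref{lemm:userinf}) uses single-server user secrecy to symmetrize over files: since $I(\cA_{t}\cH_{\cT};Y_{t}\mid Q^{\tk})\le 2\log\sd$ and, by privacy against one server, each file's share $Y_t^{j}$ contributes equally, one gets $I(\cA_{t}\cH_{\cT};Y_{t}^{\tk}\mid Q^{\tk})\le 2\log\sd/\sm$. An induction on $j\in[\sk]$ (Lemma~\ref{lemm:mutinf}), exploiting the MDS property that $Y_{1},\ldots,Y_{\sk}$ are independent, then propagates this to $I(\cA_{[\sk]};Y_{[\sk]}^{\tk}\mid Q^{\tk})\to 0$. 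Your argument never invokes either the per-server symmetrization or the MDS independence, and both are essential.
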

Theorem~\ref{theo:conv22} will be proved in Section~\ref{subsec:conv2}.
Combining Eq.~\eqref{eq:inclusions}, Theorem~\ref{theo:conv22}, and Theorem~\ref{theo:achieve} for the case $\st=1$, we obtain the second capacity result.
\begin{coro}[{MDS-Q(S)PIR capacity}] \label{coro:QPIR}
Let $\sn, \sk$ be positive integers such that $1 \le \sk \le \sn$.
 For any $C_{\sm,\epsilon} \in \{C_{\sm,\epsilon,\textnormal{stab}}^{[\sn,\sk,1]} 
    ,  C_{\sm,\epsilon,\textnormal{stab}}^{[\sn,\sk,1], \mathrm{s}} 
    , C_{\sm,\epsilon,\textnormal{dim}}^{[\sn,\sk,1]} 
    , C_{\sm,\epsilon,\textnormal{dim}}^{[\sn,\sk,1], \mathrm{s}}
    , C_{\sm,\epsilon}^{[\sn,\sk,1]}
    , C_{\sm,\epsilon}^{[\sn,\sk,1], \mathrm{s}}\}$,
    \begin{align}
    \lim_{\epsilon\to 0} \lim_{\sm \to \infty} C_{\sm,\epsilon}
	= 
	\begin{cases}
	1  &   \text{if $\sk \leq \sn/2$},\\
	2\paren*{1- \frac{\sk}{\sn}}
        & \text{otherwise}.
    \end{cases}
    \label{eq:capacity}
	\end{align}
\end{coro}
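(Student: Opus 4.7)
The plan is to sandwich all six capacities between matching upper and lower bounds by combining the non-colluding converse of Theorem~\ref{theo:conv22}, the achievability of Theorem~\ref{theo:achieve}, and the inclusion chain~\eqref{eq:inclusions}. Since \eqref{eq:inclusions} orders the six capacities, with the stabilizer symmetric capacity smallest and the general non-symmetric capacity largest, it suffices to upper-bound the latter and lower-bound the former by the right-hand side of~\eqref{eq:capacity}; equality for the four intermediate classes then follows automatically.

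For the converse I split into two regimes. When $\sn/2 \leq \sk < \sn$, Theorem~\ref{theo:conv22} directly yields $\lim_{\epsilon\to 0}\lim_{\sm\to\infty} C^{[\sn,\sk,1]}_{\sm,\epsilon} \leq 2(1-\sk/\sn)$, and the inclusion chain~\eqref{eq:inclusions} propagates this bound to all six classes. When $\sk \leq \sn/2$, the target bound is $1$, and I handle it via a standard Holevo-plus-Fano argument: correctness forces $I(X^\tK;\hat{X}^\tK\mid Q^\tK,\tK) \geq (1-\epsilon)H(X^\tK)-h(\epsilon)$, while by quantum data processing and the Holevo bound applied to the user's $\sn$-qudit response system (of total dimension $\sd^\sn$) we have $I(X^\tK;\hat{X}^\tK\mid Q^\tK,\tK) \leq \sn\log\sd$; rearranging gives $R(\Phi) \leq 1$ in the asymptotic regime $\sm\to\infty$, $\epsilon\to 0$.

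For the achievability I must exhibit a stabilizer QSPIR scheme (the smallest class in~\eqref{eq:inclusions}) matching the desired rate. When $\sk \geq \sn/2$, Theorem~\ref{theo:achieve} with $\st=1$ immediately provides such a scheme of rate $2(1-\sk/\sn)$. When $\sk \leq \sn/2$ and $\sn$ is even, I mimic the padding argument from Corollary~\ref{coro:TQPIR}: set the virtual collusion parameter $\st' = \sn/2 - \sk + 1 \geq 1$, so that $\sk+\st'-1 = \sn/2$; Theorem~\ref{theo:achieve} then supplies a stabilizer QSPIR scheme secure against $\st'$-collusion with rate $2(1-(\sk+\st'-1)/\sn) = 1$, and $\st'$-secrecy trivially implies the required $1$-secrecy. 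For odd $\sn$, restrict the scheme to any $\sn-1$ of the servers and apply the same construction, still obtaining rate $1$ asymptotically.

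Combining the two bounds through~\eqref{eq:inclusions} gives equality for all six capacities, establishing~\eqref{eq:capacity}. The main subtlety to track is that the scheme produced by Theorem~\ref{theo:achieve} must genuinely lie in the smallest class of~\eqref{eq:inclusions}, i.e., must be simultaneously stabilizer, induced from strongly linear PIR, and symmetric; this is a structural property of the construction carried out in Section~\ref{sec:achieve} and does not require additional argument here.
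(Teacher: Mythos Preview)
Your proposal is correct and follows essentially the same route as the paper: sandwich all six capacities via the chain~\eqref{eq:inclusions}, upper-bound the largest ($C_{\sm,\epsilon}^{[\sn,\sk,1]}$) by Theorem~\ref{theo:conv22} for $\sk\ge\sn/2$ and by the trivial Holevo/Fano bound for $\sk\le\sn/2$, and lower-bound the smallest ($C_{\sm,\epsilon,\mathrm{stab}}^{[\sn,\sk,1],\mathrm{s}}$) by Theorem~\ref{theo:achieve} with $\st=1$, invoking the padding-to-larger-$\st'$ argument (and the restriction to $\sn-1$ servers for odd $\sn$) exactly as in the discussion after Corollary~\ref{coro:TQPIR}. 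You spell out the Holevo step and the padding more explicitly than the paper does, but the logic is identical.
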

In Corollary~\ref{coro:QPIR},
    the smallest capacity in the six capacities is $C_{\sm,\epsilon,\textnormal{stab}}^{[\sn,\sk,1], \mathrm{s}}$ from \eqref{eq:inclusions}, and this value is asymptotically lower bounded by the RHS of \eqref{eq:capacity} from Theorem~\ref{theo:achieve}.
On the other hand, the greatest capacity is $C_{\sm,\epsilon}^{[\sn,\sk,1]}$, which is upper bounded by the RHS of \eqref{eq:capacity} from Theorem~\ref{theo:conv22}.

\section{Achievability} \label{sec:achieve}

We will frequently deal with $\sm \b \times 2\sn$ matrices, where sub-blocks of $\beta$ rows and the pair of columns $s$ and $\sn+s$ semantically belong together. We therefore index such a matrix $\bY$ by two pairs of indices $(i,b),\ i\in[\sm],\ b\in[\beta]$ and $(p,s),\ p\in [2],\ s\in [\sn]$, where $\bY^{i,b}_{p,s}$ denotes the symbol in row $(i-1)\beta + b$ and column $(p-1)\sn+s$, \ie the symbol in the $b$-th row of the $i$-th sub-block of rows and the $s$-th column of the $p$-th sub-block of columns. Omitting of an index implies that we take all positions, \ie $\bY^{i}$ denotes the $i$-th sub-block of $\b$ rows, $\bY^{i,b}$ the row $(i-1)\beta + b$, $\bY_{p}$ the $p$-th sub-block of $\sn$ columns, and $\bY_{p,s}$ the column $(p-1)\sn+s$. For the reader's convenience, we sometimes imply the separation of the sub-blocks of columns by a vertical bar in the following. We denote by $\be^\lambda_\gamma$ the standard basis column vector of length $\lambda$ in $\Fq^\lambda$ with a 1 in position $\gamma \in [\lambda]$. Given $a \in [\a],\ b \in [\b]$, it will help our notation to call \emph{coordinate} $(a,b)$ the position $\b (a - 1) + b$ in a vector of length $\a\b$. For instance, $\be^{2\cdot 3}_{(2,1)}=\be^6_4=(0,0,0,1,0,0)$. For a zero matrix $\mathbf{0}$ and matrices $\bM_1,\bM_2 \in \Fq^{\mu \times \nu}$
\[
\diag{\bM_1,\bM_2} = \ppmatrix{\bM_1 & \mathbf{0} \\ \mathbf{0} & \bM_2} \in \Fq^{2\mu \times 2\nu}.
\]
For a matrix $\bM$, the space spanned by the rows of $\bM$ is denoted by $\langle \bM\rangle_{\mathsf{row}}$.

For two vectors $\bc,\bd\in \Fq^\sn$ we define the (Hadamard-) star-product as $\bc\star \bd = (c_1 d_1,c_2 d_2,\ldots,c_\sn d_\sn)$. For two codes $\cC, \cD \subseteq \F^{\sn}$ we denote $\cC \star \cD = \langle \{ \bc \star \bd \ | \ \bc\in \cC, \bd\in \cD \} \rangle$.
Observe that, as the star-product is an element-wise operation, we have 
\begin{equation}\label{eq:starCartesian}
    \paren{\cC \times \cC} \star \paren{\cD \times \cD} = \paren{\cC \star \cD} \times \paren{\cC \star \cD} \ .
\end{equation}

\subsection{Generalized Reed--Solomon codes}

We consider systems encoded with (the Cartesian product of) Generalized Reed--Solomon (GRS) codes (cf.~\cite[Ch.~10]{macwilliams1977theory}), a popular class of MDS codes.

\begin{defi}
Let $\cL = \{ \a_i \in \Fq : i \in [n] \}$ and $\cM = \{ \b_i \in \Fq : i \in [n] \}$ be the sets of the \emph{code locators} and of the \emph{column multipliers}, respectively. The \emph{Generalized Reed--Solomon} (GRS) code $\cC$ of dimension $k$ is given by
\[
\cC = \{ (\b_1 f(\a_1),\ldots, \b_n f(\a_n))\ :\ f \in \Fq[x],\deg(f)<k \}.
\]
\end{defi}

Among coded storage systems, these have proven to be particularly well-suited for PIR and general schemes exist for a wide range of parameters \cite{tajeddine2018private, freij2017private, tajeddine2019private}. The key idea is to design the queries such that the retrieved symbols are the sum of a codeword of another GRS code (of higher dimension), which we refer to as the \emph{star-product code}, plus a vector depending only on the desired file. To obtain the desired file, the codeword part is projected to zero, leaving only the desired part of the responses. In the QPIR system we consider in the following, this projection is part of the quantum measurement. This imposes a constraint on this star-product code, namely, that the code is (weakly) self-dual. In the following, we collect/establish the required theoretical results on GRS codes and their star-products.

\begin{defi}[Weakly self-dual code]
  \label{def:WeaklySelfDualCode}
  We say that an $[\sn,\sk]$ code $\cC$ is \emph{weakly self-dual} if $\cC^\perp \subseteq \cC$ and \emph{self-dual} if $\cC^\perp = \cC$. It is easy to see that any such code with parity-check matrix $\bH$ has a generator matrix of the form $\bG = (\bH^\top \ \ \bF^\top)^\top$ for some $(2\sk-\sn)\times \sn$ matrix $\bF$.
\end{defi}

\begin{lemm}[Follows from {\cite[Theorem~3]{grass2008self}}]\label{lem:existenceSelfDual}
  For $q=2^r$ there exist self-dual GRS $[2\sk,\sk]$ codes over $\Fq$ for any $\sk\in [2^{r-1}]$ and code locators $\cL$.
\end{lemm}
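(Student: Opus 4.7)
The plan is to reduce the existence of a self-dual GRS code with prescribed code locators to a system of scalar equations on the column multipliers, and then to exploit the fact that in characteristic two every field element has a unique square root, so that this system is always solvable.

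First, I would invoke the classical description of the dual of a Generalized Reed--Solomon code (see, e.g., \cite[Ch.~11]{macwilliams1977theory}): writing $v'(\a_i) \coloneqq \prod_{j\neq i}(\a_i - \a_j)$, one has $\mathrm{GRS}_\sk(\cL, \cM)^\perp = \mathrm{GRS}_{n-\sk}(\cL, \cM^\perp)$, where the dual column multipliers are $\b_i^\perp = 1/(\b_i\, v'(\a_i))$. Specializing to $n=2\sk$, so that the dual has the same dimension as the primal code, self-duality is equivalent to $\mathrm{GRS}_\sk(\cL, \cM) = \mathrm{GRS}_\sk(\cL, \cM^\perp)$. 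Two GRS codes with identical code locators coincide if and only if their column multipliers are proportional by a single global scalar, so self-duality holds if and only if there exists $c \in \Fq^\times$ with $\b_i^{\,2} \cdot v'(\a_i) = c$ for every $i \in [2\sk]$.

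Second, I would explicitly solve this system under the assumption $q=2^r$. In characteristic two, the Frobenius map $x\mapsto x^2$ is a field automorphism of $\Fq$, so every nonzero element admits a unique square root. Since the $\a_i$ are distinct, each $v'(\a_i)$ is a nonzero element of $\Fq$; fixing $c=1$, I define $\b_i$ to be the unique square root of $1/v'(\a_i)$ in $\Fq^\times$. The resulting multipliers are all nonzero, so they yield a valid GRS code that is self-dual by construction. The hypothesis $\sk \in [2^{r-1}]$ is used only to guarantee that $2\sk \le q$, so that the prescribed set $\cL$ of $2\sk$ distinct code locators fits inside $\Fq$.

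The main technical point to keep track of is the sign convention in the dual-code formula, which in odd characteristic would introduce a parity obstruction of the form $(-1)^{n-1}$ and further conditions coming from possibly nonexistent square roots; here both disappear because $-1=1$ in $\Fq$ and because the Frobenius is surjective on $\Fq$. There is thus no genuine obstacle. Alternatively, one may directly quote \cite[Theorem~3]{grass2008self}, which establishes existence of self-dual MDS codes in even characteristic for every admissible length and yields the present statement as a special case.
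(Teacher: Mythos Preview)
Your argument is correct. The paper itself does not give a proof of this lemma; it simply records it as a consequence of \cite[Theorem~3]{grass2008self} and moves on. You instead supply a self-contained argument via the explicit description of the dual of a GRS code and the surjectivity of Frobenius in characteristic two. This is more informative than a bare citation: it makes transparent why even characteristic is essential (square roots of the $1/v'(\alpha_i)$ always exist, and the sign $(-1)^{n-1}$ that would appear in odd characteristic is harmless), and it shows that the multipliers can be chosen explicitly for \emph{any} prescribed locator set $\cL$, which is exactly the form in which the lemma is used downstream in Lemma~\ref{lem:existenceSelfDualAnyk} and Lemma~\ref{lem:SelfDualStarCode}. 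One small remark: for the existence direction you only need the trivial implication ``proportional multipliers $\Rightarrow$ equal GRS codes,'' so the converse statement about when two GRS codes coincide is not actually required and could be omitted.
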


\begin{lemm}\label{lem:existenceSelfDualAnyk}
  Let $q$ be even with $q\geq \sn$. Then there exists a weakly self-dual $[\sn,\sk]$ GRS code $\cC$ for any integer $\sk\geq \frac{\sn}{2}$ and code locators~$\cL$.
\end{lemm}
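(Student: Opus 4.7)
The plan is to handle the cases of even and odd $\sn$ separately, reducing both to Lemma~\ref{lem:existenceSelfDual}, and then to ``pad up'' the dimension by enlarging the polynomial evaluation space.

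First, suppose $\sn$ is even. Since $q$ is even with $q\geq \sn$, we have $\sn/2 \leq q/2 = 2^{r-1}$ where $q=2^r$, so Lemma~\ref{lem:existenceSelfDual} produces a self-dual GRS code $\cC_0 = \GRS{\sn/2}{\cL,\cM}$ for any prescribed code locators $\cL$ (and some suitable column multipliers $\cM$). For $\sk \geq \sn/2$, define $\cD \coloneqq \GRS{\sk}{\cL,\cM}$ using the same locators and multipliers. Since the space of polynomials of degree less than $\sn/2$ is contained in that of polynomials of degree less than $\sk$, we obtain $\cC_0 \subseteq \cD$. Dualizing yields $\cD^\perp \subseteq \cC_0^\perp = \cC_0 \subseteq \cD$, so $\cD$ is weakly self-dual.

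Second, suppose $\sn$ is odd. Because $q$ is a power of $2$ and $q \geq \sn$, in fact $q \geq \sn+1$, so $(\sn+1)/2 \leq q/2$ and Lemma~\ref{lem:existenceSelfDual} produces a self-dual GRS code $\cC_0$ of length $\sn+1$ and dimension $(\sn+1)/2$, with an arbitrary locator set of size $\sn+1$. I would then shorten $\cC_0$ at one coordinate $\alpha_{\sn+1}$ to obtain a code $\cC_1$ of length $\sn$ and dimension $(\sn-1)/2$. The key observation, which I would establish by a short direct computation, is that shortening a GRS code produces a GRS code: writing $f(x)=(x-\alpha_{\sn+1})g(x)$ with $\deg g < k-1$ for any $f$ vanishing at $\alpha_{\sn+1}$, one sees that $\cC_1 = \GRS{(\sn-1)/2}{\cL',\cM'}$ where $\cL'$ omits $\alpha_{\sn+1}$ and the column multipliers are adjusted by factors $(\alpha_i - \alpha_{\sn+1})$. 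Since shortening of a self-dual code is self-orthogonal (a dimension count: $\dim \cC_1 = (\sn-1)/2$ and $\dim \cC_1^\perp = (\sn+1)/2$, with $\cC_1 \subseteq \cC_1^\perp$ following from $\cC_0 \subseteq \cC_0^\perp$), setting $\cD_0 \coloneqq \cC_1^\perp$ gives a weakly self-dual GRS $[\sn,(\sn+1)/2]$ code. Since $\sk \geq \sn/2$ with $\sn$ odd forces $\sk \geq (\sn+1)/2$, I would then enlarge $\cD_0$ to $\cD \coloneqq \GRS{\sk}{\cL',\cM''}$ with the same locators and multipliers as $\cD_0$, and conclude weak self-duality of $\cD$ by the same chain of inclusions as in the even case.

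The main obstacle is the odd-$\sn$ case, since Lemma~\ref{lem:existenceSelfDual} only provides self-dual codes of even length, and puncturing or shortening generally destroys exact self-duality. The crucial technical ingredient is therefore that shortening preserves the GRS structure, which allows me to trade one dimension of self-duality for the parity mismatch, producing a weakly self-dual GRS code of odd length and near-half dimension. Once this base case is in hand, the monotonicity argument (raising the degree bound enlarges the code and shrinks its dual) extends the construction to all $\sk \geq \sn/2$ in a uniform way.
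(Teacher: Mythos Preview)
Your proof is correct and follows essentially the same plan as the paper: identical treatment of even $\sn$, and for odd $\sn$ reduce to a self-dual $[\sn+1,(\sn+1)/2]$ GRS code at locators $\cL\cup\{\alpha\}$ and remove the extra coordinate. The only cosmetic difference is that the paper \emph{punctures} the self-dual code at $\alpha$ to land directly on a weakly self-dual $[\sn,\lceil\sn/2\rceil]$ GRS code (punctured GRS is trivially GRS with the restricted locators and multipliers), whereas you \emph{shorten} to get a self-orthogonal $[\sn,\lfloor\sn/2\rfloor]$ GRS code and then pass to its dual; since puncturing and shortening are dual operations, your $\cD_0=\cC_1^\perp$ is exactly the paper's punctured code, and you implicitly use the standard fact that the dual of a GRS code is GRS to justify the final enlargement step.
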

\begin{proof}
    First consider the case of even $n$.
  Let $\cS$ be an $[\sn,\sn/2]$ self-dual GRS code with code locators $\mathcal{L} \subseteq \F_{q}$, as shown to exist in \cite[Theorem~3]{grass2008self} (see Lemma~\ref{lem:existenceSelfDual}). It is easy to see that this code is a subcode of the $[\sn,\sk]$ GRS code $\cC$ with the same locators and column multipliers. The property $\cC^\perp \subset \cC$ follows directly from observing that $\cC^\perp \subseteq \cS^\perp = \cS \subseteq \cC$.
  
  Now consider the case of odd $\sn$. First, observe that this implies $\sn<q$ and $\ceil{\frac{\sn}{2}} = \frac{\sn+1}{2}$. Then, by Lemma~\ref{lem:existenceSelfDual}, there exists a self-dual $[\sn+1,\ceil{\frac{\sn}{2}}]$ GRS code $\cS'$ with code locators $\cL' = \cL \cup \{\alpha\}$, where $\alpha \in \F_q \setminus \cL$. Let $j\in [\sn+1]$ be the index of the position corresponding to $\alpha$. 
  Now consider the code $\cC$ obtained from puncturing this position $j$, \ie the set
  \begin{align*}
      \cS = \{ \bc_{[\sn+1]\setminus \{j\}}  \ | \ \bc \in \cS' \} \ .
  \end{align*}
  It is well-known that the operation dual to puncturing is shortening and therefore the corresponding $[\sn,\ceil{\frac{\sn}{2}}-1]$ dual code $\cS^\perp$ is given by
  \begin{align*}
      \cS^\perp &= \{ \bc_{[\sn+1]\setminus \{j\}}  \ | \ c_j = 0, \bc \in (\cS')^\perp \} \\
      &=  \{ \bc_{[\sn+1]\setminus \{j\}}  \ | \ c_j = 0, \bc \in \cS' \} \ .
  \end{align*}
  Clearly, this operation preserves the weak duality, \ie $ \cS^\perp \subset  \cS$. Again, it is easy to see that $\cS$ is a subcode of the $[\sn,\sk]$ GRS code $\cC$ with the same locators and column multipliers for any $\sk \geq \frac{\sn}{2}$. The statement follows from observing that we have $\cC^\perp \subseteq \cS^\perp \subset \cS \subseteq \cC$
\end{proof}

\begin{lemm}
 Let $q$ be even with $q\geq n$. For any $[\sn,\sk]$ GRS code $\cC$ there exists an $[\sn,\st]$ GRS code $\cD$ such that their star-product $\cS = \cC \star \cD$ is an $[\sn,\sk+\st-1]$ weakly self-dual GRS code.
  \label{lem:SelfDualStarCode}
\end{lemm}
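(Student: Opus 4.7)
The plan is to reduce this to the existence result for self-dual GRS codes already established in Lemma~\ref{lem:existenceSelfDualAnyk}. Given the input $[\sn,\sk]$ GRS code $\cC$ with locators $\cL = \{\alpha_i\}$ and column multipliers $\{\beta_i\}$, I would first invoke Lemma~\ref{lem:existenceSelfDualAnyk} (noting that this requires the implicit assumption $\sk+\st-1 \geq \sn/2$, as elsewhere in the achievability section) to obtain a weakly self-dual $[\sn,\sk+\st-1]$ GRS code $\cS$ \emph{with the same locators} $\cL$. Let $\{\mu_i\}$ denote its column multipliers. The crucial point is that the lemma lets us prescribe the locator set.

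Next, I would define $\cD$ to be the $[\sn,\st]$ GRS code with locators $\cL$ and column multipliers $\gamma_i := \mu_i / \beta_i$. These ratios are well-defined and nonzero since column multipliers of a GRS code are by definition nonzero, so $\cD$ is indeed a legitimate GRS code.

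The final step is to verify $\cC \star \cD = \cS$. For the inclusion $\cC \star \cD \subseteq \cS$, take a generator of the star-product of the form $\bc \star \bd$ with $\bc = (\beta_i f(\alpha_i))_i$, $\deg f < \sk$ and $\bd = (\gamma_i g(\alpha_i))_i$, $\deg g < \st$; then $\bc \star \bd = (\mu_i (fg)(\alpha_i))_i$ with $\deg(fg) < \sk+\st-1$, which lies in $\cS$. For the reverse inclusion, observe that the products $x^{i} \cdot x^{j}$ for $0 \leq i < \sk$, $0 \leq j < \st$ span the space of polynomials of degree less than $\sk+\st-1$, so every codeword of $\cS$ is a sum of star-products of codewords from $\cC$ and $\cD$. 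Hence $\cS = \cC \star \cD$ is weakly self-dual, completing the proof.

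The main obstacle is essentially absent once Lemma~\ref{lem:existenceSelfDualAnyk} is in hand: the nontrivial content (existence of a weakly self-dual GRS code of the prescribed length, dimension, and locators) has been done there, and the construction of $\cD$ is forced by the requirement $\beta_i \gamma_i = \mu_i$. The only thing to be careful about is ensuring that the regime $\sk+\st-1 \geq \sn/2$ is invoked correctly, since Lemma~\ref{lem:existenceSelfDualAnyk} does not produce weakly self-dual codes of dimension below $\sn/2$.
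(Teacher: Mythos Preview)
Your proposal is correct and follows essentially the same approach as the paper: invoke Lemma~\ref{lem:existenceSelfDualAnyk} to get a weakly self-dual $[\sn,\sk+\st-1]$ GRS code $\cS$ with the prescribed locators and multipliers $\{\mu_i\}$, then define $\cD$ via the multipliers $\gamma_i = \mu_i/\beta_i$. The only difference is that where you verify $\cC\star\cD = \cS$ by hand via the two inclusions, the paper simply cites \cite{mirandola2015critical} for the fact that the star-product of two GRS codes with common locators is the GRS code with the product multipliers and summed-minus-one dimension; your explicit argument is a fine substitute for that citation. Your remark about the implicit hypothesis $\sk+\st-1 \ge \sn/2$ is also well taken.
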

\begin{proof}
  By \cite{mirandola2015critical} the star product between an $[\sn,\sk]$ GRS code $\cC$ with column multipliers $\cM_\cC$ and an $[\sn,\st]$ GRS code $\cD$ with column multipliers $\cM_\cD$, both with the same locators $\cL$, is the $[\sn,\sk+\st-1]$ GRS code with column multipliers $\cM_\cC\star \cM_\cD$ and code locators $\cL$. Denote by $\cM_\cS$ the column multipliers of a weakly-self dual $[\sn,\sk+\st-1]$ GRS code with code locators $\cL$, as shown to exist in Lemma~\ref{lem:existenceSelfDualAnyk}. Then, the lemma statement follows from setting $\cM_\cD = (\cM_\cC)^{-1}\star \cM_\cS$, where we denote by $(\cM_\cC)^{-1}$ the element-wise inverse of $\cM_\cC$.
\end{proof}

\subsection{Description of the coded QPIR scheme}
\label{sec:scheme}

In this subsection we describe the required preliminaries for the capacity-achieving QPIR scheme. Afterwards, we give a compact list of the steps followed by the protocol.

\textbf{Storage.} We consider a linear code $\cC$ of length $2\sn$ and dimension $2\sk$, which is the Cartesian product of an $[\sn,\sk]$ GRS code $\cC'$ over $\Fq$ with itself\footnote{We choose this description of the storage code because this structure is required for the quantum PIR scheme. However, note that the system can equivalently be viewed as being encoded with an $[\sn,\sk]$ code over $\F_{q^2}$, where each of the servers stores one column of the resulting codeword matrix.}, \ie $\cC = \cC' \times \cC'$. It therefore has a generator matrix $\bG_\cC = \diag{\bG_{\cC'},\bG_{\cC'}}$, where $\bG_{\cC'}$ is a generator matrix of $\cC'$. The $\sm\beta \times 2\sn$ matrix of encoded files is given by $\bY = \bX \cdot \bG_\cC$. Server $s \in [\sn]$ stores columns $s$ and $\sn+s$ of $\bY$, \ie it stores $\bY_{1,s}$ and $\bY_{2,s}$ (for an illustration see Figure~\ref{fig:DSS2}). 
For a given integer $c$, 
    which will be defined in the next paragraph,     the parameter $\b$ is fixed to $\b = \lcm(c,\sk)/\sk$.

\begin{figure*}[ht]
\centering
\begin{tikzpicture}[thick,scale=0.9, every node/.style={transform shape}]
\path
(4.35,0) node{
	$\begin{pmatrix}[ccc|ccc]
	    \bX_{1,1}^{1,1}  & \cdots & \bX_{1,\sk}^{1,1} & \bX_{2,1}^{1,1}  & \cdots & \bX_{2,\sk}^{1,1}  \\
	    \vdots & \ddots & \vdots & \vdots & \ddots & \vdots \\
	    \bX_{1,1}^{1,\b} & \cdots & \bX_{1,\sk}^{1,\b} & \bX_{2,1}^{1,\b} & \cdots & \bX_{2,\sk}^{1,\b} \\ \hline
	    \vdots & \vdots & \vdots & \vdots & \vdots & \vdots \\ \hline
	    \bX_{1,1}^{\sm,1}  & \cdots & \bX_{1,\sk}^{\sm,1} & \bX_{2,1}^{\sm,1}  & \cdots & \bX_{2,\sk}^{\sm,1} \\
	    \vdots & \ddots & \vdots & \vdots & \ddots & \vdots \\
	    \bX_{1,1}^{\sm,\b} & \cdots & \bX_{1,\sk}^{\sm,\b} & \bX_{2,1}^{\sm,\b} & \cdots & \bX_{2,\sk}^{\sm,\b}
	\end{pmatrix} \quad \cdot \bG_\cC = \quad
	\begin{pmatrix}[ccc|ccc]
	    \bY_{1,1}^{1,1}  & \cdots & \bY_{1,\sn}^{1,1} & \bY_{2,1}^{1,1}  & \cdots & \bY_{2,\sn}^{1,1}  \\
	    \vdots & \ddots & \vdots & \vdots & \ddots & \vdots \\
	    \bY_{1,1}^{1,\b} & \cdots & \bY_{1,\sn}^{1,\b} & \bY_{2,1}^{1,\b} & \cdots & \bY_{2,\sn}^{1,\b} \\ \hline
	    \vdots & \vdots & \vdots & \vdots & \vdots & \vdots \\ \hline
	    \bY_{1,1}^{\sm,1}  & \cdots & \bY_{1,\sn}^{\sm,1} & \bY_{2,1}^{\sm,1}  & \cdots & \bY_{2,\sn}^{\sm,1} \\
	    \vdots & \ddots & \vdots & \vdots & \ddots & \vdots \\
	    \bY_{1,1}^{\sm,\b} & \cdots & \bY_{1,\sn}^{\sm,\b} & \bY_{2,1}^{\sm,\b} & \cdots & \bY_{2,\sn}^{\sm,\b}
	\end{pmatrix}$
}
(-3.8,1.1) node[blue]{file 1}
(-3.8,-1) node[blue]{file $\sm$}
(6.1,-2.3) node[orange]{\small $\text{\rmfamily\scshape server}\, 1$}
(7.92,-2.33) node[orange]{\small $\text{\rmfamily\scshape server}\, \sn$}
(9.38,-2.3) node[orange]{\small $\text{\rmfamily\scshape server}\, 1$}
(11.2,-2.33) node[orange]{\small $\text{\rmfamily\scshape server}\, \sn$}
;
\draw[thin,blue,rounded corners=4pt] (-2.4,0.3) rectangle (3.3,2.05);
\draw[thin,blue,rounded corners=4pt] (-2.4,-.3) rectangle (3.3,-2.05);
\draw[thin,orange,rounded corners=4pt] (5.4,-2.05) rectangle (6.21,2.05);
\draw[thin,orange,rounded corners=4pt] (7.31,-2.05) rectangle (8.13,2.05);
\draw[thin,orange,rounded corners=4pt] (8.41,-2.05) rectangle (9.23,2.05);
\draw[thin,orange,rounded corners=4pt] (10.32,-2.05) rectangle (11.14,2.05);
\end{tikzpicture}
\caption{Illustration of a DSS storing $\sm$ files, each consisting of $2 \b \sk$ symbols. The matrix $\bG_\cC$ is a generator matrix of a $[2\sn,2\sk]$ code $\cC$. }
\label{fig:DSS2}
\end{figure*}
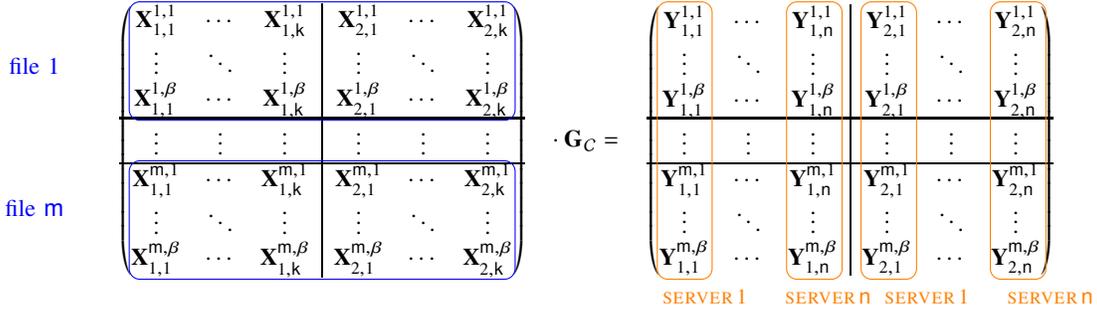

\textbf{Query and Star-Product Code.} 
Let $\st$ be the collusion parameter with $\frac{\sn}{2} \leq \sk+\st-1 < \sn$. By Lemma~\ref{lem:SelfDualStarCode} there exists an $[\sn,\st]$ GRS code $\cD'$ such that $\cS' = \cC' \star \cD'$ is an $[\sn,\sk+\st-1]$ weakly self-dual GRS code. We define the query code as the Cartesian product $\cD = \cD' \times \cD'$. Thus, for a generator matrix $\bG_{\cD'}$ of $\cD'$, the matrix $\bG_\cD = \diag{\bG_{\cD'},\bG_{\cD'}} \in \Fq^{2\st \times 2\sn}$ is a generator matrix of $\cD$.

Define $\cS = \cC \star \cD$ and $\cS' = \cC' \star \cD'$. By \eqref{eq:starCartesian} we have $\cS = \cC \star \cD = \cS' \times \cS'$, so $\cS$ is the Cartesian product of two star product codes. Define $c=d_{\cS'}-1$, where $d_{\cS'} = \sn-\sk-\st+2$ is the minimum distance of $\cS'$.

Let $\bH_{\cS'} \in \Fq^{(\sn - \sk - \st + 1) \times \sn}$ be a parity-check matrix of $\cS'$. By Definition~\ref{def:WeaklySelfDualCode}, the code $\cS'$ has a generator matrix of the form $\bG_{\cS'} = (\bH_{\cS'}^\top \ \  \bF_{\cS'}^\top)^\top$ for some $\bF_{\cS'} \in \Fq^{[2(\sk + \st - 1) - \sn] \times \sn}$. Hence, $\cS$ has a generator matrix of form
\begin{equation}
    \label{eq:StarGenMatrix}
    \bG_\cS = \ppmatrix{\diag{\bH_{\cS'},\bH_{\cS'}} \\ \diag{\bF_{\cS'},\bF_{\cS'}}} \in \Fq^{2(\sk+\st-1) \times 2\sn}.
\end{equation}

\begin{lemm}
\label{lem:MeasurementMatrix}
Let $\bG_\cS$ be the matrix defined in Eq.~\eqref{eq:StarGenMatrix} and let $\bH_\cS$ be the submatrix of $\bG_\cS$ containing its first $2(\sn-\sk-\st+1)$ rows. Let $\bw_1,\ldots,\bw_{2\sn}$ be the column vectors of $\bG_\cS$. Then, they satisfy conditions (a) and (b) of \cite[Lemma 2]{song2020colluding}, \ie
\begin{itemize}
    \item[(a)] $\bw_{\pi(1)},\ldots,\bw_{\pi(\sk+\st-1)},\bw_{\pi(1) + n},\ldots,\bw_{\pi(\sk+\st-1) + n}$ are linearly independent for any permutation $\pi$ of $[\sn]$.

    \item[(b)] $\bH_\cS \bJ^\top \bG_\cS^\top = \mathbf{0}$.
\end{itemize}
\end{lemm}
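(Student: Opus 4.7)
The plan is to exploit the block-diagonal structure of $\bG_\cS$ and to compute the product in (b) explicitly, reducing both claims to standard properties of the star-product code $\cS'$.

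First, observe that (up to a permutation of its rows that interleaves the blocks) the matrix $\bG_\cS$ in \eqref{eq:StarGenMatrix} equals $\diag{\bG_{\cS'},\bG_{\cS'}}$, where $\bG_{\cS'} = (\bH_{\cS'}^\top \ \  \bF_{\cS'}^\top)^\top$ is a generator matrix of the $[\sn,\sk+\st-1]$ GRS code $\cS'$. Condition (a) is invariant under such row permutations. With this rearrangement, the column $\bw_{\pi(i)}$ for $i\in[\sk+\st-1]$ has its last $\sk+\st-1$ coordinates equal to zero and its top part equal to column $\pi(i)$ of $\bG_{\cS'}$; analogously, $\bw_{\pi(i)+\sn}$ is supported in the lower block. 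Hence linear independence of $\{\bw_{\pi(i)}\}_{i}\cup\{\bw_{\pi(i)+\sn}\}_{i}$ is equivalent to linear independence of the $\sk+\st-1$ columns indexed by $\pi(1),\ldots,\pi(\sk+\st-1)$ of $\bG_{\cS'}$, within each block. Since $\cS'$ is a GRS code of dimension $\sk+\st-1$, hence MDS, any $\sk+\st-1$ of its columns are linearly independent, which proves (a).

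For (b), I will compute $\bH_\cS \bJ^\top \bG_\cS^\top$ block-wise. With the same rearrangement, $\bH_\cS = \diag{\bH_{\cS'},\bH_{\cS'}}$ and
\[
\bJ^\top = \ppmatrix{\mathbf{0} & \bI_\sn \\ -\bI_\sn & \mathbf{0}},
\qquad
\bG_\cS^\top = \ppmatrix{\bH_{\cS'}^\top & \mathbf{0} & \bF_{\cS'}^\top & \mathbf{0} \\ \mathbf{0} & \bH_{\cS'}^\top & \mathbf{0} & \bF_{\cS'}^\top}.
\]
A direct computation gives
\[
\bH_\cS \bJ^\top = \ppmatrix{\mathbf{0} & \bH_{\cS'} \\ -\bH_{\cS'} & \mathbf{0}},
\]
so
\[
\bH_\cS \bJ^\top \bG_\cS^\top = \ppmatrix{\mathbf{0} & \bH_{\cS'} \bH_{\cS'}^\top & \mathbf{0} & \bH_{\cS'} \bF_{\cS'}^\top \\ -\bH_{\cS'} \bH_{\cS'}^\top & \mathbf{0} & -\bH_{\cS'} \bF_{\cS'}^\top & \mathbf{0}}.
\]
Thus (b) reduces to showing $\bH_{\cS'} \bH_{\cS'}^\top = \mathbf{0}$ and $\bH_{\cS'} \bF_{\cS'}^\top = \mathbf{0}$.

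Both identities follow from the hypotheses packaged into the construction. The rows of $\bH_{\cS'}$ span $\cS'^{\perp}$, while the rows of $\bG_{\cS'}$, in particular the rows of $\bF_{\cS'}$, lie in $\cS'$. Since by Lemma~\ref{lem:SelfDualStarCode} the code $\cS'$ is weakly self-dual, we have $\cS'^{\perp} \subseteq \cS'$, and therefore the rows of $\bH_{\cS'}$ are codewords of $\cS'$; hence every row of $\bH_{\cS'}$ is orthogonal to every row of $\bH_{\cS'}$, giving $\bH_{\cS'} \bH_{\cS'}^\top = \mathbf{0}$. For the second identity, the rows of $\bF_{\cS'}$ lie in $\cS'$ while the rows of $\bH_{\cS'}$ lie in $\cS'^{\perp}$, so $\bH_{\cS'} \bF_{\cS'}^\top = \mathbf{0}$ by definition of the dual code. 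This establishes (b) and completes the proof. The only step requiring any real input is the weak self-duality of $\cS'$, which is exactly what was arranged by the choice of $\cD'$ via Lemma~\ref{lem:SelfDualStarCode}; everything else is book-keeping on the block structure.
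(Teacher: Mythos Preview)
Your proof is correct and follows essentially the same approach as the paper. For (a), both you and the paper reduce to the MDS property of $\cS'$ via the block-diagonal structure; for (b), the paper simply notes that $\bH_\cS \bG_\cS^\top = \mathbf{0}$ since $\bH_\cS$ is a parity-check matrix of the code generated by $\bG_\cS$, implicitly relying on the fact that the block-diagonal form of $\bH_\cS$ makes multiplication by $\bJ^\top$ a row-space-preserving operation, whereas you carry out the block computation explicitly and isolate the two identities $\bH_{\cS'}\bH_{\cS'}^\top=\mathbf{0}$ and $\bH_{\cS'}\bF_{\cS'}^\top=\mathbf{0}$, tracing the first to weak self-duality of $\cS'$ and the second to the defining relation of the dual. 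One minor expository slip: in part (b) you write ``with the same rearrangement,'' but the $\bG_\cS^\top$ you display is the transpose of the \emph{original} matrix from \eqref{eq:StarGenMatrix}, not the row-permuted block-diagonal version; this does not affect the computation, since $\bH_\cS=\diag{\bH_{\cS'},\bH_{\cS'}}$ either way and a row permutation of $\bG_\cS$ only permutes columns of the product.
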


\begin{proof}
It is well-known that any subset of $\sk+\st-1$ columns of the generator matrix of an $[\sn,\sk+\st-1]$ MDS code are linearly independent. Hence, the columns $\bw_{\pi(1)},\ldots,\bw_{\pi(\sk+\st-1)}$ are linearly independent, as the first $\sn$ columns of $\bG_\cS$ generate $\cS$. The same holds for $\bw_{\pi(1) + \sn},\ldots,\bw_{\pi(\sk+\st-1) + \sn}$. Trivially, any non-zero columns of a diagonal matrix are linearly independent and property (a) follows.

Property (b) follows directly from observing that, by definition, $\bH_\cS\bG_\cS^\top = \mathbf{0}$ for any linear code with generator matrix $\bG_\cS$ and parity-check matrix $\bH_\cS$.
\end{proof}

Let $\cV$ be the space spanned by the first $2(\sn-\sk-\st+1)$ rows of $\bG_\cS$, \ie $\cV = \langle\diag{\bH_{\cS'},\bH_{\cS'}}\rangle_{\mathsf{row}}$.
By Lemma~\ref{lem:MeasurementMatrix}, 
 the space $\cV$ is self-orthogonal and the rows of $\bG_\cS$ span the space $\cV^{\perp_\mathbb{S}}$.
 Notice that $\cV$ is defined from a classical code $\cE = \langle\bH_{\cS'}\rangle_{\mathsf{row}}$, which satisfies $\cE \subset \cE^{\perp_{\mathbb{S}}}$.
 Thus, the stabilizer $\cS(\cV)$ defines a Calderbank--Steane--Shor (CSS) code \cite{Steane96,CS96}, which is defined from the self-orthogonal space $\langle\diag{\bG_{\cC_1},\bG_{\cC_2}}\rangle_{\mathsf{row}}$ with the generator matrices $\bG_{\cC_1}$ and $\bG_{\cC_2}$ of two classical codes $\cC_1$ and $\cC_2$ satisfying $\cC_1 \subset \cC_2^{\perp_{\mathbb{S}}}$.
 Our QPIR scheme will be constructed with the CSS code.

\textbf{Targeted positions.} Let $\rho = \lcm(c,\sk)/c$. Fix $\cJ = \px{1,\ldots,\max \px{c,\sk}}$ to be the set of server indices from which the user obtains the symbols of $\bY^\iota$. We consider $\cJ_1 = [c] \subseteq \cJ$ and we partition it into subsets $\cJ_1^b = \{ i + (b - 1) c/\b \mid i \in [c/\b] \},\ b \in [\b]$. Then, for $r \in [2:\rho]$ we define recursively $\cJ_r^b = \{ (j +  c/\b - 1) \; (\mathrm{mod}\ |\cJ|\ ) + 1 \mid j \in \cJ_{r-1}^b \}$ and $\cJ_r = \bigcup_{b \in [\b]} \cJ_r^b$. We will construct our scheme so that during the $r$-th iteration the user obtains the symbols $(\bY_{1,a}^{\iota,b},\bY_{2,a}^{\iota,b})$ for every $a \in \cJ_r^b$ and $b\in [\b]$.

We define
\begin{equation}
    \label{eq:MatrixN}
    \bN^{(r)} = \ppmatrix{\be_{a}^{\sn}}_{a \in \cJ_{r}}^\top \in \Fq^{c \times \sn},
\end{equation}
where $\be_a^\sn$ is the standard basis column vector of length $\sn$ with a 1 in position $a$. 
Then, the matrix $(\bG_\cS^\top \ \ (\bM^{(r)})^\top)^\top$, with $\bM^{(r)} = \diag{\bN^{(r)},\bN^{(r)}} \in \Fq^{2c \times 2\sn}$, is a basis for $\Fq^{2\sn}$. To see that this is in fact a basis observe that the row span of $\bN^{(r)}$, by definition, contains vectors of weight at most $c$. The span of $\bG_{\cS'}$ contains vectors of weight at least $ d_{\cS'}=c+1$. It follows that the spans of $\bN^{(r)}$ and $\bG_{\cS'}$ intersect trivially, which implies that their ranks add up. 

\textbf{A capacity-achieving QPIR scheme.} 
In our scheme, we use the the stabilizer formalism for the transmission of the classical files.
On the other hand, as discussed in Section~\ref{sec:stabform}, the stabilizer formalism is often used for the transmission of quantum states, which is performed by four steps of the encoding of the state, transmission over the error channel, syndrome measurement, and error-correction.
For the transmission of the classical files, similar to the QPIR scheme \cite{song2020colluding}, we construct our scheme so that the desired file is extracted by the syndrome measurement of the stabilizer code.
Then, by the same property as the superdense coding \cite{bennett1992densecoding}, our scheme can convey twice more classical information compared to the classical PIR schemes.
We refer to \cite[Section IV-B]{song2020colluding} for the detailed explanation of this idea.

Suppose the desired file is $\bX^\iota$. The queries are constructed so that the total response vector during one iteration is the sum of a codeword in $\cS$ and a vector containing $2c$ distinct symbols of $\bY^\iota$ in known locations, and zeros elsewhere.

We now describe the five steps of the capacity-achieving QPIR scheme $\Phi^\star$.

\begin{prot}\label{prot:scheme}
The first four steps are repeated in each round $r \in [\rho]$.

\begin{enumerate}
    \item \textbf{Distribution of entangled state.} Let $\cH_1,\ldots,\cH_\sn$ be $q$-dimensional quantum systems, $\sinit = q^{\sn-2(\sk+\st-1)} \cdot \bI_{q^{2(\sk+\st-1)-\sn}}$ and $\Fq^{2\sn} / \cV^{\perp_\mathbb{S}} = \px{\overline{\bw} = \bw + \cV^{\perp_\mathbb{S}} : \bw \in \langle \bM^{(r)}\rangle_{\mathsf{row}}}$. By Proposition~\ref{prop:2stab}.(b) the composite quantum system $\cH = \cH_1 \otimes \dots \otimes \cH_n$ is decomposed as $\cH = \cW \otimes \C^{q^{2(\sk+\st-1)-\sn}}$, where $\cW = \Span{\cb{\overline{\bw}} \mid \overline{\bw} \in \Fq^{2\sn} / \cV^{\perp_\mathbb{S}}}$. The state of $\cH$ is initialized as $\cb{\overline{\mathbf{0}}} \rb{\overline{\mathbf{0}}} \otimes \sinit$ and distributed such that server $s \in [\sn]$ obtains $\cH_s$.
    
    \item \textbf{Query.} The user chooses a matrix $\bZ^{(r)} \in \Fq^{\sm\b \times 2\st}$ uniformly at random. We define $\bE_{(\iota)} \in \Fq^{\sm\b \times 2c}$ with $\bE_{(\iota),p,a} = \be_{(\iota,a)}^{\sm\b},\ p\in[2],\ a\in[c]$, where $\be_{(\iota,a)}^{\sm\b}$ is the standard basis column vector of length $\sm\b$ with a 1 in coordinate $(\iota,a)$. 
    We denote by $\bQ^{(r)} \in \Fq^{\sm\b \times 2\sn}$ the matrix of all the queries, which are computed as
    \begin{equation}
        \bQ^{(r)} = \ppmatrix{\bZ^{(r)} & \bE_{(\iota)}} \cdot \ppmatrix{\bG_\cD \\ \bM^{(r)}} = \bZ^{(r)} \cdot \bG_\cD + \bE_{(\iota)} \cdot \bM^{(r)}.
        \label{eq:Queries}
    \end{equation}
    Each server $s \in [\sn]$ receives two vectors $\bQ^{(r)}_{1,s},\bQ^{(r)}_{2,s} \in \Fq^{\sm\b}$. 
    
    \item \textbf{Response.} The servers compute the dot product of each column of their stored symbols and the respective column of the queries received, \ie they compute the response $\bB^{(r)}_{p,s} = \bY_{p,s}^\top \cdot \bQ^{(r)}_{p,s} \in \Fq$, $s \in [\sn],\ p \in [2]$. 
    Server $s$ applies $\sX\paren{\bB^{(r)}_{1,s}}$ and $\sZ\paren{\bB^{(r)}_{2,s}}$ to its quantum system and sends it to the user.
    
    \item \textbf{Measurement.} The user applies the PVM $\cB^\cV = \px{\bP_{\overline{\bw}} \mid \overline{\bw} \in \Fq^{2\sn} / \cV^{\perp_\mathbb{S}}}$ on $\cH$ defined in Proposition~\ref{prop:2stab} and obtains the output $\bo^{(r)} \in \Fq^{2c}$.
    
    \item \textbf{Retrieval.} Finally, after $\rho$ rounds the user has retrieved $2\rho c = 2 \b \sk$ symbols of $\Fq$ from which he can recover the desired file $\bX^\iota$. 
\end{enumerate}
\end{prot}

\subsection{Properties of the coded QPIR scheme} \label{sec:Properties of coded QPIR}

\begin{lemm}
The scheme $\Phi^\star$ of Section~\ref{sec:scheme} is correct, \emph{i.e.}, fulfills Definition~\ref{def:correctness}.
\end{lemm}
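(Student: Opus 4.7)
The plan is to prove that in each round $r \in [\rho]$ the user's measurement outcome determines $2c$ fresh coordinates of the codeword $\bY^\iota$, and that the $\rho$ rounds together furnish enough coordinates to decode $\bX^\iota$ via the MDS property of $\cC'$. The pivotal step is to show that the classical response vector $\bB^{(r)} \in \Fq^{2\sn}$ fed into the servers' Pauli operators admits a decomposition
\begin{align*}
\bB^{(r)} = \bc^{(r)} + \bd^{(r)},
\end{align*}
where $\bc^{(r)} \in \cS = \langle \bG_\cS\rangle_{\mathsf{row}} = \cV^{\perp_\mathbb{S}}$ is a codeword of the star-product code (hence trivial modulo $\cV^{\perp_\mathbb{S}}$) and $\bd^{(r)} \in \langle \bM^{(r)}\rangle_{\mathsf{row}}$ is supported on the targeted positions $\cJ_r \cup (\cJ_r + \sn)$, carrying there exactly the $2c$ desired entries of $\bY^\iota$.

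To establish this decomposition I would substitute $\bY = \bX \bG_\cC$ and $\bQ^{(r)} = \bZ^{(r)} \bG_\cD + \bE_{(\iota)} \bM^{(r)}$ into $\bB^{(r)}_j = \bY_j^\top \bQ^{(r)}_j$ and split the two cross terms. The first term $\bc^{(r)}_j = \bg^{\cC\top}_j (\bX^\top \bZ^{(r)}) \bg^\cD_j$, written as a sum of rank-one outer products of the central matrix, becomes a sum of pointwise products of $\cC$-codewords with $\cD$-codewords; together with the Cartesian-product structure $\cC = \cC'\times \cC'$, $\cD = \cD'\times \cD'$ and identity \eqref{eq:starCartesian}, this places $\bc^{(r)}$ in $\cS' \times \cS' = \cS$. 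The second term $\bd^{(r)}_j = \bY_j^\top \bE_{(\iota)} (\bM^{(r)})_j$ vanishes whenever $j \notin \cJ_r \cup (\cJ_r+\sn)$, and on the targeted indices the indicator structure of $\bE_{(\iota)}$ and of the block-diagonal $\bM^{(r)}$ collapses the sum to a single entry of $\bY^\iota$ at precisely the stripe and column prescribed by the partition $\cJ_r^b$. This index bookkeeping, matching the columns of $\bE_{(\iota)}$ to the stripe index of each position in $\cJ_r^b$, is the main technical obstacle of the proof.

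With the decomposition in hand, Proposition~\ref{prop:2stab}(c) applied to the joint server operation $\mathbf{W}(\bB^{(r)}) = \bigotimes_s \sX(\bB^{(r)}_{1,s})\sZ(\bB^{(r)}_{2,s})$ acting on the initial state $\cb{\overline{\mathbf{0}}}\rb{\overline{\mathbf{0}}}\otimes \sinit \in \cH^{\cV}_{\overline{\mathbf{0}}}$ shows that the post-action state is $\cb{\overline{\bB^{(r)}}}\rb{\overline{\bB^{(r)}}}\otimes \sinit$. Since $\bc^{(r)} \in \cV^{\perp_\mathbb{S}}$ we have $\overline{\bB^{(r)}} = \overline{\bd^{(r)}}$, so the PVM $\cB^\cV$ returns the outcome $\overline{\bd^{(r)}}$ with probability one. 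The observation in Section~\ref{sec:scheme} that $(\bG_\cS^\top,(\bM^{(r)})^\top)^\top$ is a basis of $\Fq^{2\sn}$ gives $\langle \bM^{(r)}\rangle_{\mathsf{row}} \cap \cV^{\perp_\mathbb{S}} = \{\mathbf{0}\}$, so the quotient map is injective on $\langle \bM^{(r)}\rangle_{\mathsf{row}}$ and $\bd^{(r)}$ is recovered exactly from $\overline{\bd^{(r)}}$.

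Summing over the $\rho$ rounds the user retrieves $2c\rho = 2\lcm(c,\sk) = 2\b\sk$ symbols of $\bY^\iota$, arranged by the rotating construction of $\cJ_r^b$ into $\sk$ distinct coordinates of each of $\bY_1^{\iota,b}$ and $\bY_2^{\iota,b}$ for every stripe $b \in [\b]$. The MDS property of $\cC'$ then lets the user invert each stripe to obtain $\bX^{\iota,b}$, and hence $\hat{\bX}^\iota = \bX^\iota$ deterministically, giving $P_{\mathrm{err}}(\Phi^\star) = 0$ as required by Definition~\ref{def:correctness}.
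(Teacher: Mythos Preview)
Your proposal is correct and follows essentially the same approach as the paper's proof: decompose the response vector $\bB^{(r)}$ into a $\cS = \cV^{\perp_\mathbb{S}}$-codeword plus a vector in $\langle \bM^{(r)}\rangle_{\mathsf{row}}$ carrying the targeted entries of $\bY^\iota$, then invoke Proposition~\ref{prop:2stab}(c) and the basis property of $(\bG_\cS^\top\ (\bM^{(r)})^\top)^\top$ to conclude that the measurement recovers those entries exactly. The only cosmetic difference is that the paper organizes the decomposition as a row-wise sum $\sum_{i,b} \bY^{i,b} \star \bQ^{(r),i,b}$, whereas you compute it column-by-column via $(\bg^\cC_j)^\top (\bX^\top \bZ^{(r)}) \bg^\cD_j$; both land in $\cC \star \cD$ for the same reason.
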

\begin{proof}
Let us fix the round $r \in [\rho]$ and let $\bB^{(r)}$ be the vector of responses computed by the servers. By Prop.~\ref{prop:2stab}.(c)
the state after the servers' encoding is
\[
\sW(\bB^{(r)}) (\cb{\overline{\mathbf{0}}} \rb{\overline{\mathbf{0}}} \otimes \sinit) \sW(\bB^{(r)})^\dagger = \cb{\overline{\bB^{(r)}}} \rb{\overline{\bB^{(r)}}} \otimes \sinit.  
\]

We observe that $\cV^{\perp_\mathbb{S}} = \cS$ since both spaces are spanned by the rows of $\bG_\cS$. Notice that the row in coordinate $(i,b)$ of the product $\bE_{(\iota)} \cdot \bM^{(r)}$ is $\sum_{p=1}^2 \sum_{a \in \cJ_{r}^b} \d_{i,\iota} (\be_{(p,a)}^{2\sn})^\top$.  Remembering that $\be_{(p,a)}^{2\sn}$ is the standard basis column vector of length $2\sn$ with a 1 in coordinate $(p,a)$, by definition of the star product scheme the response vector is
\begin{equation}
    \label{eq:Response}
    \begin{split}
        \bB^{(r)} = & \;
        \begin{pmatrix}[c|c]
            \bB^{(r)}_{1} & \bB^{(r)}_{2}
        \end{pmatrix} = \sum_{i=1}^\sm \sum_{b=1}^{\b} \bY^{i,b} \star \bQ^{(r),i,b} \\
        = & \; \sum_{i=1}^\sm \sum_{b=1}^{\b} \paren{\bX^{i,b} \cdot \bG_\cC} \star \paren{\bZ^{(r),i,b} \cdot \bG_\cD} \\
        & + \sum_{i=1}^\sm \sum_{b=1}^{\b} \bY^{i,b} \star \Big(\sum_{a \in \cJ_r^b} \d_{i,\iota} \big(\be_{(1,a)}^{2n} + \be_{(2,a)}^{2n} \big)^\top \Big)\\
        & \; \in \cS + \sum_{b=1}^\b \sum_{a \in \cJ_r^b} \big(\bY_{1,a}^{\iota,b} \be_{(1,a)}^{2\sn} + \bY_{2,a}^{\iota,b} \be_{(2,a)}^{2\sn}\big)^\top \\
        & \hspace{10pt} = \cV^{\perp_\mathbb{S}} +
        \begin{pmatrix}[c|c]
            \bY_{1,a}^{\iota,b} & \bY_{2,a}^{\iota,b}
        \end{pmatrix}_{a \in \cJ_r^b, b \in [\b]} \cdot \bM^{(r)}.
    \end{split}
\end{equation}
The random part is encoded into a vector in $\cV^{\perp_\mathbb{S}}$ while the vector $\big( \bY_{1,a}^{\iota,b} \ | \ \bY_{2,a}^{\iota,b} \big)_{a \in \cJ_r^b, b \in [\b]} \in \Fq^{2c}$ is encoded with $\bM^{(r)}$ and hence independent of the representative of $\overline{\bo^{(r)}}$. Therefore, the user obtains the latter without error after measuring the quantum systems with the PVM $\mathcal{B}^\cV$. Recall that we fixed $\b = \lcm(c,\sk)/\sk$ for $c = d_{\cS'} - 1$. To allow the user to download exactly the desired file over $\rho$ iterations, we defined $\rho = \lcm(c,\sk)/c$. During each iteration, the user can download $2c/\b=2\sk/\rho$ symbols from each of the $\b$ rows of $\bY^\iota$, where the factor 2 is achieved by utilizing the properties of superdense coding~\cite{bennett1992densecoding}. After $\rho$ rounds the user obtained the $2\sk$ symbols $\bY^{\iota,b} \in \Fq^{2\sk}$ of each codeword corresponding to a block $\bX^{\iota,b},\ b \in [\b]$ and is therefore able to recover the file. 
\end{proof}

\begin{lemm} \label{lem:Secrecy}
The scheme $\Phi^\star$ of Section~\ref{sec:scheme} is symmetric and protects against $\st$-collusion in the sense of Definition~\ref{def:privacy}.
\end{lemm}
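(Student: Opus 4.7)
\medskip\noindent\emph{Proof proposal.}
The plan is to verify the two requirements of Definition~\ref{def:privacy} separately, deriving user $\st$-secrecy from the MDS property of the query code $\cD'$ and server secrecy from the response analysis already carried out in \eqref{eq:Response} together with the coset structure of Proposition~\ref{prop:2stab}.

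For user $\st$-secrecy, I would fix an arbitrary colluding set $\cT\subseteq[\sn]$ with $|\cT|\leq\st$ and a round $r\in[\rho]$, and show that the marginal distribution of the queries sent to $\cT$ is independent of $\iota$. Restricting \eqref{eq:Queries} to the servers in $\cT$ gives
\begin{align*}
\bQ^{(r)}_{p,\cT} \;=\; \bZ^{(r)}_{p}\,\bG_{\cD',\cT}\;+\;\bE_{(\iota),p}\,\bN^{(r)}_{\cT},\qquad p\in[2].
\end{align*}
Because $\cD'$ is an $[\sn,\st]$ MDS code, $\bG_{\cD',\cT}$ has full column rank $|\cT|$, so the map $\bz\mapsto \bz\,\bG_{\cD',\cT}$ from $\Fq^{\st}$ to $\Fq^{|\cT|}$ is surjective and carries uniform distributions to uniform distributions. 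Applying this row by row and independently for each $p\in[2]$ to the uniformly drawn $\bZ^{(r)}$, the random part $\bZ^{(r)}_{p}\bG_{\cD',\cT}$ is uniform on $\Fq^{\sm\beta\times 2|\cT|}$ regardless of $\iota$, and the deterministic shift $\bE_{(\iota),p}\bN^{(r)}_{\cT}$ does not change this. Mutual independence of the $\bZ^{(r)}$ across the $\rho$ rounds then upgrades the statement to the whole transcript.

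For server secrecy, the plan is to argue that the combined quantum system $\cA$ returned to the user is, given the queries, a deterministic function of $\bX^\iota$. The derivation of \eqref{eq:Response} shows that in each round $\bB^{(r)}\in\cV^{\perp_\mathbb{S}}+\bv^{(r)}$, where $\bv^{(r)}$ depends only on the symbols $\bY^{\iota,b}_{p,a}$ for $p\in[2]$, $a\in\cJ_r^b$, $b\in[\beta]$; the random mask $\bZ^{(r)}\bG_\cD$ is absorbed into $\cS=\cV^{\perp_\mathbb{S}}$ precisely because the query code was designed via Lemma~\ref{lem:SelfDualStarCode} so that $\cC\star\cD\subseteq\cS$. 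The state produced in round $r$ is
\begin{align*}
\sW(\bB^{(r)})\bigl(\cb{\overline{\mathbf{0}}}\rb{\overline{\mathbf{0}}}\otimes\sinit\bigr)\sW(\bB^{(r)})^{\dagger}.
\end{align*}
Invoking Proposition~\ref{prop:2stab}(c) together with the fact that $\sinit$ is a scalar multiple of the identity on $\C^{q^{2(\sk+\st-1)-\sn}}$, and therefore unitarily invariant, this simplifies to $\cb{\overline{\bB^{(r)}}}\rb{\overline{\bB^{(r)}}}\otimes\sinit$, which depends on $\bB^{(r)}$ only through the coset $\overline{\bB^{(r)}}=\overline{\bv^{(r)}}$. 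Since $\bv^{(r)}$ is a function of $\bX^\iota$ alone, the quantum systems returned in every round are determined by $\bX^\iota$ and the queries, so $I(\cA;X\mid Q^{K},K=\iota)=H(X^\iota)$.

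The only genuinely delicate step is recognising that the random component of $\bB^{(r)}$, which algebraically depends on all files and on $\bZ^{(r)}$, lies entirely inside $\cV^{\perp_\mathbb{S}}$ and is therefore quotiented away by the syndrome measurement. This is exactly why the query code $\cD'$ was constructed so that $\cC\star\cD$ is contained in the (weakly) self-dual star-product code $\cS$; once this containment and the unitary invariance of $\sinit$ are invoked, both privacy conditions drop out from short algebraic manipulations, with no further quantum-information-theoretic machinery required.
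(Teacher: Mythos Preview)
Your proposal is correct and follows essentially the same approach as the paper: user $\st$-secrecy via the MDS property of $\cD'$ rendering the restricted random mask uniform, and server secrecy via the coset identity $\overline{\bB^{(r)}}=\overline{\bv^{(r)}}$ from \eqref{eq:Response} showing the received state depends only on $\bY^\iota$. Your argument is in fact slightly more explicit than the paper's in invoking the unitary invariance of $\sinit$ to justify the application of Proposition~\ref{prop:2stab}(c).
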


\begin{proof}
The idea is that user privacy is achieved since, for each subset of $\st$ servers, the corresponding joint distribution of queries is the uniform distribution over $\Fq^{\sm\b \times 2\st}$. Consider a set of $\st$ colluding servers. The set of queries these servers receive is given by $\bQ^{(r)}$ during round $r \in [\rho]$. By the MDS property of the code $\cD$ any subset of $\st$ columns of $\bG_\cD$ is linearly independent. As the columns of $\bZ^{(r)}$ are uniformly distributed and chosen independently for each $r \in [\rho]$, any subset of $\st$ columns of $\bZ^{(r)} \cdot \bG_\cD$ is statistically independent and uniformly distributed. The sum of a uniformly distributed vector and an independently chosen vector is again uniformly distributed, and therefore adding the matrix $\bE_{(\iota)} \cdot \bM^{(r)}$ does not incur any dependence between any subset of $\st$ columns and the file index~$\iota$.

For each $r \in [\rho]$, server secrecy is achieved because in every round the received state of the user is $\cb{\overline{\bB^{(r)}}} \rb{\overline{\bB^{(r)}}} \otimes \sinit$ with $\bB^{(r)} = \big( \bY_{1,a}^{\iota,b} \ | \ \bY_{2,a}^{\iota,b} \big)_{a \in \cJ_r^b, b \in [\b]}$ from \eqref{eq:Response} and this state is independent of $\bY^i$ with $i \neq \iota$.
\end{proof}

Unlike in the classical setting, the servers in the quantum setting do not need access to a source of shared randomness that is hidden from the user to achieve server secrecy. However, this should not be viewed as an inherent advantage since the servers instead share entanglement.

\begin{theo} \label{thm:rate}
The QPIR rate of the scheme in Section~\ref{sec:scheme} is
\begin{equation*}
    R(\Phi^\star) = \frac{2(\sn - \sk - \st + 1)}{\sn}
\end{equation*}
\end{theo}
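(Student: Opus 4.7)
The proof reduces to an accounting exercise once the pieces of the scheme have been assembled. The plan is to plug the file size and the total downloaded quantum dimension into Definition~\ref{def:MDS-PIR-ratedefi}, and then simplify using the definitions of $\b$, $\rho$ and $c$.

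First I would compute the numerator. Each file $\bX^i$ lies in $\Fq^{\b \times 2\sk}$, so $H(X^i) = 2\b\sk \log q$. Next I would compute the denominator. In each of the $\rho$ rounds of Protocol~\ref{prot:scheme}, server $s$ transmits the single $q$-dimensional system $\cH_s$ back to the user, so over the whole protocol the answer system of server $s$ has dimension $q^\rho$; summing over $s\in[\sn]$ gives
\begin{equation*}
  \sum_{s=1}^{\sn}\log \dim \cA_s \;=\; \sn\,\rho \log q.
\end{equation*}
(Equivalently, one may view the $\rho$-round scheme as a single-shot scheme in which each server's answer register is the tensor product of $\rho$ qudits, which matches the convention in Definition~\ref{def:MDS-PIR-ratedefi}.)

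Plugging these quantities into the rate formula yields
\begin{equation*}
  R(\Phi^\star) \;=\; \frac{2\b\sk \log q}{\sn\,\rho \log q} \;=\; \frac{2\b\sk}{\sn\,\rho}.
\end{equation*}
Now I would use the choices $\b = \lcm(c,\sk)/\sk$ and $\rho = \lcm(c,\sk)/c$ fixed in Section~\ref{sec:scheme}, which give $\b\sk = \lcm(c,\sk)$ and hence $\b\sk/\rho = c$. Substituting $c = d_{\cS'}-1 = \sn - \sk - \st + 1$ (the minimum distance of the star-product code $\cS'$, which is $[\sn,\sk+\st-1]$ MDS) delivers
\begin{equation*}
  R(\Phi^\star) \;=\; \frac{2c}{\sn} \;=\; \frac{2(\sn-\sk-\st+1)}{\sn},
\end{equation*}
as required.

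There is no real obstacle in this argument beyond making the bookkeeping explicit. The only point that deserves a sentence of justification is the treatment of the $\rho$ rounds as a single QPIR instance in the sense of the rate definition; everything else was already established in the correctness, secrecy, and code-parameter lemmas of Sections~\ref{sec:scheme}--\ref{sec:Properties of coded QPIR}.
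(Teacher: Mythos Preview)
Your proposal is correct and follows essentially the same approach as the paper: compute the numerator $2\b\sk\log q$ and the denominator $\sn\rho\log q$, then use $\b\sk=\rho c$ and $c=\sn-\sk-\st+1$ to simplify. The paper's own proof is in fact terser than yours, so your additional justification of the $\rho$-round bookkeeping is a harmless (and arguably welcome) elaboration.
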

\begin{proof}
The user downloads $\rho \sn$ quantum systems while retrieving $2 \sk \b \log(q)$ bits of information, thus the rate is given by 
\begin{align*}
R(\Phi^\star) &= \frac{2 k \b \log(q)}{\log(q^{\rho \sn})} \\
&= \frac{2 \rho c \log(q)}{\rho \sn \log(q)} = \frac{2(\sn - \sk - \st + 1)}{\sn}.
\end{align*}
\end{proof}

The presented scheme is an adapted version of the star-product scheme of \cite{freij2017private}, which is strongly linear \cite{Holzbaur2019ITW}. To see that the QPIR scheme is induced by this strongly linear scheme, it suffices to observe that for each $p\in[2]$ the second and third step in Protocol~\ref{prot:scheme}, up to the definition of the \emph{classical} responses $\bB_{p,s}^{(r)}$ with $s\in [\sn]$, are the same as in the star-product scheme. Hence, these steps can be viewed as two parallel instances of the star-product scheme and it follows directly from Definition~\ref{def:stronglyLinear} that this scheme is strongly linear.

\section{Converse} \label{sec:converse}

In this section, we prove Theorem~\ref{theo:2C} and Theorem~\ref{theo:conv22}. 

\subsection{Proof of Theorem~\ref{theo:2C}} \label{subsec:conv2}

Since the upper bound $1$ is trivial, we prove the quantum capacity in Theorem~\ref{theo:2C} is upper bounded by $2C_\epsilon[A]$. Let $\Phi_C$ be an arbitrary classical PIR scheme with assumptions $A$ and error probability $\epsilon$, and $\Phi_Q[\Phi_C]$ be an arbitrary dimension-squared QPIR scheme induced from $\Phi_C$ with error probability $\epsilon'$.
The PIR rate of $\Phi_C$ is upper bounded as
\begin{align}
	\frac{\sk\beta\log q}{\sum_{s=1}^{\sn} H(B_s)}
	& \le C_\epsilon[A].
\end{align}
From the definition of dimension-squared QPIR, we have $H(B_i) \le  2 \log \sd$ for all $s\in[\sn]$ for $\Phi_Q$.
Thus, the QPIR rate $R(\Phi_Q)$ is upper bounded as
	\begin{align}
	R(\Phi_Q) = \frac{\sk\beta\log q}{\sn \log \sd}
	\leq \frac{2\sk\beta\log q}{\sum_{s=1}^{\sn} H(B_s)}
	\leq 2C_\epsilon[A].
	\end{align}
Thus, the desired QPIR capacity is upper bounded by $2C_\epsilon[A]$.

\subsection{Proof of Theorem~\ref{theo:conv22}} \label{subsec:conv1}

Theorem~\ref{theo:conv22} is proved with the following idea.
If the answered state from some $\sk$ servers is independent of the targeted file $X^{\tk}$,
    the user and the remaining $(\sn-\sk)$ servers can use the answers from the $\sk$ servers as entanglement shared with the user.
Then, the entanglement-assisted classical-quantum channel capacity \cite{bennett1999} implies that the user can obtain at most $2(\sn-\sk)\log \sd$ bits of $X^\tk$, which implies Theorem~\ref{theo:conv22}.
Thus, we show that the answered state of the servers $1,\ldots,\sk$ have no information of $X^\tk$.
For the proof, we consider the process in which the $\sk$ servers apply quantum operations sequentially,
    and 
    evaluate the information of $X^\tk$ contained in the quantum systems.
Initially, the $k$ servers have quantum systems $\cH_1\otimes \cdots \otimes \cH_\sk$ and the state is independent of $X^{\tk}$.
After server $1$'s operation, the state on $\cA_1\otimes \cH_2\otimes \cdots \otimes \cH_\sk$ has at most $(\log \sd)/\sm$ bits of $X^{\tk}$ from the user secrecy.
Furthermore, we prove that as one more server applies the operation, at most $(\log \sd)/\sm$ bits of $X^{\tk}$ is added to the state of the $\sk$ servers, from the MDS-coded storage structure and the user secrecy.
Consequently, after all servers' operations, 
    the $\sk$ servers' quantum systems contain at most $(\sk\log \sd)/\sm$ bits of $X^\tk$, which converges $0$ as $\sm\to\infty$.

Throughout the proofs,
	we use superscripts $c$ (resp. $u$, $s$, $m$) over equalities and inequalities 
		for denoting they are derived from correctness (resp. user secrecy, server secrecy, MDS coded storage structure) of the QPIR scheme.
	For example, $\stackrel{\mathclap{u}}{=}$ denotes that the equality is derived from the user secrecy of QPIR scheme.

The following proofs are written with quantum mutual information and quantum relative entropy defined as follows.
When a quantum system $\cA$ has a state $\sigma = \sum_i p_i |\psi_i\rangle\langle \psi_i|$, 
the von Neumann entropy is defined as $H(\cA)_{\sigma}= \Tr {\sigma \log \sigma }= - \sum_i p_i \log p_i$.
Similar to the classical case, the mutual information and conditional mutual information are defined as $I(\cA;\cB)_\sigma = H(\cA)_\sigma +H(\cB)_\sigma - H(\cA\cB)_\sigma$ and
 $I(\cA;\cB|\cC)_\sigma = I(\cA;\cB\cC)_\sigma -I(\cA;\cC)_\sigma $, respectively.
For two states $\sigma$ and $\sigma'$ on $\cA$,
 the quantum relative entropy is defined as
$D(\sigma\|\sigma') = \Tr {\sigma (\log \sigma - \log \sigma')}$.
Similar to classical case, we have ${I(\cA ; \cB)_{\sigma}
	= D( \sigma \| \sigma_{\cA} \otimes \sigma_{\cB} )} $

For the proof, we prepare two propositions.

\begin{prop}[Fano's inequality] \label{prop:fano}
Let $X,Y$ be random variables with values in $[n]$
    and $Z$ be any random variable.
Then, 
$H(X|YZ) \le  \epsilon \log n + h_2(\epsilon)$,
where $\varepsilon = \Pr[X\neq Y]$.
\end{prop}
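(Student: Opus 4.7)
The plan is to introduce a binary error indicator $E = \mathbf{1}[X \neq Y]$ with $\Pr[E=1] = \epsilon$, and then expand the joint conditional entropy $H(E,X \mid Y,Z)$ in two different ways using the chain rule. The trick is that $E$ is a deterministic function of the pair $(X,Y)$, so conditioning on $X,Y,Z$ kills it, while expanding with $E$ first exposes the small binary entropy $h_2(\epsilon)$.

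First I would write, via the chain rule,
\begin{align*}
H(E,X \mid Y,Z) = H(X \mid Y,Z) + H(E \mid X,Y,Z) = H(X \mid Y,Z),
\end{align*}
because $E$ is a function of $(X,Y)$. Next, I would expand the same quantity in the opposite order,
\begin{align*}
H(E,X \mid Y,Z) = H(E \mid Y,Z) + H(X \mid E,Y,Z) \le H(E) + H(X \mid E,Y,Z),
\end{align*}
using that dropping conditioning does not decrease entropy. Since $E$ is Bernoulli$(\epsilon)$, $H(E) = h_2(\epsilon)$.

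To bound $H(X \mid E,Y,Z)$, I would split on the value of $E$:
\begin{align*}
H(X \mid E,Y,Z) = (1-\epsilon)\,H(X \mid Y,Z,E=0) + \epsilon\,H(X \mid Y,Z,E=1).
\end{align*}
On the event $E=0$ we have $X=Y$, so the first term vanishes; on the event $E=1$, the value of $X$ is constrained to lie in $[n]\setminus\{Y\}$, a set of at most $n-1$ elements, so $H(X \mid Y,Z,E=1) \le \log(n-1) \le \log n$. Combining the two chain-rule identities yields $H(X \mid Y,Z) \le h_2(\epsilon) + \epsilon \log n$, which is the claim.

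There is no real obstacle here; the proof is the textbook Fano argument, and the auxiliary random variable $Z$ plays no active role beyond being carried through the conditioning. The only thing to watch is that the deterministic-function step $H(E \mid X,Y,Z)=0$ requires $E$ to be a function of $(X,Y)$ alone, which is the case by construction.
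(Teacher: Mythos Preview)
Your proof is correct and is the standard textbook derivation of Fano's inequality. The paper itself does not supply a proof of this proposition; it simply states it as a known preliminary result, so there is no ``paper's own proof'' to compare against. Your argument via the error indicator $E=\mathbf{1}[X\neq Y]$ and the two chain-rule expansions of $H(E,X\mid Y,Z)$ is exactly the classical one, and every step is valid as written.
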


\begin{prop} \label{prop:mi_inf}
Let $\kappa$ be a CPTP map from $\cA$ to $\cB$
	and $\sigma$ be a state on $\cA\otimes \cC$.
Then, 
$I(\cA ; \cC)_{\sigma} \geq I(\cB; \cC)_{\kappa \otimes \id_{\cC}(\sigma)}$,
where $\id_{\cC}$ is the identity operator on $\cC$.
\end{prop}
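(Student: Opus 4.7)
The plan is to reduce the claim to the monotonicity (data-processing inequality) of quantum relative entropy under CPTP maps, which is a standard deep fact in quantum information theory. First I will rewrite the mutual information using the identity $I(\cA;\cC)_\sigma = D(\sigma \| \sigma_\cA \otimes \sigma_\cC)$ recorded in the paragraph just before the proposition, where $\sigma_\cA = \mathrm{Tr}_\cC \sigma$ and $\sigma_\cC = \mathrm{Tr}_\cA \sigma$ are the marginals of $\sigma$.

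Next I would apply the CPTP map $\kappa \otimes \id_\cC$ to both arguments. By the Lindblad--Uhlmann monotonicity of quantum relative entropy under CPTP maps,
\begin{equation*}
D(\sigma \| \sigma_\cA \otimes \sigma_\cC) \;\geq\; D\bigl((\kappa \otimes \id_\cC)(\sigma) \,\big\|\, (\kappa \otimes \id_\cC)(\sigma_\cA \otimes \sigma_\cC)\bigr),
\end{equation*}
and since $\kappa \otimes \id_\cC$ acts tensorially on a product state, we simplify $(\kappa \otimes \id_\cC)(\sigma_\cA \otimes \sigma_\cC) = \kappa(\sigma_\cA) \otimes \sigma_\cC$.

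Finally, setting $\tau \coloneqq (\kappa \otimes \id_\cC)(\sigma)$, I will verify the marginal identities $\tau_\cB = \kappa(\sigma_\cA)$ and $\tau_\cC = \sigma_\cC$; both follow from the fact that partial trace commutes with a map that acts as the identity on the factor being traced out. Substituting these back into the right-hand side yields $D(\tau \| \tau_\cB \otimes \tau_\cC) = I(\cB;\cC)_\tau$, which is precisely the desired lower bound. The only nontrivial ingredient is the monotonicity of relative entropy, so there is no real obstacle beyond recognizing the statement as an instance of this standard inequality; given that the paper invokes \cite{NC00, Hay17} for quantum information background, it can simply be quoted.
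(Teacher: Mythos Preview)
Your proposal is correct and matches the paper's own proof essentially line for line: both rewrite $I(\cA;\cC)_\sigma$ as $D(\sigma\|\sigma_\cA\otimes\sigma_\cC)$, apply the data-processing inequality for relative entropy under the CPTP map $\kappa\otimes\id_\cC$, and identify the result with $I(\cB;\cC)_{\kappa\otimes\id_\cC(\sigma)}$. Your additional verification of the marginal identities $\tau_\cB=\kappa(\sigma_\cA)$ and $\tau_\cC=\sigma_\cC$ makes the final equality explicit, which the paper leaves implicit.
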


\begin{proof}
The proposition follows from the following inequality 
\begin{align*}
\lefteqn{I(\cA ; \cC)_{\sigma}
	= D( \sigma \| \sigma_{\cA} \otimes \sigma_{\cC} )} \\
	&\geq D(\kappa \otimes \id_{\cC}(\sigma) \| \kappa(\sigma_{\cA}) \otimes \sigma_{\cC} )
	= I(\cB; \cC)_{\kappa \otimes \id_{\cC}(\sigma)},
\end{align*}
where $\sigma_{\cA}$ and $\sigma_{\cC}$ are reduced states on $\cA$ and $\cC$,
	and the inequality is from the data-processing inequality of the quantum relative entropy.
\end{proof}

Theorem~\ref{theo:conv22} is proved by the following two lemmas.
\begin{lemm} \label{lemm:onefileupper}
The size of one file is upper bounded as 
\begin{align}
 \sk\beta \log q \leq 
	\frac{2(\sn-\sk) \log \sd +  I(\cA_{[\sk]} ;  X^{ \tk} | Q^\tk ) + h_2(\epsilon)}
		{1-\epsilon},
\end{align}
where $\epsilon = \max_{\tk\in[\sm]} \Pr [ X^{\tk} \neq \tilde{X}^{\tk}]$.
\end{lemm}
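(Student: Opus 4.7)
\begin{proofidea}
The plan is to combine Fano's inequality with the entanglement-assisted classical capacity bound of Bennett \emph{et al.}~\cite{bennett1999}, viewing the servers $[\sk+1:\sn]$ as a sender and the first $\sk$ servers' answer systems as pre-shared entanglement with the user. Fix the target index $\tk$ and condition throughout on $\tK=\tk$. Let $Q^\tk$ denote the query tuple. By the QPIR correctness condition, the user's measurement outcome $\tilde X^\tk$ equals $X^\tk$ with probability at least $1-\epsilon$. Since $X^\tk\in\Fq^{\b\times\sk}$, Fano's inequality (Proposition~\ref{prop:fano}) gives
\begin{equation*}
H(X^\tk\mid Q^\tk,\cA_{[\sn]})\;\le\;\epsilon\,\sk\beta\log q \;+\; h_2(\epsilon),
\end{equation*}
and therefore, using $H(X^\tk)=\sk\beta\log q$,
\begin{equation*}
I(X^\tk;\cA_{[\sn]}\mid Q^\tk)\;\ge\;(1-\epsilon)\,\sk\beta\log q \;-\; h_2(\epsilon).
\end{equation*}

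Next, I would split this mutual information by the chain rule as
\begin{equation*}
I(X^\tk;\cA_{[\sn]}\mid Q^\tk)\;=\;I(X^\tk;\cA_{[\sk]}\mid Q^\tk)\;+\;I(X^\tk;\cA_{[\sk+1:\sn]}\mid \cA_{[\sk]},Q^\tk).
\end{equation*}
The first term is left as the $I(\cA_{[\sk]};X^\tk\mid Q^\tk)$ that appears in the statement. The second term is where the quantum factor of two enters: I would interpret the map from $X^\tk$ to $\cA_{[\sk+1:\sn]}$ (for fixed $Q^\tk$, and with the prior entanglement $\sinit$ across all $\sn$ servers already fixed) as an entanglement-assisted transmission, where the $\sk$ systems $\cA_{[\sk]}$ are assistance held by the receiver, and the $\sn-\sk$ systems $\cA_{[\sk+1:\sn]}$ of dimension $\sd$ each are the channel outputs. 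The Bennett--Shor--Smolin--Thapliyal bound for the entanglement-assisted classical capacity of such a channel gives
\begin{equation*}
I(X^\tk;\cA_{[\sk+1:\sn]}\mid \cA_{[\sk]},Q^\tk)\;\le\;2(\sn-\sk)\log\sd.
\end{equation*}

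Combining the two displays and rearranging yields
\begin{equation*}
(1-\epsilon)\,\sk\beta\log q \;\le\; 2(\sn-\sk)\log\sd\;+\;I(\cA_{[\sk]};X^\tk\mid Q^\tk)\;+\;h_2(\epsilon),
\end{equation*}
which is exactly the claimed inequality after dividing by $1-\epsilon$. The main technical obstacle is the rigorous justification of the entanglement-assisted capacity step: one must argue that, after conditioning on $Q^\tk$, the servers $[\sk+1:\sn]$ implement a valid classical-input quantum channel whose ``input'' can be taken to be $X^\tk$ (the file contents being the only $X^\tk$-dependent data they process through their CPTP encoders $\mathsf{Enc}_{\mathrm{serv}_s}[Q_s^\tk,Y_s]$), and that the systems $\cA_{[\sk]}$ serve as a legitimate preshared resource independent of the input symbol $X^\tk$ in the channel-coding sense. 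This is typically handled by invoking Proposition~\ref{prop:mi_inf} together with the standard Holevo/HSW-type bound $\chi(\cN)\le 2\log\dim\cA_{[\sk+1:\sn]}=2(\sn-\sk)\log\sd$ for entanglement-assisted capacity; the product structure of the $\sn-\sk$ server channels and subadditivity of the entanglement-assisted capacity give the sum bound.
\end{proofidea}
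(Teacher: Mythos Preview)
Your proposal follows the same route as the paper: Fano's inequality gives $I(X^\tk;\cA_{[\sn]}\mid Q^\tk)\ge(1-\epsilon)\sk\beta\log q-h_2(\epsilon)$ (the paper inserts the intermediate step $I(\hat X^\tk;X^\tk\mid Q^\tk)$ and then invokes Proposition~\ref{prop:mi_inf}, which you fold into one line), then the chain rule splits off $I(X^\tk;\cA_{[\sk]}\mid Q^\tk)$, and the remaining term $I(X^\tk;\cA_{[\sk+1:\sn]}\mid \cA_{[\sk]},Q^\tk)$ is bounded by $2(\sn-\sk)\log\sd$.

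The only substantive difference is your justification of that last bound. You cast it as an instance of the BSST entanglement-assisted capacity theorem and then flag as the ``main technical obstacle'' that $\cA_{[\sk]}$ must be a preshared resource \emph{independent} of the channel input $X^\tk$. That independence is in general false here (servers $1,\ldots,\sk$ do process the files), and, more to the point, it is unnecessary. The paper simply uses the elementary dimension bound: for classical $X$ and arbitrary quantum systems $A,B$,
\[
I(X;A\mid B)=H(A\mid B)-H(A\mid XB)\le \log\dim A-(-\log\dim A)=2\log\dim A,
\]
which holds with no structural assumptions on $B$ or on how $A$ was produced. The entanglement-assisted capacity picture is the paper's \emph{heuristic} in the surrounding discussion (and motivates why Lemma~\ref{lemm:mutinf} is subsequently needed), but it is not invoked in the formal proof of this lemma. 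Replacing your BSST appeal by this two-line entropy bound removes the obstacle you identified and makes the argument complete.
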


\begin{proof}
Fix the index of the targeted file as $\tK = \tk = \argmax_{\tk\in[\sm]} \Pr [ X^{\tk} \neq \tilde{X}^{\tk}]$.
The uniformity of $X^{\tk}\in\Fq^{\beta\times\sk}$ and the Fano's inequality (Proposition~\ref{prop:fano}) imply
\begin{align}
&I(\hat{X}^{\tk} ; X^{\tk}| Q^\tk)
         = H(X^{\tk}|Q^\tk) - H(X^{\tk}|\hat{X}^{\tk}Q^\tk) \\
 &\ge (1-\epsilon) \sk\beta \log q - h_2(\epsilon).
 \label{eq:fano1}
\end{align}
From Proposition~\ref{prop:mi_inf},
    the mutual information in the above inequality is upper bounded as 
\begin{align}
    I(\cA ;  X^{ \tk} | Q^\tk )
    \geq 
    I(\hat{X}^{\tk} ; X^{\tk}| Q^\tk)
    .
    \label{eq:prop6-2}
\end{align}
Furthermore, the left-hand side of the above inequality is upper bounded as 
\begin{align*}
	I(\cA ;  X^{ \tk} | Q^\tk )
			&= I(\cA_{[\sk+1:\sn]} ;  X^{ \tk} | \cA_{[\sk]} Q^\tk )
			 + I(\cA_{[\sk]} ;  X^{ \tk} | Q^\tk )\\
			&\le 2 \log \dim \cA_{[\sk+1:\sn]}
			 + I(\cA_{[\sk]} ;  X^{ \tk} | Q^\tk )\\
			& = 2 (\sn-\sk) \log \sd
			 + I(\cA_{[\sk]} ;  X^{ \tk} | Q^\tk ).
 \label{eq:6-1-3}
\end{align*}
Thus, combining \eqref{eq:fano1}, \eqref{eq:prop6-2}, and \eqref{eq:6-1-3}, we obtain the desired lemma.
\end{proof}

\begin{lemm} \label{lemm:mutinf}
$\lim_{\sm \to \infty} I(\cA_{[\sk]} ;  X^{\tk} | Q^\tk ) = 0.$
\end{lemm}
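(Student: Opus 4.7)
The plan is to follow the outline suggested after the lemma statement: analyze how the mutual information with the targeted file $X^\tk$ accumulates in the first $\sk$ servers' systems as each server applies its CPTP map, one at a time. Introduce the partial states $\sigma^{(j)}$ on $\cA_{[j]}\otimes\cH_{[j+1:\sn]}$ corresponding to servers $1,\dots,j$ having applied their encoding maps while servers $j+1,\dots,\sn$ have not, with $\sigma^{(0)} = \sigma_{\mathrm{init}}$ independent of every file. Set
\[
I_j \;:=\; I(\cA_{[j]}\cH_{[j+1:\sk]};X^\tk\mid Q^\tk,\tK=\tk)_{\sigma^{(j)}},
\]
so that $I_0=0$ trivially. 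The goal then becomes the inductive one-step estimate $I_j - I_{j-1}\leq (\log\sd)/\sm$, up to an absolute constant, for every $j\in[\sk]$, from which $I_\sk\leq \sk\log\sd/\sm$ follows.

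The inductive step combines three ingredients. First, a capacity estimate: server $j$'s CPTP map outputs a system $\cA_j$ of dimension $\sd$, so by a Holevo-type / entanglement-assisted classical capacity bound in the spirit of \cite{bennett1999} the increase of the mutual information with the full file collection $X=(X^1,\ldots,X^\sm)$ caused by server $j$'s operation is at most a constant multiple of $\log\sd$. Second, a decomposition across files: the files $X^1,\dots,X^\sm$ are mutually independent conditional on $Q^\tk$ (queries are generated from $\tK$ and the user's randomness only, so they are independent of the files), and hence the sum over $k$ of the analogous per-file MI increments is bounded above by the total increment with $X$. Third, a symmetry argument: since server $j$'s action depends on $Y_j = \bX\cdot\bG_j$ (which enters all files identically through the linear MDS encoding) and on $Q_j^\tk$ (whose marginal distribution is independent of $\tk$ by user $1$-secrecy), each per-file MI increment is the same for every $k\in[\sm]$; the common value therefore equals the total increment divided by $\sm$.

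Combining these three ingredients yields the one-step estimate, and summing over $j=1,\dots,\sk$ gives $I_\sk \leq C\sk\log\sd/\sm$, which vanishes in the appropriate asymptotic sense --- in particular, the ratio $I_\sk/\log\sd$ vanishes, which is precisely what the converse of Lemma~\ref{lemm:onefileupper} ultimately requires. The main obstacle I anticipate is making the symmetry argument rigorous at the quantum level: one has to show that, conditional on $(Q^\tk,\tK=\tk)$ and on the prior quantum state $\sigma^{(j-1)}$, the information added by server $j$'s operation about $X^k$ really is independent of $k$. Propagating this symmetry through the induction relies on user $1$-secrecy holding for each single server separately and on the linear, file-symmetric structure of the MDS storage code; the no-signalling principle is what rules out the possibility that entanglement with the yet-to-act servers covertly breaks the symmetry before server $j$ acts.
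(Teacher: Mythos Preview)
Your high-level plan matches the paper's: an induction on servers $j\in[\sk]$, a dimension bound for each server's output, and a user-secrecy/file-symmetry argument to divide by $\sm$. But the execution differs from the paper in a way that leaves real gaps.

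The paper does \emph{not} track your $I_j = I(\cA_{[j]}\cH_{[j+1:\sk]};X^\tk\mid Q^\tk)$. It tracks
\[
\tilde I_j := I\bigl(\cA_{[j]}\cH_{[j+1:\sk]};\,Y_{[j]}^{\tk}\mid Q^\tk\bigr),
\]
the mutual information with the \emph{encoded columns} $Y_1^\tk,\dots,Y_j^\tk$ rather than the raw file. Since $Y_{[\sk]}^\tk$ and $X^\tk$ determine each other (MDS property), $\tilde I_\sk = I_\sk$, so nothing is lost at the end. During the induction, however, the difference matters: the chain rule
\[
\tilde I_{j+1} = I\bigl(\cA_{[j+1]}\cH_{[j+2:\sk]};Y_{[j]}^\tk\mid Q^\tk\bigr) + I\bigl(\cA_{[j+1]}\cH_{[j+2:\sk]};Y_{j+1}^\tk\mid Y_{[j]}^\tk Q^\tk\bigr)
\]
peels off one server's stored column at a time. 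The MDS independence of $Y_1,\dots,Y_\sk$ lets one (i)~strip $Y_{j+1}$ from the first term, reducing it to $\tilde I_j$ via data processing, and (ii)~remove the conditioning on $Y_{[j]}$ in the second. The second term is then bounded by a \emph{single-server} lemma (the paper's Lemma~\ref{lemm:userinf}): $I(\cA_{t}\cH_{\cT};Y_t^\tk\mid Q^\tk)\le 2(\log\sd)/\sm$ for $t\notin\cT$. That lemma needs only user $1$-secrecy, because once server $t$ is isolated the state on $\cA_t\cH_\cT$ depends on the query only through $Q_t^\tk$, whose law is target-independent.

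Your route runs into two concrete problems. Write $I_j^{(k)} := I(\cA_{[j]}\cH_{[j+1:\sk]};X^k\mid Q^\tk)$ and $J_j := I(\cA_{[j]}\cH_{[j+1:\sk]};X\mid Q^\tk)$. Your ingredient~2 asserts $\sum_k \bigl(I_j^{(k)}-I_{j-1}^{(k)}\bigr) \le J_j - J_{j-1}$. Subadditivity gives $\sum_k I_j^{(k)}\le J_j$ and $\sum_k I_{j-1}^{(k)}\le J_{j-1}$, but subtracting two such inequalities yields nothing about the differences; the increment inequality does not follow. Your ingredient~3 asserts that $I_j^{(k)}-I_{j-1}^{(k)}$ (or $I_j^{(k)}$ itself) is independent of $k$. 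For $j>1$ the relevant state depends on the joint query $(Q_1^\tk,\dots,Q_j^\tk)$, and user $1$-secrecy controls only each marginal $Q_s^\tk$, not the joint, so file-symmetry of the multi-server system does not follow from $1$-secrecy alone. The paper sidesteps both issues by never forming multi-server increments with $X^\tk$: at each step it isolates server $t=j+1$ in the ``new'' term and applies the $1$-secrecy argument there, while MDS independence of the $Y_s$'s handles the ``old'' term.
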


With Lemmas~\ref{lemm:onefileupper} and \ref{lemm:mutinf}, we prove Theorem~\ref{theo:conv22} as follows.
From Lemma~\ref{lemm:onefileupper}, the $[\sn,\sk,1]$-QPIR capacity is upper bounded as 
\begin{align}
& C_{\sm,\epsilon}^{[\sn,\sk,1]} 
    = 
	\sup \frac{\sk\beta \log q}{\sn\log\sd} \\
&	\leq 
	\frac{1}{1-\epsilon}\paren*{
	\frac{2(\sn-\sk)}{\sn} 
	+ \frac{ I(\cA_{[\sk]} ;  X^{ \tk} | Q^\tk ) + h_2(\epsilon)}{\sn\log\sd}
	}.
	\label{up1}
\end{align}
Furthermore, 
	Lemma~\ref{lemm:mutinf} proves that 
	$I(\cA_{[\sk]} ;  X^{ \tk} | Q^\tk )$ approaches zero as the number of files $\sm$ goes to infinity,
	and $h_2(\epsilon) \to 0$ as $\epsilon \to 0$.
Thus, as $\sm\to\infty$ and $\epsilon\to0$, 
    the capacity is upper bounded by $2(1-\sk/\sn)$,
        which implies Theorem~\ref{theo:conv22}.

In the remainder of this subsection, we prove Lemma~\ref{lemm:mutinf}.
For the proof, we prepare the following lemma.

\begin{lemm} \label{lemm:userinf}
Suppose that $t\in[\sn]$ and $\cT\subset[\sn]$ satisfy $t \not \in \cT$.
Then, 
\begin{align}
	I(\cA_{t} \cH_{\cT} ; Y_{t}^{\tk} | Q^\tk) \leq  \frac{2\log \sd}{\sm} .
	\label{34}
\end{align}
\end{lemm}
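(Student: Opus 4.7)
The plan is to bound the quantum mutual information in three moves: reduce the conditioning to server $t$'s local query $Q_t^\tk$, cap the total information about the whole file block by $2\log\sd$ via an entanglement-assisted Holevo-type bound, and then distribute this bound uniformly across the $\sm$ files.

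First I will establish the conditional-independence reduction $I(\cA_t \cH_\cT; Y_t^\tk | Q^\tk) = I(\cA_t \cH_\cT; Y_t^\tk | Q_t^\tk)$. The state on $\cA_t \otimes \cH_\cT$ is obtained by applying only server $t$'s CPTP encoder $\mathsf{Enc}_{\mathrm{serv}_t}[Q_t^\tk, Y_t]$ to $\mathrm{Tr}_{\cH_{[\sn]\setminus(\cT\cup\{t\})}}(\sigma_{\mathrm{init}})$, with $\cH_\cT$ untouched. Since $Y_t$ is independent of the user's randomness, the remaining components of $Q^\tk$ carry no additional information about $(Y_t^\tk, \cA_t \cH_\cT)$ beyond $Q_t^\tk$.

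Next I apply the entanglement-assisted Holevo bound. For a fixed value $Q_t^\tk = q$, the ensemble $\{\rho_{Y_t,q}\}_{Y_t}$ on $\cA_t \otimes \cH_\cT$ has its $\cH_\cT$-marginal equal to the fixed state $\sigma^{\cH_\cT} := \mathrm{Tr}_{\cH_t}(\sigma_{\mathrm{init}})$, independent of $Y_t$. Combining the Araki--Lieb triangle inequality $H(\rho^{\cA_t \cH_\cT}) \geq H(\sigma^{\cH_\cT}) - H(\rho^{\cA_t})$ with the subadditivity $H(\bar\rho^{\cA_t \cH_\cT}) \leq H(\bar\rho^{\cA_t}) + H(\sigma^{\cH_\cT})$ inside the Holevo quantity $\chi$ shows $\chi \leq 2\log\sd$; averaging over $Q_t^\tk$ then yields $I(\cA_t \cH_\cT; Y_t | Q_t^\tk) \leq 2\log\sd$. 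Since $(Y_t^i)_{i=1}^\sm$ are i.i.d. and independent of $Q_t^\tk$, subadditivity of conditional entropy gives
\begin{align*}
\sum_{i=1}^\sm I(\cA_t \cH_\cT; Y_t^i | Q_t^\tk) \leq I(\cA_t \cH_\cT; Y_t | Q_t^\tk) \leq 2\log\sd.
\end{align*}

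Finally, user $1$-secrecy makes the marginal of $Q_t^\tk$ independent of $\tk$, and combined with the exchangeability of the i.i.d. file blocks this forces the $\sm$ mutual informations $I(\cA_t \cH_\cT; Y_t^i | Q_t^\tk)$ to coincide, so each equals the average and is at most $2\log\sd/\sm$. The main obstacle is this last symmetrization step: because the encoder may a priori treat file positions asymmetrically for a given fixed query, one must justify the symmetry either via a WLOG argument (replacing the scheme by one in which the user applies a uniformly random permutation of the file labels — a transformation that preserves rate, correctness, and user $1$-secrecy) or by exploiting the i.i.d. structure of $Y_t$ together with the $\tk$-independence of $P_{Q_t^\tk}$ more directly to prove that the per-file mutual informations agree.
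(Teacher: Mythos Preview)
Your three-move outline matches the paper's proof almost step for step. The paper compresses your first two moves into a single line: it asserts $I(\cA_t\cH_\cT;Y_t\mid Q^\tk)\le 2\log\sd$ directly from the fact that server $t$'s operation outputs a $\sd$-dimensional system, keeping the conditioning on the full $Q^\tk$ rather than reducing to $Q_t^\tk$ as you do. It then runs exactly the chain-rule decomposition you sketch,
\[
I(\cA_t\cH_\cT;Y_t\mid Q^\tk)=\sum_{j=1}^{\sm}I(\cA_t\cH_\cT Y_t^{[j-1]};Y_t^j\mid Q^\tk)\ge\sum_{j=1}^{\sm}I(\cA_t\cH_\cT;Y_t^j\mid Q^\tk),
\]
and finishes by equating the last sum with $\sm\,I(\cA_t\cH_\cT;Y_t^\tk\mid Q^\tk)$, citing user secrecy.

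The step you flag as the obstacle --- that the $\sm$ per-file terms coincide --- is precisely where the paper also appeals to user secrecy, and with no more justification than you offer. User $1$-secrecy guarantees that $I(\cA_t\cH_\cT;Y_t^j\mid Q_t^\tk)$ is independent of $\tk$, but it does not by itself force independence of $j$, since the server encoder can treat file positions asymmetrically. Your WLOG file-permutation argument is a sound way to close this: pre-composing any scheme with a uniform random relabeling of files preserves rate, correctness, and user secrecy while making the per-file terms equal by construction. This yields the lemma for symmetrized schemes, which is all that is needed for the capacity converse (Theorem~\ref{theo:conv22}). Your ``more direct'' alternative via the i.i.d.\ structure alone will not work without some such symmetrization, for the reason you yourself give.
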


\begin{proof}
Since the operation from $\cH_{t}$ to $\cA_{t}$ is applied on the quantum system of dimension of $\sd$,
	we have
\begin{align}
I(\cA_{t} \cH_{\cT} ; Y_{t} | Q^\tk) \leq 2 \log \sd. 
	\label{eq:ll-1}
\end{align}
On the other hand, we have
\begin{align}
&I(\cA_{t} \cH_{\cT} ; Y_{t} | Q^\tk) 
	= \sum_{j=1}^{\sm} I(\cA_{t} \cH_{\cT} ; Y_{t}^j | Y_{t}^{[j-1]} Q^\tk) \\
	&= \sum_{j=1}^{\sm} I(\cA_{t} \cH_{\cT} Y_{t}^{[j-1]} ; Y_{t}^j |  Q^\tk) 
	\geq \sum_{j=1}^{\sm} I(\cA_{t} \cH_{\cT}  ; Y_{t}^j |  Q^\tk) \\
	&\stackrel{\mathclap{u}}{=} \sm I(\cA_{t} \cH_{\cT}  ; Y_{t}^{\tk} |  Q^\tk) ,
	\label{eq:ll-2}
\end{align}
where the last equality follows from the user secrecy condition.
Thus, combining \eqref{eq:ll-1} and \eqref{eq:ll-2},
	we obtain the desired inequality \eqref{34}.
\end{proof}

Now, we prove Lemma~\ref{lemm:mutinf}.
\begin{proof}[Proof of Lemma~\ref{lemm:mutinf}] 
By mathematical induction, we prove
	\begin{align}
	\lim_{\sm\to\infty}
	I(\cA_{[j]} \cH_{[j+1:\sk]} ; Y_{[j]}^{\tk} | Q^\tk) = 0 
	\end{align}
	for any $j \in [\sk]$.
Then, the case for $j =\sk$ proves the lemma.

First, the case $j=1$ follows from Lemma~\ref{lemm:userinf}.
Next, assuming 
	\begin{align}
	\lim_{\sm\to\infty}
	I(\cA_{[j]} \cH_{[j+1:\sk]} ; Y_{[j]}^{\tk} | Q^\tk) = 0,
	\label{eq:assumplemm}
	\end{align}
we prove 
	\begin{align}
	\lim_{\sm\to\infty}
	I(\cA_{[j+1]} \cH_{[j+2:\sk]} ; Y_{[j+1]}^{\tk} | Q^\tk) = 0
	\end{align}
for $j\in[\sk-1]$.
Since 
	\begin{align}
	& I(\cA_{[j+1]} \cH_{[j+2:\sk]} ; Y_{[j+1]}^{\tk} | Q^\tk)\\
	&= 
		I(\cA_{[j+1]} \cH_{[j+2:\sk]} ; Y_{[j]}^{\tk} | Q^\tk)
		+
		I(\cA_{[j+1]} \cH_{[j+2:\sk]} ; Y_{j+1}^{\tk} | Y_{[j]}^{\tk}Q^\tk),
		\label{eq:twoterms}
	\end{align}
we prove that the two terms of \eqref{eq:twoterms} approaches $0$ as $\sm\to\infty$.
Then, we obtain the desired statement by induction.

The first term of \eqref{eq:twoterms} is upper bounded as 
	\begin{align*}
	&I(\cA_{[j+1]} \cH_{[j+2:\sk]} ; Y_{[j]}^{\tk} | Q^\tk) 
	\leq I(\cA_{[j+1]} \cH_{[j+2:\sk]} Y_{j+1}; Y_{[j]}^{\tk} | Q^\tk)   \nonumber  \\
	&\stackrel{\mathclap{(a)}}{\leq} I(\cA_{[j]} \cH_{[j+1:\sk]} Y_{j+1}; Y_{[j]}^{\tk} | Q^\tk) 
	\stackrel{\mathclap{m}}{=} I(\cA_{[j]} \cH_{[j+1:\sk]} ; Y_{[j]}^{\tk} | Q^\tk)  \label{eq:indep111},
	\end{align*}
where $(a)$ follows from Proposition~\ref{prop:mi_inf}
    and 
	the last equality holds because $Y_{j+1}$ is independent of all other quantum systems and random variables.
Thus, by the assumption \eqref{eq:assumplemm},
	the first term of \eqref{eq:twoterms} approaches $0$ as $\sm\to\infty$.

The second term of \eqref{eq:twoterms} is upper bounded as
	\begin{align}
	& I(\cA_{[j+1]} \cH_{[j+2:\sk]} ; Y_{j+1}^{\tk} | Y_{[j]}^{\tk}Q^\tk) \\
	&\stackrel{\mathclap{m}}{=} I(\cA_{[j+1]} \cH_{[j+2:\sk]} Y_{[j]}^{\tk} ; Y_{j+1}^{\tk} | Q^\tk)\\
	&\leq  I(\cA_{[j+1]} \cH_{[j+2:\sk]} Y_{[j]} ; Y_{j+1}^{\tk} | Q^\tk)\\
	&\leq  I(\cA_{j+1} \cH_{[\sk]\setminus\{j+1\}} Y_{[j]} ; Y_{j+1}^{\tk} | Q^\tk)
		\label{eq:unit111}
		\\
	&\stackrel{\mathclap{m}}{=}  I(\cA_{j+1} \cH_{[\sk]\setminus\{j+1\}} ; Y_{j+1}^{\tk} | Q^\tk)
	\leq  \frac{\log \sd}{\sm},
	\end{align}
where 
	\eqref{eq:unit111} follows from Proposition~\ref{prop:mi_inf}
	and
	the last inequality is from Lemma~\ref{lemm:userinf}.
Thus, the second term of \eqref{eq:twoterms} approaches $0$ as $\sm\to\infty$.
\end{proof}

\section{Conclusion} \label{sec:conclusion}

In this paper, we have studied the capacity of QPIR/QSPIR with $[\sn,\sk]$-MDS coded storage and $\st$ colluding servers.
As general classes of QPIR, we defined stabilizer QPIR and dimension-squared QPIR induced from classical strongly linear PIR.
We have proved that the capacities of stabilizer QPIR/QSPIR and dimension-squared QPIR/QSPIR induced from strongly linear PIR are $2(\sn-\sk-\st+1)/\sn$.
When there is no collusion, \ie $\st =1$, we have proved that the asymptotic capacity of QPIR/QSPIR is $2(\sn-\sk)/\sn$, when the number of files $\sm$ approaches infinity.
These capacities are greater than the known classical counterparts.
For the achievability, we have proposed a capacity-achieving QSPIR scheme.
The proposed scheme combined 
    the star product PIR scheme \cite{freij2017private}
    and the QPIR scheme with the stabilizer formalism \cite{song2020colluding}.

As open problems, we state three directions for extending our results.
The first direction is to find the general capacity of QPIR/QSPIR with MDS coded storage and colluding servers.
This problem in full generality is also unsolved in the classical setting. Partial solutions were given in \cite{Sun2018conjecture,holzbaur2019capacity}, which imply that the combination of collusion and coded storage leads to involved linear dependencies that need to be taken into account for a general converse proof.
Note that as the capacities proved in these works depend on the number of files $\sm$, it is possible that they exceed the asymptotic QPIR capacity proved in this work for a very small number of files.

The second direction is to find non-stabilizer QPIR schemes.
Most of the existing multi-server QPIR schemes are stabilizer QPIR schemes.
Finding non-stabilizer QPIR schemes is the first step towards the achievability part of the general non-asymptotic capacity theorem.

The third direction is to clarify the trade-off between the amount of entanglement and the capacity. 
However, even in the case of only two servers, it is very challenging to derive the capacity with restricted entanglement.
As a related study, the entanglement-assisted classical capacity for a noisy quantum channel \cite{wh2020} has been recently studied with several new techniques.

\bibliographystyle{IEEEtran}
\bibliography{main}

\end{document}